\providecommand{\U}[1]{\protect\rule{.1in}{.1in}}
\newtheorem{theorem}{Theorem}
\newtheorem{corollary}[theorem]{Corollary}
\newtheorem{definition}[theorem]{Definition}
\newtheorem{example}[theorem]{Example}
\newtheorem{proposition}[theorem]{Proposition}
\newtheorem{remark}[theorem]{Remark}
\newenvironment{proof}[1][Proof]{\noindent\textbf{#1.} }{\ \rule{0.5em}{0.5em}}
\begin{document}

\title{Fredholm property and essential spectrum of $3-D$ Dirac operators with regular
and singular potentials }
\author{Vladimir Rabinovich\\Instituto Polit\'{e}cnico Nacional, ESIME Zacatenco, M\'{E}XICO}
\date{}
\maketitle

\begin{abstract}
\textit{\ }We consider the $3-D$ Dirac operator with variable regular magnetic
and electrostatic potentials, and singular potentials
\begin{equation}
\mathfrak{D}_{\boldsymbol{A},\Phi,Q_{\sin}}\boldsymbol{u}(x)=\left(
\mathfrak{D}_{\boldsymbol{A},\Phi}+Q_{\sin}\right)  \boldsymbol{u}%
(x),x\in\mathbb{R}^{3}%
\end{equation}
where
\begin{equation}
\mathfrak{D}_{\boldsymbol{A},\Phi}=%
{\displaystyle\sum\limits_{j=1}^{3}}
\alpha_{j}\left(  i\partial_{x_{j}}+A_{j}(x)\right)  +\alpha_{0}m+\Phi
(x)I_{4},
\end{equation}
$Q_{\sin}=\Gamma(s)\delta_{\Sigma}$ is the singular potential with
$\Gamma(s)=\left(  \Gamma_{ij}(s)\right)  _{i,j=1}^{4}$ being a $4\times4$
matrix and $\delta_{\Sigma}$ is the delta-function with support on a surface
$\Sigma\subset\mathbb{R}^{3}$ which divides $\mathbb{R}^{3}$ on two open
domains $\Omega_{\pm}$ with the common boundary $\Sigma,$ $\boldsymbol{u}$ is
a vector-function on $\mathbb{R}^{3}$ with values in $\mathbb{C}^{4}%
,\alpha_{j},j=0,1,2,3$ are the standard $4\times4$ Dirac matrices. We
associate with the formal Dirac operator $\mathfrak{D}_{\boldsymbol{A}%
,\Phi,Q_{\sin}}$ an unbounded operator $\mathcal{D}_{\boldsymbol{A}%
,\Phi,Q_{\sin}}$ in $L^{2}(\mathbb{R}^{3},\mathbb{C}^{4})$ generated
by $\mathfrak{D}_{\boldsymbol{A},\Phi}$ with domain in $H^{1}(\Omega
_{+},\mathbb{C}^{4})\oplus H^{1}(\Omega_{-},\mathbb{C}^{4})$
consisting of functions satisfying transmission conditions on
$\Sigma.$ We consider the self-adjointness of operator
$\mathcal{D}_{\boldsymbol{A},\Phi,Q_{\sin}}$ for unbounded
$C^{2}-$uniformly
regular surfaces $\Sigma,$ and the essential spectrum of $\mathcal{D}%
_{\boldsymbol{A},\Phi,Q_{\sin}}$ if $\Sigma$ is a $C^{2}$-surfaces with conic
exits to infinity. As application we consider the electrostatic and Lorentz
scalar $\delta_{\Sigma}-$shell interactions on unbounded surfaces $\Sigma.$

Key words: Dirac operators, singular potentials, self-adjointness, essential spectrum.

MSC Classification: 35J10; 47A10; 47A53, 81Q10

\end{abstract}

\section{Introduction}

The Schr\"{o}dinger operators with singular potentials supported on surfaces
have attracted a lot of attention: for instance they are used for a
description of quantum particles interacting with charged surfaces, in
approximations of Hamiltonians of the propagation of electrons through thin
barriers. \ The mathematical problems connected with formal Schr\"{o}dinger
operators with singular potentials
\[
\mathbb{S}=-\Delta+a\delta_{\Sigma}%
\]
where $\Sigma$ is the set of Lebesgue measure $0$ in $\mathbb{R}^{n}$ include
the realization of the formal Schr\"{o}dinger operator $\mathbb{S}$ as an
unbounded operator $\mathcal{S}$ in the Hilbert space $L^{2}(\mathbb{R}^{n})$
and the study of the spectral properties of $\mathcal{S}$. \

Over the past two decades, this topic has been intensively studied and there
is extensive literature devoted to this problem \cite{ALB}, \cite{AlbKurasov},
\cite{BEHL}, \cite{BEHL1}, \cite{BusStolz}, \cite{BEL}, \cite{BEL1},
\cite{BEL3}, \cite{BEKS94}, \cite{BusStolz}, \cite{BF}, \cite{Ra2018}).

The investigation of the $3D-$Dirac operators with singular potentials
supported on compact closed surfaces in $\mathbb{R}^{3}$ was initiated only
recently in the pioneering paper \cite{A-V}, where a new approach to extension
theory of symmetric operators was employed. This research was continued in the
papers:\cite{Ourm}, \cite{OurmieresVega}, \cite{Pan}. \ A different approach
using the abstract theory of quasi-boundary triples and their Weyl functions
was proposed in \cite{BEHL} \cite{BEHL1}. \textit{\ }

In contrast to the indicated papers, we consider singular potentials with
supports on both bounded and unbounded surfaces in $\mathbb{R}^{3}.$ Our
approach to the self-adjointness of Dirac operators is based on the study of
transmission problems with parameter associated with the Dirac operators. We
introduce the Lopatinsky-Shapiro conditions for their invertibility for large
values of the parameter, and for the a priori estimates of solutions of
associated transmission problems.

Moreover we study the Fredholm properties and the essential spectrum of
transmission problems associated with the Dirac operators with singular
potentials with supports on compact surfaces and non-compact surfaces with
conical exits to infinity. For this aim we apply the local principle
\cite{Rab1972}, \cite{Sim1967} and the limit operators method, see for
instance \cite{RRS},\cite{RRS1},\cite{Ra1},\cite{Ra2}.

We consider in the paper the Dirac operators
\begin{equation}
\mathfrak{D}_{\boldsymbol{A},\Phi,Q_{\sin}}\boldsymbol{u}(x)=\left(
\mathfrak{D}_{\boldsymbol{A},\Phi}+Q_{\sin}\right)  \boldsymbol{u}%
(x),x\in\mathbb{R}^{3}, \label{0.1}%
\end{equation}
where $\boldsymbol{u}$ is a vector-function on $\mathbb{R}^{3}$ with values in
$\mathbb{C}^{4},$
\begin{equation}
\mathfrak{D}_{\boldsymbol{A},\Phi}\boldsymbol{u}(x)=\left(
{\displaystyle\sum\limits_{j=1}^{3}}
\alpha_{j}\left(  i\partial_{x_{j}}+A_{j}(x)\right)  +\alpha_{0}%
m+\Phi(x)\right)  \boldsymbol{u}(x),x\in\mathbb{R}^{3} \label{0.2}%
\end{equation}
is the $3D-$Dirac operator with variable regular magnetic and electrostatic
potentials, $\alpha_{j},j=0,1,2,3$ are the $4\times4$ Dirac matrices
satisfying the relations
\[
\alpha_{j}\alpha_{k}+\alpha_{k}\alpha_{j}=2\delta_{jk}I\,_{4}\,(j,k=0,1,2,3),
\]
$I_{4}$ is the $4\times4$ unit matrix, $\boldsymbol{A}=(A_{1},A_{2},A_{3})$ is
the variable vector-valued potential of the magnetic field $\boldsymbol{H},$
that is $\boldsymbol{H}=\boldsymbol{\nabla}\times\boldsymbol{A},$ $\Phi$ is
the variable electrostatic potential of the electric field$\mathbb{\ }%
\boldsymbol{E},$ that is $\ \boldsymbol{E}=\boldsymbol{\nabla}\Phi,$ $m$ is
the mass of electron. We use the system of coordinates for which the Planck
constant $\mathfrak{h}=1,$ the light speed $c=1,$ and the charge of electron
$e=$ $1.$ We assume that the functions $A_{j},j=1,2,3,$ and $\Phi$ belong to
$L^{\infty}(\mathbb{R}^{3}).$ The singular potential $Q_{\sin}$ in (\ref{0.1})
is $Q_{\sin}=\Gamma\delta_{\Sigma}$ where $\Gamma=\left(  \Gamma_{ij}\right)
_{i,j=1}^{4}$ is a $4\times4$ matrix-valued function with $\Gamma_{ij}$
belonging to the class $C_{b}^{1}(\Sigma)$ of continuous bounded with the
first derivatives functions on $\Sigma$ and $\delta_{\Sigma}$ is the
delta-function with support on $C^{2}-$surface $\Sigma\subset\mathbb{R}^{3}$
which divides $\mathbb{R}^{3}$ on two open domains $\Omega_{\pm}$ with the
common boundary $\Sigma.$ We assume that either $\Sigma$ is a $C^{2}%
-$connected compact surface or $\Sigma$ is a $C^{2}-$unbounded connected
surface regular at infinity.

Let $H^{1}(\mathbb{R}^{3},\mathbb{C}^{4})$ be the Sobolev space of
$\ 4-$dimensional vector-valued functions $\boldsymbol{u}$ on $\mathbb{R}^{3}%
$. We denote by $H^{1}(\Omega_{\pm},\mathbb{C}^{4})$ the spaces of
restrictions on $\Omega_{\pm}$ functions in $H^{1}(\mathbb{R}^{3}%
,\mathbb{C}^{4})\ $and $H^{1}(\mathbb{R}^{3}\diagdown\Sigma,\mathbb{C}%
^{4})=H^{1}(\Omega_{+},\mathbb{C}^{4})\oplus H^{1}(\Omega_{-},\mathbb{C}%
^{4}).$ We associate with the formal Dirac operator $\mathfrak{D}%
_{\boldsymbol{A},\Phi,Q_{\sin}}$ an unbounded in $L^{2}(\mathbb{R}%
^{3},\mathbb{C}^{4})$ operator $\mathcal{D}_{\boldsymbol{A},\Phi,a_{+},a_{-}}$
defined by the Dirac operator $\mathfrak{D}_{\boldsymbol{A},\Phi}$ with domain%
\begin{align}
dom\mathcal{D}_{\boldsymbol{A},\Phi,a_{+},a_{-}}  &  =H_{a_{+},a_{-}}%
^{1}(\mathbb{R}^{3}\mathbb{\diagdown}\Sigma,\mathbb{C}^{4}\mathbb{)}%
\label{0.4}\\
&  \mathbb{=}\left\{  \boldsymbol{u}\in H^{1}(\mathbb{R}^{3}\mathbb{\diagdown
}\Sigma,\mathbb{C}^{4}\mathbb{)}:a_{+}\mathcal{(}s)\boldsymbol{u}_{+}%
(s)+a_{-}\mathcal{(}s)\boldsymbol{u}_{-}(s)=0,s\in\Sigma\right\}  ,\nonumber
\end{align}
where $\boldsymbol{u}_{\pm}=\gamma_{\Sigma}^{\pm}\boldsymbol{u}$ and
$\gamma_{\Sigma}^{\pm}:H^{1}(\Omega_{\pm},\mathbb{C}^{4})\rightarrow
H^{1/2}(\Sigma,\mathbb{C}^{4})$ are the trace operators, $a_{\pm}%
\mathcal{(}s)$ are $4\times4$ matrices defined as
\begin{equation}
a_{+}\mathcal{(}s)=\frac{1}{2}\Gamma(s)-i\boldsymbol{\alpha}\cdot
\boldsymbol{\nu}(s),a_{-}\mathcal{(}s)=\frac{1}{2}\Gamma
(s)+i\boldsymbol{\alpha}\cdot\boldsymbol{\nu}(s) \label{0.5}%
\end{equation}
where $\boldsymbol{\alpha}\cdot\boldsymbol{\nu}(s)=\sum_{j=1}^{3}\alpha_{j}%
\nu_{j}(s),$ and $\boldsymbol{\nu}(s)=\left(  \nu_{1}(s),\nu_{2}(s),\nu
_{3}(s)\right)  ,s\in\Sigma$ is the normal vector to $\Sigma$ directed into
$\Omega_{-}.$

We associate also with the formal Dirac operator $\mathfrak{D}_{\boldsymbol{A}%
,\Phi,Q_{s}}$ the operator $\mathbb{D}_{\boldsymbol{A},\Phi,a_{+},a_{-}}$ of
the transmission problem
\begin{equation}
\mathbb{D}_{\boldsymbol{A},\Phi,a_{+},a_{-}}\boldsymbol{u}(x)=\left\{
\begin{array}
[c]{c}%
\mathfrak{D}_{\boldsymbol{A},\Phi}\boldsymbol{u}(x),x\in\mathbb{R}%
^{3}\diagdown\Sigma\\
a_{+}\mathcal{(}s)\boldsymbol{u}_{+}(s)+a_{-}\mathcal{(}s)\boldsymbol{u}%
_{-}(s)=\boldsymbol{0},s\in\Sigma
\end{array}
\right.  \label{0.6}%
\end{equation}
acting from $H^{1}(\mathbb{R}^{3}\mathbb{\diagdown}\Sigma,\mathbb{C}%
^{4}\mathbb{)}$ into $L^{2}(\mathbb{R}^{3},\mathbb{C}^{4}).$

We study in the paper the self-adjointness in $L^{2}(\mathbb{R}^{3}%
,\mathbb{C}^{4})$ of unbounded operators $\mathcal{D}_{\boldsymbol{A}%
,\Phi,a_{+},a_{-}},$ the Fredholm properties of the operators $\mathbb{D}%
_{\boldsymbol{A},\Phi,a_{+},a_{-}}:H^{1}(\mathbb{R}^{3}\mathbb{\diagdown
}\Sigma,\mathbb{C}^{4}\mathbb{)}\rightarrow L^{2}(\mathbb{R}^{3}%
,\mathbb{C}^{4}),$ and the essential spectra of operators $\mathcal{D}%
_{\boldsymbol{A},\Phi,a_{+},a_{-}}.$

The paper is organized as follows. In Sec.2 we introduce the necessary
notations and definitions. In Sec.3 we describe the realization of formal
Dirac operators with singular potentials as unbounded operators $\mathcal{D}%
_{\boldsymbol{A},\Phi,a_{+},a_{-}}$in $L^{2}(\mathbb{R}^{3},\mathbb{C}^{4})$
and also as bounded operators $\mathfrak{D}_{\boldsymbol{A},\Phi,a_{+},a_{-}}%
$of transmission problems (\ref{0.6}), and we study the conditions of
self-adjointness of $\mathcal{D}_{\boldsymbol{A},\Phi,a_{+},a_{-}}$. Our
approach is closed to the classical approach to proof of self-adjointness of
realizations of boundary value problems\ in Hilbert space (see for instance
\cite{Agran1}, \cite{LionsMagenes}). We introduced an analogue of the
Lopatinsky-Shapiro condition on $\Sigma$ for the operator $\mathfrak{D}%
_{\boldsymbol{A},\Phi,a_{+},a_{-}},$ which yields the a priori estimate
\begin{equation}
\left\Vert u\right\Vert _{H^{1}(\mathbb{R}^{3}\mathbb{\diagdown}%
\Sigma,\mathbb{C}^{4}\mathbb{)}}\leq C\left(  \left\Vert \mathfrak{D}%
_{\boldsymbol{A},\Phi}u\right\Vert _{L^{2}(\mathbb{R}^{3},\mathbb{C}^{4}%
)}+\left\Vert u\right\Vert _{L^{2}(\mathbb{R}^{3},\mathbb{C}^{4})}\right)
\label{0.7}%
\end{equation}
for compact $C^{2}-$surfaces and noncompact $C^{2}-$surfaces with regular
boundary. \ It should be noted that the estimate (\ref{0.7}) yields the
closedness of $\mathcal{D}_{\boldsymbol{A},\Phi,a_{+},a_{-}}$ in
$L^{2}(\mathbb{R}^{n},\mathbb{C}^{4}).$ We also consider the parameter
dependent operator
\begin{equation}
\mathbb{D}_{\boldsymbol{A},\Phi,a_{+},a_{-}}(\mu)=\left\{
\begin{array}
[c]{c}%
\left(  \mathfrak{D}_{\boldsymbol{A},\Phi}-i\mu I_{4}\right)  \boldsymbol{u}%
(x),x\in\mathbb{R}^{3}\diagdown\Sigma,\mu\in\mathbb{R}\\
a_{+}\mathcal{(}s)\boldsymbol{u}_{+}(s)+a_{-}\mathcal{(}s)\boldsymbol{u}%
_{-}(s)=\boldsymbol{0},s\in\Sigma
\end{array}
\right.  , \label{0.8}%
\end{equation}
and give the uniform Lopatinsky-Shapiro conditions of the invertibility of
\[
\mathbb{D}_{\boldsymbol{A},\Phi,a_{+},a_{-}}(\mu):H^{1}(\mathbb{R}%
^{3}\mathbb{\diagdown}\Sigma,\mathbb{C}^{4}\mathbb{)}\rightarrow
L^{2}(\mathbb{R}^{3},\mathbb{C}^{4})
\]
for $\left\vert \mu\right\vert $ large enough. \ This conditions are applied
for the study of self-adjointness of $\mathcal{D}_{\boldsymbol{A},\Phi
,a_{+},a_{-}}.$

As the example we consider the operator $\mathcal{D}_{\boldsymbol{A}%
,\Phi,a_{+},a_{-}}$ associated with the singular potential describing the
electrostatic and Lorentz shall interactions. The uniform Lopatinsky-Shapiro
condition for $\mathcal{D}_{\boldsymbol{A},\Phi,a_{+},a_{-}}$ is satisfied if
\begin{equation}
\inf_{s\in\Sigma}\left\vert \eta^{2}(s)-\tau^{2}(s)-4\right\vert >0.
\label{0.10}%
\end{equation}

Note that the unbounded operator $\mathcal{D}_{\boldsymbol{A},\Phi,a_{+}%
,a_{-}}$ associated with the potential (\ref{0.9}) has been considered earlier
in the paper \cite{BEHL1} for compact closed surface and real constants
$\eta,\tau.$ In this paper the authors proved that the condition $\eta
^{2}-\tau^{2}\neq4$ is sufficient for self-adjointness of $\mathcal{D}%
_{\boldsymbol{0},0,a_{+},a_{-}}.$ They also considered the degenerated case
$\eta^{2}-\tau^{2}=4.$

In Sec. 4 we study the essential spectrum of the operator $\mathcal{D}%
_{\boldsymbol{A},\Phi,a_{+},a_{-}}$ for the cases: $\Sigma$ is either a
$C^{2}-$compact surface or a $C^{2}-$surface with conic structure at infinity.
This problem is closely connected with the Fredholm properties of the operator
$\mathbb{D}_{\boldsymbol{A},\Phi,a_{+},a_{-}}$of transmission problem acting
from $H^{1}(\mathbb{R}^{3}\mathbb{\diagdown}\Sigma,\mathbb{C}^{4}%
\mathbb{)}\rightarrow L^{2}(\mathbb{R}^{3},\mathbb{C}^{4}).$ Our approach to
the investigation of Fredholmness is based on the local principle see (see for
instance \cite{Rab1972}, \cite{Sim1967}) and the limit operators method (see
for instance \cite{Ra2017}, \cite{Ra2018}, \cite{Ra2}).

As the corollary, we obtain the descriptions of the essential spectra of
unbounded operators $\mathcal{D}_{\boldsymbol{A},\Phi,a_{+},a_{-}}\ $\ as the
union of spectra of all limit operators. It should be noted that the
conditions of the Fredholmness of $\mathbb{D}_{\boldsymbol{A},\Phi,a_{+}%
,a_{-}}$ and the location of the essential spectrum of $\mathcal{D}%
_{\boldsymbol{A},\Phi,a_{+},a_{-}}$ substantial depend on the behavior at
infinity of the regular potentials $\boldsymbol{A,}\Phi,$ the surface
$\Sigma,$ and the matrix $\Gamma.$

\section{Notations}

\begin{itemize}
\item If $X,Y$ are Banach spaces then we denote by $\mathcal{B(}X,Y)$ the
space of bounded linear operators acting from $X$ into $Y$ with the uniform
operator topology, and by $\mathcal{K}(X,Y)$ the subspace of $\mathcal{B(}%
X,Y)$ of all compact operators. In the case $X=Y$ we write shortly
$\mathcal{B}(X)$ and $\mathcal{K}(X).$

\item An operator $A\in\mathcal{B(}X,Y)$ is called a Fredholm operator if
$\ \ \emph{ker}A=\left\{  x\in X:Ax=0\right\}  $, and $\emph{coker}%
A\emph{=}Y/\operatorname{Im}A$ $\ $ are finite dimensional spaces. Let
$\mathcal{A}$ be a closed unbounded operator in a Hilbert space $\mathcal{H}$
with a dense in $\mathcal{H}$ domain $dom\mathcal{A}.$ Then $\mathcal{A}$ is
called a Fredholm operator if $\emph{ker}\mathcal{A}=\left\{  x\in
dom\mathcal{A}:\mathcal{A}x=0\right\}  $ and $\emph{coker}A=\mathcal{H}%
/\operatorname{Im}\mathcal{A}$ where $\operatorname{Im}\mathcal{A}%
\mathcal{=}\left\{  y\in\mathcal{H}:y=\mathcal{A}x,x\in\mathcal{D}%
_{\mathcal{A}}\right\}  $ are finite-dimensional spaces. Note that
$\mathcal{A}$ is a Fredholm operator as unbounded operator in $\mathcal{H}$ if
and only if $\mathcal{A}:D_{\mathcal{A}}\rightarrow\mathcal{H}$ is a Fredholm
operator as a bounded operator where $dom\mathcal{A}$ is equipped by the graph
norm
\[
\left\Vert u\right\Vert _{dom\mathcal{A}}=\left(  \left\Vert u\right\Vert
_{\mathcal{H}}^{2}+\left\Vert \mathcal{A}u\right\Vert _{\mathcal{H}}%
^{2}\right)  ^{1/2},u\in dom\mathcal{A}%
\]
(see for instance \cite{Agran1}).

\item The essential spectrum $sp_{ess}\mathcal{A}$ of an unbounded operator
$\mathcal{A}$ is a set of $\lambda\in\mathbb{C}$ such that $\mathcal{A}%
-\lambda I$ is not Fredholm operator as unbounded operator, and the discrete
spectrum $sp_{dis}\mathcal{A}$ of $\mathcal{A}$ is a set of isolated
eigenvalues of finite multiplicity. It is well known that if $\mathcal{A}$ is
a self-adjoint operator then $\ sp_{dis}\mathcal{A}\mathfrak{=}sp\mathcal{A}%
\mathfrak{\diagdown}sp_{ess}\mathcal{A}\mathfrak{.}$

\item We denote by $L^{2}(\mathbb{R}^{3},\mathbb{C}^{4})$ the Hilbert space of
$4-$dimensional vector-functions $\boldsymbol{u}(x)=(u^{1}(x),u^{2}%
(x),u^{3}(x),u^{4}(x)),x\in\mathbb{R}^{3}$ with the scalar product
\[
\left\langle u,v\right\rangle =\int_{\mathbb{R}^{3}}\boldsymbol{u}%
(x)\cdot\boldsymbol{v}(x)dx
\]
where $\boldsymbol{u\cdot v=}\sum_{j=1}^{4}u_{j}\bar{v}_{j}.$

\item We denote by $H^{s}(\mathbb{R}^{3},\mathbb{C}^{4})$ the Sobolev space on
$\mathbb{R}^{3\text{ }}$ of $4-$dimensional vector-valued functions, that is
the space of distributions $\boldsymbol{u}\in\mathcal{D}^{\prime}%
(\mathbb{R}^{3},\mathbb{C}^{4})$ such that
\[
\left\Vert \boldsymbol{u}\right\Vert _{H^{s}(\mathbb{R}^{3},\mathbb{C}^{4}%
)}=\left(  \int_{\mathbb{R}^{3}}(1+\left\vert \xi\right\vert ^{2}%
)^{s}\left\Vert \boldsymbol{\hat{u}}(\xi)\right\Vert _{\mathbb{C}^{4}}^{2}%
d\xi\right)  ^{1/2}<\infty,s\in\mathbb{R}%
\]
where $\boldsymbol{\hat{u}}$ is the Fourier transform of $\boldsymbol{u}.$ If
$\Omega$ $\ $is a domain in $\mathbb{R}^{3}$ then $H^{s}(\Omega,\mathbb{C}%
^{4})$ is the space of restrictions of $\boldsymbol{u}\in H^{s}(\mathbb{R}%
^{3},\mathbb{C}^{4})$ on $\Omega$ with the norm
\[
\left\Vert \boldsymbol{u}\right\Vert _{H^{s}(\Omega,\mathbb{C}^{4})}%
=\inf_{l\boldsymbol{u}\in H^{s}(\mathbb{R}^{3},\mathbb{C}^{4})}\left\Vert
l\boldsymbol{u}\right\Vert _{H^{s}(\mathbb{R}^{3},\mathbb{C}^{4})}%
\]
where $l\boldsymbol{u}$ is an extension of $\boldsymbol{u}$ on $\mathbb{R}%
^{3}.$

\item We denote by $C_{b}(\mathbb{R}^{3}\mathbb{)},C_{b}(\Sigma)$ the classes
of bounded continuous functions on $\mathbb{R}^{3},\Sigma,$ and by $C_{b}%
^{m}(\mathbb{R}^{3}\mathbb{)},C_{b}^{m}(\Sigma)$ the classes of functions $a$
on $\mathbb{R}^{3}$ such that $\partial^{\alpha}a\in C_{b}(\mathbb{R}%
^{3}\mathbb{)},C_{b}(\Sigma)$ for all multiindices $\alpha:\left\vert
\alpha\right\vert \leq m.$

\item Let $\Omega_{\pm}$ be domains in $\mathbb{R}^{3}$ and $\Sigma$ \ be a
$C^{2}-$ surface being their common boundary. We say that $\Sigma$ is
\textit{uniformly regular} if : (i) There exists $r>0$ such that for every
point $x_{0}\in$ $S$ there exists a ball $B_{r}(x_{0})=\left\{  x\in
\mathbb{R}^{3}:\left\vert x-x_{0}\right\vert <r\right\}  $ and the
diffeomorphism $\varphi_{x_{0}}:B_{r}(x_{0})\rightarrow B_{1}(0)$ such that
\begin{align*}
\varphi_{x_{0}}\left(  B_{r}(x_{0})\cap\Omega_{\pm}\right)   &  =B_{1}%
(0)\cap\mathbb{R}_{\pm}^{3},\mathbb{R}_{\pm}^{3}=\left\{  y=(y^{\prime}%
,y_{3})\in\mathbb{R}^{3}:y_{3}\gtrless0\right\}  ,\\
\varphi_{x_{0}}\left(  B_{r}(x_{0})\cap\Sigma\right)   &  =B_{1}%
(0)\cap\mathbb{R}_{y^{\prime}}^{2},\mathbb{R}_{y^{\prime}}^{2}=\left\{
y=(y^{\prime},y_{3})\in\mathbb{R}^{3}:y_{3}=0\right\}  ;
\end{align*}
\ \ \ \ (ii) Let $\varphi_{x_{0}}^{i},i=1,2,3$ be the coordinate functions of
the mappings $\varphi.$ Then%
\[
\sup_{x_{0}\in S}\sup_{\left\vert \alpha\right\vert \leq2,x\in B_{r}(x_{0}%
)}\left\vert \partial^{\alpha}\varphi_{x_{0}}^{i}(x)\right\vert <\infty
,i=1,2,3.
\]

Note that compact closed $C^{2}-$surfaces are uniformly regular automatically.

\item $\bigskip$Let
\[
\mathfrak{D}_{0}=\boldsymbol{\alpha}\cdot\boldsymbol{D}_{x}+\alpha_{0}m=%
{\displaystyle\sum\limits_{j=1}^{3}}
\alpha_{j}D_{x_{j}}+\alpha_{0}m,D_{x_{j}}=i\partial_{x_{j}}%
\]
be the free Dirac operator (see for instance \cite{Bjorken}, \cite{Greiner},
\cite{Thaller}, \cite{Bogolubov}) where $\boldsymbol{\alpha}=\left(
\alpha_{1},\alpha_{2},\alpha_{3}\right)  ,$ $\alpha_{j},j=0,1,2,3$ are the
$4\times4$ Dirac matrices
\begin{equation}
\alpha_{0}=\left(
\begin{array}
[c]{cc}%
I_{2} & 0\\
0 & -I_{2}%
\end{array}
\right)  ,\alpha_{j}=\left(
\begin{array}
[c]{cc}%
0 & \sigma_{j}\\
\sigma_{j} & 0
\end{array}
\right)  ,j=1,2,3, \label{1.1}%
\end{equation}%
\begin{equation}
\sigma_{1}=\left(
\begin{array}
[c]{cc}%
0 & 1\\
1 & 0
\end{array}
\right)  ,\sigma_{2}=\left(
\begin{array}
[c]{cc}%
0 & -i\\
i & 0
\end{array}
\right)  ,\sigma_{3}=\left(
\begin{array}
[c]{cc}%
1 & 0\\
0 & -1
\end{array}
\right)  \label{1.2}%
\end{equation}
are the $2\times2$ Pauli matrices satisfying the relations
\begin{equation}
\sigma_{j}\sigma_{k}+\sigma_{k}\sigma_{j}=2\delta_{jk}I_{2},j,k=1,2,3.
\label{1.3}%
\end{equation}
Relations (\ref{1.3}) yield that
\begin{equation}
\alpha_{j}\alpha_{k}+\alpha_{k}\alpha_{j}=2\delta_{jk}I_{4},j,k=0,1,2,3,
\label{1.4}%
\end{equation}
where $I_{n}$ is the $n\times n$ unit matrix. Equality (\ref{1.4}) implies
that
\[
\left(  \boldsymbol{\alpha}\cdot\boldsymbol{D}_{x}\right)  ^{2}=-\Delta I_{4}%
\]
where $\Delta$ is $3D-$Laplacian. Moreover
\[
\mathfrak{D}_{0}^{2}=\left(  -\Delta+m^{2}\right)  I_{4}.
\]
It is well-known that the unbounded operator $\mathfrak{D}_{0}$ with domain
$H^{1}(\mathbb{R}^{3},\mathbb{C}^{4})$ is self-adjoint in $L^{2}%
(\mathbb{R}^{3},\mathbb{C}^{4})$ (see for instance \cite{Thaller}), and
\[
sp\mathfrak{D}_{0}=sp_{ess}\mathfrak{D}_{0}=\left(  -\infty,-\left\vert
m\right\vert \right]
{\displaystyle\bigcup}
\left[  \left\vert m\right\vert ,+\infty\right)  .
\]

\end{itemize}

\section{Realization of Dirac operators with singular potentials as unbounded
operators in $L^{2}(\mathbb{R}^{3},\mathbb{C}^{4})$}

Let $\mathfrak{D}_{\boldsymbol{A},\Phi,Q_{\sin}}=\mathfrak{D}_{\boldsymbol{A}%
,\Phi}+Q_{\sin}$ be the Dirac operator given by formulas (\ref{0.1}%
),(\ref{0.2}). We define the product $Q_{\sin}\boldsymbol{u}$ where $Q_{\sin
}=\Gamma\delta_{\Sigma}$ and $\boldsymbol{u}$ $\in H^{1}(\mathbb{R}%
^{3}\diagdown\Sigma,\mathbb{C}^{4})$ as a distribution in $\mathcal{D}%
^{\prime}(\mathbb{R}^{3},\mathbb{C}^{4})=\mathcal{D}^{\prime}(\mathbb{R}%
^{3})\otimes\mathbb{C}^{4}$ acting on the test functions $\boldsymbol{\varphi
}\in C_{0}^{\infty}(\mathbb{R}^{3},\mathbb{C}^{4})$ as (see \cite{Kur})
\begin{equation}
\left(  Q_{\sin}\boldsymbol{u}\right)  (\boldsymbol{\varphi})=\frac{1}{2}%
\int_{\Sigma}\Gamma(s)\left(  \boldsymbol{u}_{+}(s)+\boldsymbol{u}%
_{-}(s)\right)  \cdot\boldsymbol{\varphi}(s)ds. \label{2.1}%
\end{equation}

Let \ $\boldsymbol{u=}\left(  \boldsymbol{u}_{+},\boldsymbol{u}_{-}\right)
\in H^{1}(\Omega_{+},\mathbb{C}^{4})\oplus H^{1}(\Omega_{-},\mathbb{C}^{4})$,
and $\mathfrak{D}_{\boldsymbol{A},\Phi}\boldsymbol{u}=\left(  \mathfrak{D}%
_{\boldsymbol{A},\Phi}\boldsymbol{u}_{+},\mathfrak{D}_{\boldsymbol{A},\Phi
}\boldsymbol{u}_{-}\right)  \in L^{2}(\mathbb{R}^{3},\mathbb{C}^{4}).$ Then
Integrating by parts we obtain
\begin{align}
\left\langle \mathfrak{D}_{\boldsymbol{A},\Phi,Q_{s}}u,\varphi\right\rangle
_{L^{2}(\mathbb{R}^{3},\mathbb{C}^{4})}  &  =\int_{\Omega_{+}\cup\Omega_{-}%
}\mathfrak{D}_{\boldsymbol{A},\Phi}\boldsymbol{u}(x)\cdot\boldsymbol{\varphi
}(x)dx\label{2.2}\\
&  -\int_{\Sigma}i\boldsymbol{\alpha}\cdot\boldsymbol{\nu}(s)\left(
\boldsymbol{u}_{+}(s)-\boldsymbol{u}_{-}(s)\right)  \cdot\boldsymbol{\varphi
}(s)ds\nonumber\\
&  +\frac{1}{2}\int_{\Sigma}\Gamma(s)\left(  \boldsymbol{u}_{+}%
(s)+\boldsymbol{u}_{-}(s)\right)  \cdot\boldsymbol{\varphi}%
(s)ds,\boldsymbol{\varphi}\in C_{0}^{\infty}(\mathbb{R}^{3},\mathbb{C}%
^{4}).\nonumber
\end{align}
\ Formula (\ref{2.2}) \ yields that in the sense of distributions in
$\mathcal{D}^{\prime}(\mathbb{R}^{3},\mathbb{C}^{4})$
\begin{align}
\mathfrak{D}_{\boldsymbol{A},\Phi,Q_{\sin}}\boldsymbol{u}  &  =\label{2.2'}\\
&  \mathfrak{D}_{\boldsymbol{A},\Phi}\boldsymbol{u}-\left[
i\boldsymbol{\alpha}\cdot\boldsymbol{\nu}\left(  \boldsymbol{u}_{+}%
-\boldsymbol{u}_{-}\right)  -\frac{1}{2}\Gamma\left(  \boldsymbol{u}%
_{+}+\boldsymbol{u}_{-}\right)  \right]  \delta_{\Sigma},\nonumber
\end{align}
Hence $\mathfrak{D}_{\boldsymbol{A},\Phi,Q_{\sin}}\boldsymbol{u}\in
L^{2}(\mathbb{R}^{3},\mathbb{C}^{4})$ if and only if
\begin{equation}
-i\boldsymbol{\alpha}\cdot\boldsymbol{\nu}(s)\left(  \boldsymbol{u}%
_{+}(s)-\boldsymbol{u}_{-}(s)\right)  +\frac{1}{2}\Gamma(s)\left(
\boldsymbol{u}_{+}(s)+\boldsymbol{u}_{-}(s)\right)  =0 \label{2.3}%
\end{equation}
for all $s\in\Sigma.$ Condition (\ref{2.3}) can be written of the form
\begin{equation}
a_{+}\mathcal{(}s)\boldsymbol{u}_{+}(s)+a_{-}\mathcal{(}s)\boldsymbol{u}%
_{-}(s)=\boldsymbol{0,}s\in\Sigma\label{e2.6}%
\end{equation}
where $a_{\pm}\mathcal{(}s)$ are $4\times4$ matrices:
\begin{equation}
a_{+}\mathcal{(}s)=\frac{1}{2}\Gamma(s)-i\boldsymbol{\alpha}\cdot
\boldsymbol{\nu}(s),a_{-}\mathcal{(}s)=\frac{1}{2}\Gamma
(s)+i\boldsymbol{\alpha}\cdot\boldsymbol{\nu}(s). \label{2.7}%
\end{equation}

We associate with the formal Dirac operator $\mathfrak{D}_{\boldsymbol{A}%
,\Phi,Q_{\sin}}$ the unbounded in $L^{2}(\mathbb{R}^{3},\mathbb{C}^{4})$
operator $\mathcal{D}_{\boldsymbol{A},\Phi,a_{+},a_{-}}$ given by the regular
Dirac operator $\mathfrak{D}_{\boldsymbol{A},\Phi}$ with the domain%
\begin{align}
dom\mathcal{D}_{\boldsymbol{A},\Phi,,a_{+},a_{-}}  &  =H_{a_{+},a_{-}}%
^{1}(\mathbb{R}^{3}\mathbb{\diagdown}\Sigma,\mathbb{C}^{4}\mathbb{)}%
\label{e2.8}\\
&  \mathbb{=}\left\{  u\in H^{1}(\mathbb{R}^{3}\mathbb{\diagdown}%
\Sigma,\mathbb{C}^{4}\mathbb{)}:a_{+}\mathcal{(}s)\boldsymbol{u}_{+}%
(s)+a_{-}\mathcal{(}s)\boldsymbol{u}_{-}(s)=\boldsymbol{0},s\in\Sigma\right\}
.\nonumber
\end{align}

We associate also with the formal Dirac operator $\mathfrak{D}_{\boldsymbol{A}%
,\Phi,Q_{\sin}}$ the bounded operator of the transmission problem
\begin{equation}
\mathbb{D}_{\boldsymbol{A},\Phi,a_{+},a_{-}}\boldsymbol{u}(x)=\left\{
\begin{array}
[c]{c}%
\mathfrak{D}_{\boldsymbol{A},\Phi}\boldsymbol{u}(x),x\in\mathbb{R}%
^{3}\diagdown\Sigma\\
a_{+}\mathcal{(}s)\boldsymbol{u}_{+}(s)+a_{-}\mathcal{(}s)\boldsymbol{u}%
_{-}(s)=\boldsymbol{0},s\in\Sigma
\end{array}
\right.  \label{e2-9}%
\end{equation}
acting from $H^{1}(\mathbb{R}^{3}\mathbb{\diagdown}\Sigma,\mathbb{C}%
^{4}\mathbb{)}$ into $L^{2}(\mathbb{R}^{3},\mathbb{C}^{4}).$

\subsection{ Lopatinsky-Shapiro conditions}

We consider the Lopatisky condition for the transmission operator
$\mathbb{D}_{\boldsymbol{A},\Phi,a_{+},a_{-}}$ at the points $x\in\Sigma$
which provides the local a priory estimates on $\Sigma$ for the operator
$\mathbb{D}_{\boldsymbol{A},\Phi,a_{+},a_{-}}$ This condition is an analogue
of the Lopatinsky-Shapiro condition for elliptic boundary value problems (see
for instance \cite{Agran1},\cite{LionsMagenes}).

Invariance of the Dirac operator with respect to the orthogonal
transformations of the coordinate systems in $\mathbb{R}^{3}$ allows us to
consider $\mathfrak{D}_{\boldsymbol{A},\Phi,a_{+},a_{-}}$ in the local system
of coordinates $y=(y_{1},y_{2},y_{3})$ where the axis $y_{1},y_{2}$ belong to
tangent plane to $\Sigma$ at the point $x_{0}$ and the axis $y_{3}=z$ is
directed along the normal vector $\boldsymbol{\nu}$ \ to $\Sigma$ at the point
$x_{0}\in\Sigma.$

After to passing to the local system of coordinates and taking into account
the main part of the operator $\mathfrak{D}_{\boldsymbol{A},\Phi}$ we obtain
the operator $\mathbb{D}_{a_{+}(x_{0}),a_{-}(x_{0})}^{0}$ of the transmission
problem for the half-spaces $\mathbb{R}_{\pm}^{3}=\left\{  y=(y^{\prime}%
,y_{3})\in\mathbb{R}^{3}:y_{3}\gtrless0\right\}  $
\begin{align}
&  \mathbb{D}_{a_{+}(x_{0}),a_{-}(x_{0})}^{0}\boldsymbol{u(}y)\label{3.0}\\
&  =\left\{
\begin{array}
[c]{c}%
\left(  \alpha_{1}D_{y_{1}}+\alpha_{2}D_{y_{2}}+\alpha_{3}D_{y_{3}}\right)
\boldsymbol{u}(y)=\boldsymbol{0},y\in\mathbb{R}_{+}^{3}\cup\mathbb{R}_{-}%
^{3}\\
a_{+}(x_{0})\boldsymbol{u}_{+}(y^{\prime},0)+a_{-}(x_{0})\boldsymbol{u}%
_{-}(y^{\prime},0)=0,y^{\prime}\in\mathbb{R}_{y^{\prime}}^{2}%
\end{array}
\right.  ,\nonumber
\end{align}
acting from $H^{1}(\mathbb{R}^{3}\diagdown\mathbb{R}_{y^{\prime}}%
^{2},\mathbb{C}^{4})$ into $L^{2}(\mathbb{R}^{3},\mathbb{C}^{4}).$ After the
Fourier transform with respect to $y^{\prime}\in\mathbb{R}^{2}$ we obtain the
family of \ the $1-$dimensional transmission problems depending on the
parameter $\xi^{\prime}=\left(  \xi_{1},\xi_{2}\right)  \in\mathbb{R}^{2}$
\begin{align}
&  \mathbb{\hat{D}}_{a_{+}(x_{0}),a_{-}(x_{0})}^{0}(\xi^{\prime}%
)\boldsymbol{\psi(}\xi^{\prime},z)\label{3.1}\\
&  =\left\{
\begin{array}
[c]{c}%
\left(  \alpha_{1}\xi_{1}+\alpha_{2}\xi_{2}+i\alpha_{3}\frac{d}{dz}\right)
\boldsymbol{\psi}(z)=\boldsymbol{0},z\in\mathbb{R}\diagdown\left\{  0\right\}
\\
a_{+}(x_{0})\boldsymbol{\psi}_{+}(\xi^{\prime},0)+a_{-}(x_{0})\boldsymbol{\psi
}_{-}(\xi^{\prime},0)=0,\xi^{\prime}=(\xi_{1},\xi_{2})\in\mathbb{R}^{2}%
\end{array}
\right.  ,\nonumber\\
z  &  =y_{3}%
\end{align}
where
\[
\boldsymbol{\psi}\in H^{1}(\mathbb{R\diagdown}\left\{  0\right\}
,\mathbb{C}^{4})=H^{1}(\mathbb{R}_{+},\mathbb{C}^{4})\oplus H^{1}%
(\mathbb{R}_{-},\mathbb{C}^{4}),
\]
and%
\[
a_{+}(x_{0})=\frac{1}{2}\Gamma(x_{0})-i\alpha_{3},a_{-}(x_{0})=\frac{1}%
{2}\Gamma(x_{0})+i\alpha_{3}.
\]
One can prove that the operator $\mathbb{D}_{a_{+}(x_{0}),a_{-}(x_{0})}%
^{0}:H^{1}(\mathbb{R}^{3}\diagdown\mathbb{R}^{2},\mathbb{C}^{4})\rightarrow
L^{2}(\mathbb{R}^{3},\mathbb{C}^{4})$ is invertible if and only if the
operator $\mathbb{\hat{D}}_{a_{+}(x_{0}),a_{-}(x_{0})}^{0}(\xi^{\prime}%
):H^{1}(\mathbb{R\diagdown}\left\{  0\right\}  ,\mathbb{C}^{4})\rightarrow
L^{2}(\mathbb{R}^{3},\mathbb{C}^{4})$ is invertible for every $\xi^{\prime}%
\in\mathbb{R}^{2}:\left\vert \xi^{\prime}\right\vert =1.$ \ The equality
\begin{equation}
\left(  \alpha_{1}\xi_{1}+\alpha_{2}\xi_{2}+i\alpha_{3}\frac{d}{dz}\right)
^{2}=\left(  \left\vert \xi^{\prime}\right\vert ^{2}-\frac{d^{2}}{dz^{2}%
}\right)  I_{4} \label{3.1'}%
\end{equation}
yields that $\mathbb{\hat{D}}_{a_{+}(x_{0}),a_{-}(x_{0})}^{0}(\xi^{\prime})$
is the Fredholm\ operator for every $\xi^{\prime}\in S^{1}$ with index $0.$
Hence $\mathbb{\hat{D}}_{a_{+}(x_{0}),a_{-}(x_{0})}^{0}(\xi^{\prime})$ is
invertible if and only if $\ker\mathbb{\hat{D}}_{a_{+}(x_{0}),a_{-}(x_{0}%
)}^{0}(\xi^{\prime})=\left\{  0\right\}  .$ We consider the equation
\begin{equation}
\left(  \alpha_{1}\xi_{1}+\alpha_{2}\xi_{2}+i\alpha_{3}\frac{d}{dz}\right)
\boldsymbol{\psi}(\xi^{\prime},z)=0,z\in\mathbb{R},\text{ }\xi^{\prime}%
=(\xi_{1},\xi_{2}). \label{3.2}%
\end{equation}
where
\[
\boldsymbol{\psi}(\xi^{\prime},z)=\left(
\begin{array}
[c]{c}%
\boldsymbol{\psi}^{1}(\xi^{\prime},z)\\
\boldsymbol{\psi}^{2}(\xi^{\prime},z)
\end{array}
\right)
\]
be the $4D-$vector with $\boldsymbol{\psi}^{j}(\xi^{\prime},z)\in
\mathbb{C}^{2},j=1,2.$ Then (\ref{3.2}) yields that $\boldsymbol{\psi}^{j}%
(\xi^{\prime},z)$ satisfies the equation
\begin{equation}
\left(  \sigma_{1}\xi_{1}+\sigma_{2}\xi_{2}+i\sigma_{3}\frac{d}{dz}\right)
\boldsymbol{\psi}^{j}(\xi^{\prime},z)=\boldsymbol{0},j=1,2. \label{3.3}%
\end{equation}
By formula (\ref{3.1'}) equation (\ref{3.3}) has the exponential solutions
$\boldsymbol{\psi}_{\pm}^{j}(\xi^{\prime},z)=\boldsymbol{h}_{\pm}^{j}%
(\xi^{\prime})e^{\pm\left\vert \xi^{\prime}\right\vert z}$ where
$\mathbf{h}_{\pm}^{j}(\xi^{\prime})\in\mathbb{C}^{2}.$ Moreover, formula
(\ref{3.1'}) implies that $\boldsymbol{h}_{\pm}^{j}(\xi^{\prime})=\Lambda
_{\pm}(\xi^{\prime})\boldsymbol{f}_{\pm}$ $\ $where
\begin{align*}
\Lambda_{\pm}(\xi^{\prime})  &  =\sigma_{1}\xi_{1}+\sigma_{2}\xi_{2}\pm
i\sigma_{3}\left\vert \xi^{\prime}\right\vert =\left(
\begin{array}
[c]{cc}%
\pm i\left\vert \varsigma\right\vert  & \bar{\varsigma}\\
\varsigma & \mp i\left\vert \varsigma\right\vert
\end{array}
\right) \\
\varsigma &  =\xi_{1}+i\xi_{2}\in\mathbb{C}^{2}.
\end{align*}
and $\boldsymbol{f}_{\pm}\in\mathbb{C}^{2}$ are arbitrary vectors. Note that
for every vectors $\boldsymbol{f}_{1}\boldsymbol{,f}_{2}\in\mathbb{C}^{2}$%
\begin{equation}
\Lambda_{\pm}(\xi^{\prime})\boldsymbol{f}_{1}\cdot\Lambda_{\mp}\left(
\xi^{\prime}\right)  \boldsymbol{f}_{2}=0. \label{3.4}%
\end{equation}

Let
\begin{align}
\boldsymbol{\psi}_{+,1}(\xi^{\prime},z)  &  =\boldsymbol{h}_{+,1}(\xi^{\prime
})e^{\left\vert \xi^{\prime}\right\vert z},\boldsymbol{\psi}_{+,2}(\xi
^{\prime},z)=\boldsymbol{h}_{+,2}(\xi^{\prime})e^{\left\vert \xi^{\prime
}\right\vert z},\label{3.5}\\
\boldsymbol{\psi}_{-,1}(\xi^{\prime},z)  &  =\boldsymbol{h}_{-,1}(\xi^{\prime
})e^{-\left\vert \xi^{\prime}\right\vert z},\boldsymbol{\psi}_{-,2}%
(\xi^{\prime},z)=\boldsymbol{h}_{-,2}(\xi^{\prime})e^{-\left\vert \xi^{\prime
}\right\vert z},\nonumber
\end{align}
where
\begin{equation}
\boldsymbol{h}_{\pm,1}(\xi^{\prime})=\left(
\begin{array}
[c]{c}%
\Lambda_{\pm}(\xi^{\prime})\boldsymbol{e}\\
\boldsymbol{0}%
\end{array}
\right)  ,\boldsymbol{h}_{\pm,2}(\xi^{\prime})=\left(
\begin{array}
[c]{c}%
\boldsymbol{0}\\
\Lambda_{\pm}(\xi^{\prime})\boldsymbol{e}%
\end{array}
\right)  ,\boldsymbol{e}=\left(
\begin{array}
[c]{c}%
1\\
0
\end{array}
\right)  ,\boldsymbol{0=}\left(
\begin{array}
[c]{c}%
0\\
0
\end{array}
\right)  . \label{3.7}%
\end{equation}
Formula (\ref{3.4}) yields that the system of vectors
\[
\left\{  \boldsymbol{h}_{+,1}(\xi^{\prime}),\boldsymbol{h}_{+,2}(\xi^{\prime
}),\boldsymbol{h}_{-,1}(\xi^{\prime}),\boldsymbol{h}_{-,2}(\xi^{\prime
})\right\}
\]
is orthogonal in $\mathbb{C}^{4}$ and
\[
\left\Vert \boldsymbol{h}_{\pm,j}(\xi^{\prime})\right\Vert _{\mathbb{C}^{2}%
}^{2}=2\left\vert \xi^{\prime}\right\vert ^{2}.
\]
Moreover, $\left\{  \boldsymbol{\psi}_{\pm,1}(\xi^{\prime},z),\boldsymbol{\psi
}_{\pm,2}(\xi^{\prime},z)\right\}  $ is the fundamental system of solutions of
equation (\ref{3.2}). Note that the general solution of equation (\ref{3.2})
in $H^{1}(\mathbb{R\diagdown}\left\{  0\right\}  ,\mathbb{C}^{4})$ is of the
form
\[
\boldsymbol{\psi}(\xi^{\prime},z)=\left\{
\begin{array}
[c]{c}%
\boldsymbol{\psi}_{+}(\xi^{\prime},z)=C_{+,1}\boldsymbol{\psi}_{-,1}%
(\xi^{\prime},z)+C_{+,2}\boldsymbol{\psi}_{-,2}(\xi^{\prime},z),z\geq0\\
\boldsymbol{\psi}_{-}(\xi^{\prime},z)=C_{-,1}\boldsymbol{\psi}_{+,1}%
(\xi^{\prime},z)+C_{-,2}\boldsymbol{\psi}_{+,2}(\xi^{\prime},z),z<0
\end{array}
\right.  .
\]
Substituting $\boldsymbol{\psi}_{\pm}(\xi^{\prime},z)$ in the interaction condition%

\begin{equation}
a_{+}(x_{0})\boldsymbol{\psi}_{+}(\xi^{\prime},0)+a_{-}(x_{0})\boldsymbol{\psi
}_{-}(\xi^{\prime},0)=\boldsymbol{0} \label{3.7'}%
\end{equation}
we obtain the linear system of equations with respect to $C_{+,1}%
,C_{+,2},C_{-,1},C_{-,2}$
\begin{equation}
a_{+}(x_{0})\boldsymbol{h}_{-,1}C_{+,1}+a_{+}(x_{0})\boldsymbol{h}%
_{-,2}C_{+,2}+a_{-}(x_{0})\boldsymbol{h}_{+,1}C_{-,1}+a_{-}(x_{0}%
)\boldsymbol{h}_{+,1}C_{-,2}=\boldsymbol{0}. \label{3.8}%
\end{equation}
Note that $\ker\mathbb{\hat{D}}_{a_{+}(x_{0}),a_{-}(x_{0})}^{0}(\xi^{\prime
})=\left\{  0\right\}  $ if and only if system (\ref{3.8}) has the trivial
solution only.

We denote by $\mathcal{L(}x_{0},\xi^{\prime})$ the $4\times4$ matrix with
columns%
\[
\left(  a_{+}(x_{0})\boldsymbol{h}_{-,1}(\xi^{\prime}),a_{+}(x_{0}%
)\boldsymbol{h}_{-,2}(\xi^{\prime}),a_{-}(x_{0})\boldsymbol{h}_{+,1}%
(\xi^{\prime}),a_{-}(x_{0})\boldsymbol{h}_{+,2}(\xi^{\prime})\right)  .
\]

\begin{definition}
We say that the local Lopatinsky-Shapiro condition for $\mathbb{D}%
_{\boldsymbol{A},\Phi,a_{+},a_{-}}$ is satisfied at the point $s_{0}\in\Sigma$
if
\begin{equation}
\det\mathcal{L(}s_{0},\xi^{\prime})\neq0\text{ for every }\xi^{\prime
}:\left\vert \xi^{\prime}\right\vert =1 \label{2.18}%
\end{equation}
and we say that the uniform Lopatinsky-Shapiro condition for $\mathbb{D}%
_{\boldsymbol{A},\Phi,a_{+},a_{-}}$is satisfied on $\Sigma$ if
\begin{equation}
\inf_{s\in\Sigma}\inf_{\left\vert \xi^{\prime}\right\vert =1}\left\vert
\det\mathcal{L(}s,\xi^{\prime})\right\vert >0. \label{2.19}%
\end{equation}

\end{definition}

Thus the operator $\mathbb{D}_{a_{+}(x_{0}),a_{-}(x_{0})}^{0}$ is invertible
if condition (\ref{2.18}) is satisfied. Moroever, if condition (\ref{2.19}) is
satisfied then the the family of the operators $\mathbb{D}_{a_{+}(s),a_{-}%
(s)}^{0},s\in\Sigma$ is uniformly invertible, that is
\begin{equation}
\sup_{s\in\Sigma,\left\vert \xi^{\prime}\right\vert =1}\left\Vert \left(
\mathbb{D}_{a_{+}(s),a_{-}(s)}^{0}\right)  ^{-1}\right\Vert <\infty.
\label{2.20}%
\end{equation}

\subsection{Examples of the Lopatinsky-Shapiro conditions}

\begin{itemize}
\item Let
\end{itemize}

\[
\Gamma=\left(
\begin{array}
[c]{cc}%
2\gamma I_{2} & 0\\
0 & 2\epsilon I_{2}%
\end{array}
\right)  ,
\]
where $\gamma=\gamma(s),\epsilon=\epsilon(s)\in C_{b}^{1}(\Sigma)$ are
real-valued functions. Then
\[
a_{\pm}=\left(
\begin{array}
[c]{cc}%
\gamma I_{2} & \mp i\sigma_{3}\\
\mp i\sigma_{3} & \epsilon I_{2}%
\end{array}
\right)  .
\]

\begin{itemize}
\item We set%
\[
\mathfrak{e}_{1}=a_{+}h_{-,1},\mathfrak{e}_{2}=ah_{-,2},\mathfrak{e}_{3}%
=a_{-}h_{+,1},\mathfrak{e}_{4}=a_{-}h_{+,2},
\]
where%
\begin{align}
\mathfrak{e}_{1}  &  =\left(
\begin{array}
[c]{cc}%
\gamma I_{2} & -i\sigma_{3}\\
-i\sigma_{3} & \epsilon I_{2}%
\end{array}
\right)  \left(
\begin{array}
[c]{c}%
\Lambda_{-}\boldsymbol{e}\\
0
\end{array}
\right)  =\left(
\begin{array}
[c]{c}%
\gamma\Lambda_{-}\boldsymbol{e}\\
i\Lambda_{+}\boldsymbol{e}%
\end{array}
\right) \label{ex1}\\
\mathfrak{e}_{2}  &  =\left(
\begin{array}
[c]{cc}%
\gamma I_{2} & -i\sigma_{3}\\
-i\sigma_{3} & \epsilon I_{2}%
\end{array}
\right)  \left(
\begin{array}
[c]{c}%
0\\
\Lambda_{-}\boldsymbol{e}%
\end{array}
\right)  =\left(
\begin{array}
[c]{c}%
i\Lambda_{+}\boldsymbol{e}\\
\epsilon\Lambda_{-}\boldsymbol{e}%
\end{array}
\right) \nonumber\\
\mathfrak{e}_{3}  &  =\left(
\begin{array}
[c]{cc}%
\gamma I_{2} & i\sigma_{3}\\
i\sigma_{3} & \epsilon I_{2}%
\end{array}
\right)  \left(
\begin{array}
[c]{c}%
\Lambda_{+}\boldsymbol{e}\\
0
\end{array}
\right)  =\left(
\begin{array}
[c]{c}%
\gamma\Lambda_{+}\boldsymbol{e}\\
-i\Lambda_{-}\boldsymbol{e}%
\end{array}
\right) \nonumber\\
\mathfrak{e}_{4}  &  =\left(
\begin{array}
[c]{cc}%
\gamma I_{2} & i\sigma_{3}\\
i\sigma_{3} & \epsilon I_{2}%
\end{array}
\right)  \left(
\begin{array}
[c]{c}%
0\\
\Lambda_{+}\boldsymbol{e}%
\end{array}
\right)  =\left(
\begin{array}
[c]{c}%
-i\Lambda_{-}\boldsymbol{e}\\
\epsilon\Lambda_{+}\boldsymbol{e}%
\end{array}
\right)  .\nonumber
\end{align}
Taking into account that
\[
\Lambda_{\pm}^{2}=0\text{ and }\Lambda_{\pm}^{\ast}=\Lambda_{\mp}%
\]
we obtain that
\begin{align}
\mathfrak{e}_{1}\cdot\mathfrak{e}_{2}  &  =\mathfrak{e}_{1}\cdot
\mathfrak{e}_{3}=\mathfrak{e}_{2}\cdot\mathfrak{e}_{4}=\mathfrak{e}_{3}%
\cdot\mathfrak{e}_{4}=0,\label{ex2}\\
\mathfrak{e}_{1}^{2}  &  =\mathfrak{e}_{2}^{2}=2(1+\gamma^{2})\left\vert
\xi^{\prime}\right\vert ^{2},\mathfrak{e}_{3}^{2}=\mathfrak{e}_{4}%
^{2}=2(1+\epsilon^{2})\left\vert \xi^{\prime}\right\vert ^{2},\nonumber\\
\mathfrak{e}_{1}\cdot\mathfrak{e}_{4}  &  =\mathfrak{e}_{2}\cdot
\mathfrak{e}_{3}=2i(\gamma+\epsilon)^{2}\left\vert \xi^{\prime}\right\vert
^{2}.\nonumber
\end{align}
Formulas (\ref{ex2}) yields that
\begin{align}
\det\left(  \mathfrak{e}_{i}\cdot\mathfrak{e}_{j}\right)  _{i,j=1}^{4}  &
=16\left\vert \xi^{\prime}\right\vert ^{8}((1+\gamma^{2})(1+\epsilon
^{2})-(\gamma+\epsilon)^{2})\label{ex3}\\
&  =16\left\vert \xi^{\prime}\right\vert ^{8}\left(  1-\gamma\epsilon\right)
^{2}.\nonumber
\end{align}
Hence the Lopatinsky-Shapiro condition holds at the point $s\in\Sigma$ \ if
\begin{equation}
\gamma(s)\epsilon(s)\neq1, \label{ex4'}%
\end{equation}
and the Lopatinsky-Shapiro condition holds uniformly on $\Sigma$ if
\begin{equation}
\inf_{s\in\Sigma}\left\vert 1-\gamma(s)\epsilon(s)\right\vert >0. \label{ex4}%
\end{equation}

\item For the electrostatic potential \ $\Gamma(s)=\tau(s)I_{4}$
$\gamma(s)=\epsilon(s)=\frac{1}{2}\tau(s),$ we obtain from (\ref{ex4'}%
),(\ref{ex4}) the local Lopatisky condition
\[
\tau^{2}(s)\neq4
\]
and the uniform Lopatinsky-Shapiro condition
\[
\inf_{s\in\Sigma}\left\vert 4-\tau^{2}(s)\right\vert >0.
\]

\item Let
\begin{equation}
Q_{\sin}=\left(  \eta(s)I_{4}+\tau(s)\alpha_{0}\right)  \delta_{\Sigma
}=\left(
\begin{array}
[c]{cc}%
\left(  \eta(s)+\tau(s)\right)  I_{2} & 0\\
0 & \left(  \eta(s)-\tau(s)\right)  I_{2}%
\end{array}
\right)  \delta_{\Sigma} \label{ex5}%
\end{equation}
be the sum of electrostatic and Lorentz potentials where $\eta(s),\tau(s)\in
C_{b}(\Sigma)$ are real-valued functions. Then we obtain from formulas
(\ref{ex4'}),(\ref{ex4}) the local Lopatisky-Shapiro condition
\[
\eta^{2}(s)-\tau^{2}(s)\neq4,s\in\Sigma
\]
and the uniform Lopatinsky-Shapiro condition
\begin{equation}
\inf_{s\in\Sigma}\left\vert \eta^{2}(s)-\tau^{2}(s)-4\right\vert >0.
\label{ex6}%
\end{equation}

\end{itemize}

\subsection{A priori estimates for operators $\mathbb{D}_{\boldsymbol{A}%
,\Phi,a_{+},a_{-}}$}

\begin{theorem}
\label{t2.1} Let $\Sigma\subset\mathbb{R}^{3}$ be a $C^{2}$-uniformly regular
surface being the common boundary of the domains $\Omega_{\pm},$
$\boldsymbol{A=}(A_{1},A_{2},A_{3})\in C_{b}^{1}(\mathbb{R}^{3},\mathbb{C}%
^{3}),$ $\Phi\in C_{b}^{1}(\mathbb{R}),$ $\Gamma\in C_{b}^{1}(\Sigma
,\mathcal{B(}\mathbb{C}^{4}))=C_{b}^{1}(\Sigma)\otimes\mathcal{B(}%
\mathbb{C}^{4}),$ and the Lopatinsky-Shapiro condition be satisfied on
$\Sigma$ uniformly, that is%
\begin{equation}
\inf_{s\in\Sigma,\left\vert \xi^{\prime}\right\vert =1}\left\vert
\det\mathcal{L(}s,\xi^{\prime})\right\vert >0. \label{2.19''}%
\end{equation}
Then there exists $C>0$ such that for every $u\in H^{1}\left(  \mathbb{R}%
^{3}\mathbb{\diagdown}\Sigma,\mathbb{C}^{4}\right)  $
\begin{equation}
\left\Vert u\right\Vert _{H^{1}\left(  \mathbb{R}^{3}\mathbb{\diagdown}%
\Sigma,\mathbb{C}^{4}\right)  }\leq C\left(  \left\Vert \mathbb{D}%
_{\boldsymbol{A},\Phi,a_{+},a_{-}}u\right\Vert _{L^{2}(\mathbb{R}%
^{3},\mathbb{C}^{4})}+\left\Vert u\right\Vert _{L^{2}(\mathbb{R}%
^{3},\mathbb{C}^{4})}\right)  . \label{2.20'}%
\end{equation}

\end{theorem}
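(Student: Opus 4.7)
The plan is to establish the estimate (\ref{2.20'}) by a standard localization-and-freezing argument, adapted to the non-compact setting via the uniform regularity of $\Sigma$. Since $\Sigma$ is $C^{2}$-uniformly regular, there exist $r_{0}>0$ and a locally finite cover of $\mathbb{R}^{3}$ by balls of radius $r\le r_{0}$ of finite multiplicity, split into \emph{interior balls} disjoint from $\Sigma$ and \emph{boundary balls} centered at points $s_{k}\in\Sigma$; together with a subordinate $C^{\infty}$-partition of unity $\{\chi_{k}\}$ whose derivatives are uniformly bounded. The estimate (\ref{2.20'}) is obtained by summing local estimates of the form
\begin{equation*}
\left\Vert \chi_{k}u\right\Vert _{H^{1}(\mathbb{R}^{3}\diagdown\Sigma,\mathbb{C}^{4})}\le C\left(\left\Vert \mathbb{D}_{\boldsymbol{A},\Phi,a_{+},a_{-}}(\chi_{k}u)\right\Vert _{L^{2}}+\left\Vert \chi_{k}u\right\Vert _{L^{2}}\right),
\end{equation*}
with a constant $C$ independent of $k$, and controlling the commutators $[\mathfrak{D}_{\boldsymbol{A},\Phi},\chi_{k}]$, which are zero-order multiplication operators and hence absorbed into the $\left\Vert u\right\Vert _{L^{2}}$ term on the right.

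For an interior ball, the transmission condition plays no role, and the estimate reduces to a standard G{\aa}rding-type inequality for the regular Dirac operator $\mathfrak{D}_{\boldsymbol{A},\Phi}$ on $\mathbb{R}^{3}$. This follows from $\mathfrak{D}_{0}^{2}=(-\Delta+m^{2})I_{4}$ together with the $C_{b}^{1}$-boundedness of $\boldsymbol{A}$ and $\Phi$, which makes $\mathfrak{D}_{\boldsymbol{A},\Phi}-\mathfrak{D}_{0}$ a zero-order bounded perturbation.

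For a boundary ball centered at $s_{0}\in\Sigma$, I use the diffeomorphism $\varphi_{s_{0}}$ from the uniform regularity definition to flatten $\Sigma$ locally to $\mathbb{R}_{y^{\prime}}^{2}$, after which the Dirac operator is transported to an operator whose principal symbol at $s_{0}$, together with the frozen matrices $a_{\pm}(s_{0})$, produces exactly the half-space model $\mathbb{D}_{a_{+}(s_{0}),a_{-}(s_{0})}^{0}$ from (\ref{3.0}). By the analysis in Subsection 3.1, the uniform Lopatinsky-Shapiro condition (\ref{2.19''}) implies that $\mathbb{D}_{a_{+}(s_{0}),a_{-}(s_{0})}^{0}:H^{1}(\mathbb{R}^{3}\diagdown\mathbb{R}_{y^{\prime}}^{2},\mathbb{C}^{4})\rightarrow L^{2}(\mathbb{R}^{3},\mathbb{C}^{4})$ is invertible with inverses bounded \emph{uniformly} in $s_{0}\in\Sigma$, as in (\ref{2.20}). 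Hence there exists a uniform estimate $\Vert v\Vert_{H^{1}}\le C\Vert\mathbb{D}_{a_{+}(s_{0}),a_{-}(s_{0})}^{0}v\Vert_{L^{2}}$ for $v$ supported in $\varphi_{s_{0}}(B_{r}(s_{0}))$, which I pull back and apply to $v=\chi_{k}u\circ\varphi_{s_{0}}^{-1}$.

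The discrepancy between the actual operator $\mathbb{D}_{\boldsymbol{A},\Phi,a_{+},a_{-}}$ and the frozen-coefficient model is a perturbation whose coefficients, on the support of $\chi_{k}$, differ from their values at $s_{0}$ by $O(r)$ (using $C_{b}^{1}$-regularity of $\boldsymbol{A},\Phi,\Gamma$ and the uniform $C^{2}$-bounds on the charts $\varphi_{s_{0}}$). Choosing $r$ sufficiently small---and crucially, the same $r$ works for all $s_{0}\in\Sigma$ thanks to the uniformity assumptions---this perturbation can be absorbed into the left-hand side, yielding the local estimate with a uniform constant. The main obstacle is precisely this uniformity: one must check that the perturbative error from both coefficient freezing and chart pulling is controlled uniformly in $s_{0}$, and that the model-operator bound itself is uniform; both points are resolved by combining the uniform regularity of $\Sigma$ with the uniform Lopatinsky-Shapiro condition (\ref{2.19''}). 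Summing the local estimates over the finite-multiplicity cover and moving commutator terms into $\Vert u\Vert_{L^{2}}$ then gives (\ref{2.20'}).
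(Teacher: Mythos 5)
Your argument is correct and follows essentially the same route as the paper's (terse) proof: interior local elliptic estimates for $\mathfrak{D}_{\boldsymbol{A},\Phi}$ off $\Sigma$, boundary local estimates at points of $\Sigma$ obtained from the uniformly invertible frozen half-space model guaranteed by the uniform Lopatinsky--Shapiro condition, and gluing via a finite-multiplicity partition of unity subordinated to a uniform cover supplied by the $C^{2}$-uniform regularity of $\Sigma$. Your write-up merely spells out the freezing-of-coefficients and commutator-absorption steps that the paper leaves implicit.
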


\begin{proof}
The proof of the a priori estimate (\ref{2.20'}) is based on the local a
priori estimates following from the uniform ellipticity on $\mathbb{R}^{3}$
of\ Dirac operator $\mathfrak{D}_{\boldsymbol{A},\Phi}$ and the uniform
Lopatinsky-Shapiro conditions. For the gluing of local estimates we use a
countable partition of the unity of finite multiplicity. The uniform
ellipticity of $\mathfrak{D}_{\boldsymbol{A},\Phi}$ yields that there exists a
small enough $r>0$ such that for every point $x_{0}\in\mathbb{R}^{3}%
\diagdown\Sigma$ there exists a ball $B_{r}(x_{0})=\left\{  x\in\mathbb{R}%
^{3}:\left\vert x-x_{0}\right\vert <r\right\}  $ such that the local a priory
estimate
\begin{equation}
\left\Vert u\right\Vert _{H^{1}\left(  B_{r}(x_{0}),\mathbb{C}^{4}\right)
}\leq C\left(  \left\Vert \mathfrak{D}_{\boldsymbol{A},\Phi,a_{+},a_{-}%
}u\right\Vert _{L^{2}(B_{r}(x_{0}),\mathbb{C}^{4})}+\left\Vert u\right\Vert
_{L^{2}(B_{r}(x_{0}),\mathbb{C}^{4})}\right)  \label{2.21'}%
\end{equation}
holds for every $u\in H^{1}\left(  B_{r}(x_{0}),\mathbb{C}^{4}\right)  $ with
a constant $C>0$ independent of $x_{0}\in\mathbb{R}^{3}\diagdown\Sigma.$
\ \ The uniform Lopatinsky-Shapiro condition yields that for small enough
$r>0$ independent of $x_{0}\in\Sigma$ there exists the local a priori
estimates at every point $x_{0}\in\Sigma$%
\begin{equation}
\left\Vert u\right\Vert _{H^{1}\left(  B_{r}(x_{0})\diagdown\Sigma
,\mathbb{C}^{4}\right)  }^{2}\leq C\left(  \left\Vert \mathbb{D}%
_{\boldsymbol{A},\Phi,a_{+},a_{-}}u\right\Vert _{L^{2}(B_{r}(x_{0}%
),\mathbb{C}^{4})}^{2}+\left\Vert u\right\Vert _{L^{2}(B_{r}(x_{0}%
),\mathbb{C}^{4})}^{2}\right)  \label{2.21''}%
\end{equation}
for $u\in H^{1}(B_{r}(x_{0})\diagdown\Sigma,\mathbb{C}^{4})$ with a constant
$C>0$ independent of $x_{0}\in\Sigma.$ \ Since $\Sigma$ is a uniformly regular
surface there exists a partition of unity
\begin{equation}%
{\displaystyle\sum\limits_{j=1}^{\infty}}
\theta_{j}(x)=1,x\in\mathbb{R}^{3},\theta_{j}\in C_{0}^{\infty}(B_{r}(x_{j}))
\label{2.21'''}%
\end{equation}
subordinated to the countable covering $\cup_{j=1}^{\infty}B_{r}(x_{j})$ of
finite multiplicity such that a priori estimates (\ref{2.21'}) or
(\ref{2.21''}) hold for all points $x_{0}=x_{j}$ with a constant $C>0$
independent of $j\in\mathbb{N}.$ \ We obtain a priori estimate (\ref{2.20'})
gluing these estimates by means of partition of unity (\ref{2.21'''}).
\end{proof}

\subsection{ Parameter-dependent transmission problems associated with Dirac
operators with singular potentials}

We consider the invertibility of the parameter-dependent operator
\begin{align}
&  \mathbb{D}_{\boldsymbol{A},\Phi,a_{+},a_{-}}(i\mu)\boldsymbol{u}%
(x)\label{2.30}\\
&  =\left(  \mathbb{D}_{\boldsymbol{A},\Phi,a_{+},a_{-}}-i\mu I_{2}\right)
\boldsymbol{u}(x)\nonumber\\
&  =\left\{
\begin{array}
[c]{c}%
(\mathfrak{D}_{\boldsymbol{A},\Phi}-i\mu I_{4})\boldsymbol{u}(x),x\in
\mathbb{R}^{2}\diagdown\Gamma,\mu\in\mathbb{R},\\
a_{+}\mathcal{(}s)\boldsymbol{u}_{+}(s)+a_{-}\mathcal{(}s)\boldsymbol{u}%
_{-}(s)=0,s\in\Gamma
\end{array}
\right.  ,\mu\in\mathbb{R}\nonumber
\end{align}
acting from $H^{1}(\mathbb{R}^{3}\diagdown\Gamma,\mathbb{C}^{4})$ in
$L^{2}(\mathbb{R}^{3},\mathbb{C}^{2})$ for large value of $\left\vert
\mu\right\vert .$ We will apply the local approach following to the well-known
paper \cite{AgranVishik}). Let $\mathfrak{D}_{\boldsymbol{A},\Phi}^{0}%
(\mu)=\boldsymbol{\sigma\cdot D}-i\mu I_{4}$ is the main part of the
parameter-dependent operator $\mathfrak{D}_{\boldsymbol{A},\Phi}-i\mu I_{2}.$
Since
\[
(\boldsymbol{\sigma}\cdot i\boldsymbol{\nabla}+i\mu I_{2})\left(
\boldsymbol{\sigma}\cdot i\boldsymbol{\nabla}-i\mu I_{2}\right)  =\left(
-\Delta_{2}+\mu^{2}\right)  I_{2}%
\]
the operator $\mathfrak{D}_{\boldsymbol{A},\Phi}-i\mu I_{2}$ is the uniformly
elliptic operator with parameter $\mu\in\mathbb{R}$. Moreover, the operator
\[
\boldsymbol{\sigma}\cdot i\boldsymbol{\nabla}-i\mu I_{2}:H^{1}(\mathbb{R}%
^{3},\mathbb{C}^{2})\rightarrow L^{2}(\mathbb{R}^{3},\mathbb{C}^{2})
\]
is invertible for every $\mu\in\mathbb{R}:\left\vert \mu\right\vert >0$ with
\[
\left(  \boldsymbol{\sigma}\cdot i\boldsymbol{\nabla}-i\mu I_{2}\right)
^{-1}=(\boldsymbol{\sigma}\cdot i\boldsymbol{\nabla}+i\mu I_{2})\left(
-\Delta_{2}+\mu^{2}\right)  ^{-1}I_{2}%
\]
and
\begin{equation}
\left\Vert \left(  \boldsymbol{\sigma}\cdot i\boldsymbol{\nabla}-i\mu
I_{2}\right)  ^{-1}\right\Vert _{\mathcal{B(}L^{2}(\mathbb{R}^{2}%
,\mathbb{C}^{2}),H^{1}(\mathbb{R}^{2},\mathbb{C}^{2}))}\leq\frac{C}{\left\vert
\mu\right\vert },\left\vert \mu\right\vert >0. \label{2.30'}%
\end{equation}

For the local estimates at the points $x_{0}\in\Sigma$ we use the local system
of orthogonal coordinates $y=(y_{1},y_{2})$ where the axis $y_{1}$ is directed
along the tangent to $\Sigma$ at the point $x_{0}\in\Sigma$ and the axis
$y_{3}=z$ is directed along the normal vector $\boldsymbol{\nu}_{x_{0}}$\ \ to
$\Sigma$ at the point $x_{0}\in\Sigma,$ and we take the main part of
$\mathfrak{D}_{\boldsymbol{A},\Phi}(\mu).$ Then we obtain the operator
\begin{align}
&  \mathbb{D}_{a_{+}(x_{0}),a_{-}(x_{0})}^{0}(\mu)\boldsymbol{\psi
}(y)\label{2.31'}\\
=  &  \left(  \mathbb{D}_{a_{+}(x_{0}),a_{-}(x_{0})}^{0}-i\mu\right)
\boldsymbol{\psi}(y)\nonumber\\
&  =\left\{
\begin{array}
[c]{c}%
\left(  \boldsymbol{\alpha}^{\prime}\cdot\boldsymbol{D}_{y^{\prime}}%
+i\alpha_{3}\frac{\partial}{\partial y_{3}}-i\mu I_{4}\right)
\boldsymbol{\psi(}y)=0,y\in\mathbb{R}_{+}^{3}\cup\mathbb{R}_{+}^{3}\\
a_{+}(x_{0})\boldsymbol{\psi}_{+}(y^{\prime},0)+a_{-}(x_{0})\boldsymbol{\psi
}_{-}(y^{\prime},0)=0,y^{\prime}\in\mathbb{R}^{2}%
\end{array}
\right.  ,\nonumber
\end{align}
where
\[
\boldsymbol{\alpha}^{\prime}\cdot\boldsymbol{D}_{y^{\prime}}=\alpha
_{1}D_{y_{1}}+\alpha_{2}D_{y_{2},}a_{\pm}(x_{0})=\frac{1}{2}\Gamma(x_{0})\mp
i\alpha_{3}.
\]

We investigate the invertibility of $\mathbb{D}_{a_{+}(x_{0}),a_{-}(x_{0}%
)}^{0}(\mu)$ acting from $H^{1}(\mathbb{R}_{+}^{3},\mathbb{C}^{4})\oplus
H^{1}(\mathbb{R}_{-}^{3},\mathbb{C}^{4})$ into $L^{2}(\mathbb{R}%
^{3},\mathbb{C}^{4}).$ Applying the Fourier transform with respect to
$\boldsymbol{y}^{\prime}=(y_{1},y_{2})\in\mathbb{R}^{2}$ we obtain the family
of $1-D$ parameter-dependent transmission problems on $\mathbb{R}%
\diagdown\left\{  0\right\}  $
\begin{align}
&  \mathbb{\hat{D}}_{a_{+}(x_{0}),a_{-}(x_{0})}^{0}(\xi^{\prime}%
,\mu)\label{2.31}\\
&  =\left\{
\begin{array}
[c]{c}%
\mathfrak{D}_{0}(\xi^{\prime},\mu)\psi(\xi^{\prime},\mu,z)=\left(
\boldsymbol{\alpha}^{\prime}\cdot\xi^{\prime}+i\alpha_{3}\frac{d}{dz}-i\mu
I_{4}\right)  \psi(\xi^{\prime},\mu,z),z\in\mathbb{R}\diagdown\left\{
0\right\}  ,\\
a_{+}(x_{0})\psi_{+}(\xi^{\prime},\mu,0)+a_{-}(x_{0})\psi_{-}(\xi^{\prime}%
,\mu,0)=0,\boldsymbol{\alpha}^{\prime}\cdot\xi^{\prime}=\alpha_{1}\xi
_{1}+\alpha_{2}\xi_{2}.
\end{array}
\right. \nonumber\\
\text{where }z  &  =y_{3}.\nonumber
\end{align}

Note that the operator $\mathbb{\hat{D}}_{a_{+}(x_{0}),a_{-}(x_{0})}^{0}%
(\xi^{\prime},\mu):H^{1}(\mathbb{R}\diagdown\left\{  0\right\}  ,\mathbb{C}%
^{4})\rightarrow L^{2}(\mathbb{R},\mathbb{C}^{4})$ is the Fredholm operator of
the index $0$ if $\left\vert \xi^{\prime}\right\vert ^{2}+\mu^{2}>0$. Hence
$\mathbb{D}_{a_{+}(x_{0}),a_{-}(x_{0})}^{0}(\xi^{\prime},\mu)$ is invertible
if and only if $\ker\mathfrak{B}_{a_{+}(x_{0}),a_{-}(x_{0})}(\xi^{\prime}%
,\mu)=0$ for all\ $(\xi^{\prime},\mu):$ $\left\vert \xi^{\prime}\right\vert
^{2}+\mu^{2}>0.$ We consider solutions of the equation
\[
\mathbb{\hat{D}}_{a_{+}(x_{0}),a_{-}(x_{0})}^{0}(\xi^{\prime},\mu
)\boldsymbol{\psi=0}%
\]
in the space $H^{1}(\mathbb{R}\diagdown\left\{  0\right\}  ,\mathbb{C}%
^{4})\subset L^{2}(\mathbb{R},\mathbb{C}^{4}).$ Since
\begin{equation}
\left(  \boldsymbol{\alpha}^{\prime}\cdot\xi^{\prime}+i\alpha_{3}\frac{d}%
{dz}+i\mu I_{4}\right)  \left(  \boldsymbol{\alpha}^{\prime}\cdot\xi^{\prime
}+i\alpha_{3}\frac{d}{dz}-i\mu I_{4}\right)  =\left(  \left\vert \xi^{\prime
}\right\vert ^{2}+\mu^{2}-\frac{d^{2}}{dz^{2}}\right)  I_{4} \label{2.32}%
\end{equation}
the equation%
\begin{equation}
\left(  \boldsymbol{\alpha}^{\prime}\cdot\xi^{\prime}+i\alpha_{3}\frac{d}%
{dz}-i\mu I_{4}\right)  \boldsymbol{\psi}(\xi^{\prime},\mu,z)=0 \label{2.33}%
\end{equation}
has the exponential solutions%
\begin{equation}
\boldsymbol{\psi}_{\pm}(\xi^{\prime},\mu,z)=\boldsymbol{h}_{\pm}\left(
\xi^{\prime},\mu\right)  e^{\pm\rho z},\rho=\sqrt{\left\vert \xi^{\prime
}\right\vert ^{2}+\mu^{2}}. \label{2.34}%
\end{equation}
where the vectors $\boldsymbol{h}_{\pm}\left(  \xi^{\prime},\mu\right)  $
satisfy the equation
\begin{equation}
\left(  \boldsymbol{\alpha}^{\prime}\cdot\xi^{\prime}\pm i\rho\alpha_{3}-i\mu
I_{4}\right)  \boldsymbol{h}_{\pm}\left(  \xi^{\prime},\mu\right)  =0.
\label{2.34''}%
\end{equation}

Taking into account (\ref{2.32}) we obtain that the vectors $\boldsymbol{h}%
_{\pm}\left(  \xi^{\prime},\mu\right)  \in\mathbb{C}^{4}$ have the form
\begin{equation}
\boldsymbol{h}_{\pm}\left(  \xi^{\prime},\mu\right)  =\Theta_{\pm}(\xi
^{\prime},\mu)\boldsymbol{f}_{\pm}=\left(  \boldsymbol{\alpha}^{\prime}%
\cdot\xi^{\prime}\pm i\boldsymbol{\rho}\alpha_{3}+i\mu I_{4}\right)
\boldsymbol{f}_{\pm} \label{2.34'}%
\end{equation}
where $\boldsymbol{f\pm}\in\mathbb{C}^{4}$. Let
\begin{equation}
\Lambda_{\pm}(\xi^{\prime},\mu)=\sigma^{\prime}\cdot\xi^{\prime}\pm
i\rho\sigma_{3}=\left(
\begin{array}
[c]{cc}%
\pm i\rho & \bar{\varsigma}\\
\varsigma & \mp i\rho
\end{array}
\right)  ,\varsigma=\xi_{1}+i\xi_{2} \label{2.35}%
\end{equation}
and
\[
\boldsymbol{e}_{1}=\left(
\begin{array}
[c]{c}%
1\\
0
\end{array}
\right)  ,\boldsymbol{e}_{2}=\left(
\begin{array}
[c]{c}%
0\\
1
\end{array}
\right)  ,\boldsymbol{0}=\left(
\begin{array}
[c]{c}%
0\\
0
\end{array}
\right)  .
\]
Then the vectors
\begin{align*}
\boldsymbol{h}_{1,\pm}(\xi^{\prime},\mu)  &  =\Theta_{\pm}(\xi^{\prime}%
,\mu)\left(
\begin{array}
[c]{c}%
\boldsymbol{e}_{1}\\
\boldsymbol{0}%
\end{array}
\right)  =\left(
\begin{array}
[c]{cc}%
i\mu I_{2} & \Lambda_{\pm}(\xi^{\prime},\mu)\\
\Lambda_{\pm}(\xi^{\prime},\mu) & i\mu I_{2}%
\end{array}
\right)  \left(
\begin{array}
[c]{c}%
\boldsymbol{e}_{1}\\
\boldsymbol{0}%
\end{array}
\right) \\
&  =\left(
\begin{array}
[c]{c}%
i\mu\boldsymbol{e}_{1}\\
\Lambda_{\pm}(\xi^{\prime},\mu)\boldsymbol{e}_{1}%
\end{array}
\right)  ,
\end{align*}%
\begin{align*}
\boldsymbol{h}_{2,\pm}(\xi^{\prime},\mu)  &  =\Theta_{\pm}(\xi^{\prime}%
,\mu)\left(
\begin{array}
[c]{c}%
0\\
\boldsymbol{e}_{2}%
\end{array}
\right)  =\left(
\begin{array}
[c]{cc}%
i\mu I_{2} & \Lambda_{\pm}(\xi^{\prime},\mu)\\
\Lambda_{\pm}(\xi^{\prime},\mu) & i\mu I_{2}%
\end{array}
\right)  \left(
\begin{array}
[c]{c}%
0\\
\boldsymbol{e}_{2}%
\end{array}
\right) \\
&  =\left(
\begin{array}
[c]{c}%
\Lambda_{\pm}(\xi^{\prime},\mu)\boldsymbol{e}_{2}\\
i\mu\boldsymbol{e}_{2}%
\end{array}
\right)
\end{align*}
are solutions of equation (\ref{2.34''}). Applying formulas
\begin{equation}
\Lambda_{\pm}^{\ast}(\xi^{\prime},\mu)=\Lambda_{\mp}(\xi^{\prime},\mu
),\Lambda_{\pm}^{2}(\xi^{\prime},\mu)=(-\rho^{2}+\left\vert \xi^{\prime
}\right\vert ^{2})I_{2}=-\mu^{2}I_{2} \label{2.36}%
\end{equation}
we obtain that the system of the vectors
\[
\left\{  \boldsymbol{h}_{1,+}(\xi^{\prime},\mu),\boldsymbol{h}_{2,+}%
(\xi^{\prime},\mu),\boldsymbol{h}_{1,-}(\xi^{\prime},\mu),\boldsymbol{h}%
_{2,-\pm}(\xi^{\prime},\mu)\right\}
\]
is orthogonal in $\mathbb{C}^{4}$, and $\left\{  \boldsymbol{h}_{1,\pm}%
(\xi^{\prime},\mu)e^{\pm\rho z},\boldsymbol{h}_{2,\pm}(\xi^{\prime},\mu
)e^{\pm\rho z}\right\}  $ is the fundamental system of solutions of equation
(\ref{2.33}).

The exponentially decreasing solutions of equation (\ref{2.33}) are of the
form%
\begin{equation}
\boldsymbol{\psi}=\left\{
\begin{array}
[c]{c}%
\left(  C_{1}^{+}\boldsymbol{h}_{+,1}+C_{2}^{+}\boldsymbol{h}_{+,2}\right)
e^{\rho z},z<0\\
\left(  C_{1}^{-}\boldsymbol{h}_{-,1}+C_{2}^{-}\boldsymbol{h}_{-,2}\right)
e^{-\rho z},z>0
\end{array}
\right.  . \label{2.39'}%
\end{equation}
Substituting $\boldsymbol{\psi}$ in the transmission conditions we obtain the
system of linear equations
\begin{equation}
C_{1}^{+}a_{-}(x_{0})\boldsymbol{h}_{+,1}+C_{2}^{+}a_{-}(x_{0})\boldsymbol{h}%
_{+,2}+C_{1}^{-}a_{+}(x_{0})\boldsymbol{h}_{-,1}+C_{2}^{-}a_{+}(x_{0}%
)\boldsymbol{h}_{-,2}=0 \label{2.40'}%
\end{equation}
with respect to the unknown vector $(C_{1}^{+},C_{2}^{+},C_{1}^{-},C_{2}%
^{-})\in\mathbb{C}^{4}.$ System (\ref{2.40'}) has the trivial solution if and
only if
\begin{equation}
\det\mathcal{L(}x_{0},\xi^{\prime},\mu)\neq0\text{ if }\rho^{2}=\mu
^{2}+\left\vert \xi^{\prime}\right\vert ^{2}=1 \label{e2.40}%
\end{equation}
where $\mathcal{L(}s,\xi^{\prime},\mu)$ is the matrix with columns
\[
\left\{  a_{-}(x_{0})\boldsymbol{h}_{+,1}(\xi^{\prime},\mu),a_{-}%
(x_{0})\boldsymbol{h}_{+,2}(\xi^{\prime},\mu),a_{+}(x_{0})\boldsymbol{h}%
_{-,1}(\xi^{\prime},\mu),a_{+}(x_{0})\boldsymbol{h}_{-,1}(\xi^{\prime}%
,\mu)\right\}  .
\]

\begin{itemize}
\item Condition (\ref{e2.40}) is called the local Lopatinsky-Shapiro condition
for the operator $\mathfrak{D}_{\boldsymbol{A},\Phi,a_{+},a_{-}}(\mu)$ of
parameter-dependent transmission problem (\ref{2.30}).

\item We say that the operator $\mathfrak{D}_{\boldsymbol{A},\Phi,a_{+},a_{-}%
}(\mu)$ satisfies the uniform Lopatinsky-Shapiro condition for
parameter-dependent transmission problem (\ref{2.30}) if
\begin{equation}
\inf_{x_{0}\in\Sigma,\mu^{2}+\left\vert \xi^{\prime}\right\vert ^{2}%
=1}\left\vert \det\mathcal{L(}x_{0},\xi^{\prime},\mu)\right\vert >0.
\label{2.38}%
\end{equation}

\end{itemize}

It should be noted that if condition (\ref{2.38}) is satisfied then the
operators
\[
\mathbb{D}_{a_{+}(x_{0}),a_{-}(x_{0})}^{0}(\mu):H^{1}(\mathbb{R}^{3}%
\diagdown\Sigma,\mathbb{C}^{4})\rightarrow L^{2}(\mathbb{R}^{3},\mathbb{C}%
^{4}),x_{0}\in\Sigma
\]
are invertible for every $\mu\neq0$, and
\begin{equation}
\sup_{s\in S}\left\Vert \left(  \mathbb{D}_{a_{+}(s),a_{-}(s)}^{0}%
(\mu)\right)  ^{-1}\right\Vert _{\mathcal{B}(L^{2}(\mathbb{R}^{3}%
,\mathbb{C}^{4}),H^{1}(\mathbb{R}^{3}\diagdown\Sigma,\mathbb{C}^{4}))}\leq
C\left\vert \mu\right\vert ^{-1}. \label{2.39}%
\end{equation}

\begin{theorem}
\label{t2.2} Let $\Sigma\subset\mathbb{R}^{3}$ be a $C^{2}-$uniformly regular
surface, the magnetic potential $\boldsymbol{A=}(A_{1},A_{2},A_{3})\in
L^{\infty}(\mathbb{R}^{3},\mathbb{C}^{3}),$ the electrostatic potential
$\Phi\in L^{\infty}(\mathbb{R}),$ $\Gamma\in C_{b}^{1}(\Sigma,\mathcal{B(}%
\mathbb{C}^{4})),$ and the uniformly Lopatinsky-Shapiro condition (\ref{2.38})
for parameter-dependent operator $\mathbb{D}_{\boldsymbol{A},\Phi,a_{+},a_{-}%
}(\mu),\mu\in\mathbb{R}$ be satisfied. Then there exists $\mu_{0}>0$ such that
the operator $\mathbb{D}_{\boldsymbol{A},\Phi,a_{+},a_{-}}(\mu):H^{1}%
(\mathbb{R}^{3}\diagdown\Sigma,\mathbb{C}^{4})\rightarrow L^{2}(\mathbb{R}%
^{3},\mathbb{C}^{4})$ is invertible for every $\mu\in\mathbb{R}:\left\vert
\mu\right\vert >\mu_{0}.$
\end{theorem}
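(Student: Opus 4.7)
The plan is to prove invertibility by constructing a two-sided parametrix for $\mathbb{D}_{\boldsymbol{A},\Phi,a_+,a_-}(\mu)$ when $|\mu|$ is large, patching together the frozen-coefficient model inverses that the uniform Lopatinsky-Shapiro condition (\ref{2.38}) makes available. The argument follows the Agranovich--Vishik scheme: the parameter $\mu$ plays the role that a large frequency plays in standard elliptic regularity, supplying a factor $1/|\mu|$ through the norm bounds (\ref{2.30'}) and (\ref{2.39}) that is decisive for absorbing error terms.

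First I would fix $r>0$ small and, using the uniform regularity of $\Sigma$, cover $\mathbb{R}^{3}$ by a countable family $\{B_{r}(x_{j})\}$ of finite multiplicity, with each center $x_{j}$ either at distance $\geq r$ from $\Sigma$ or on $\Sigma$. Associated to this covering, I would choose a $C_{0}^{\infty}$ partition of unity $\{\theta_{j}\}$ with $\sum_{j}\theta_{j}=1$ and cutoffs $\psi_{j}\in C_{0}^{\infty}(B_{r}(x_{j}))$ with $\psi_{j}\equiv 1$ on $\operatorname{supp}\theta_{j}$, all with derivatives up to second order uniformly bounded in $j$. For each $j$, let $\mathbb{D}_{j}^{0}(\mu)$ denote the model operator on $\mathbb{R}^{3}$: either the free parameter-dependent Dirac operator $\mathfrak{D}_{0}-i\mu I_{4}$ (interior case), or, in local tangent-normal coordinates at $x_{j}\in\Sigma$, the frozen-coefficient transmission operator $\mathbb{D}^{0}_{a_{+}(x_{j}),a_{-}(x_{j})}(\mu)$ of (\ref{2.31'}). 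The parameter-dependent estimate (\ref{2.30'}) in the interior case and (\ref{2.39}) in the boundary case then give, uniformly in $j$ and in $\mu\neq 0$,
\[
\bigl\|(\mathbb{D}_{j}^{0}(\mu))^{-1}\bigr\|_{\mathcal{B}(L^{2}(\mathbb{R}^{3},\mathbb{C}^{4}),\,H^{1}(\mathbb{R}^{3}\diagdown\Sigma_{j},\mathbb{C}^{4}))}\leq C/|\mu|,
\]
where $\Sigma_{j}$ denotes the local model interface (empty in the interior case).

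Next I would define the candidate right parametrix
\[
R(\mu)f \;=\; \sum_{j}\psi_{j}\,(\mathbb{D}_{j}^{0}(\mu))^{-1}(\theta_{j}f),
\]
which is well defined by the finite multiplicity of the cover. Applying $\mathbb{D}_{\boldsymbol{A},\Phi,a_{+},a_{-}}(\mu)$ and using $\sum_{j}\theta_{j}=1$ yields $\mathbb{D}_{\boldsymbol{A},\Phi,a_{+},a_{-}}(\mu)R(\mu)=I+T(\mu)$, where $T(\mu)$ decomposes into (i) commutator contributions $[\mathfrak{D}_{\boldsymbol{A},\Phi},\psi_{j}](\mathbb{D}_{j}^{0}(\mu))^{-1}\theta_{j}$, bounded in $\mathcal{B}(L^{2})$ by $C/|\mu|$ because the commutator is a zeroth-order multiplier and the local inverse gains an $H^{1}$ with a $1/|\mu|$ factor; and (ii) freezing errors $\psi_{j}(\mathfrak{D}_{\boldsymbol{A},\Phi}-\mathfrak{D}_{j}^{\mathrm{fr}})(\mathbb{D}_{j}^{0}(\mu))^{-1}\theta_{j}$, whose $\mathcal{B}(L^{2})$-norm is $O(r/|\mu|)$ by the $C_{b}^{1}$ regularity of $\boldsymbol{A},\Phi,\Gamma$ and of the flattening diffeomorphisms. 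Summing with the help of the finite multiplicity and the uniformity of the local bounds produces the global estimate $\|T(\mu)\|_{\mathcal{B}(L^{2})}\leq C(r+1/|\mu|)$.

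Finally I would fix $r$ so small and then $\mu_{0}$ so large that $\|T(\mu)\|<1/2$ for $|\mu|>\mu_{0}$, so $I+T(\mu)$ is invertible by Neumann series and $R(\mu)(I+T(\mu))^{-1}$ is a bounded right inverse of $\mathbb{D}_{\boldsymbol{A},\Phi,a_{+},a_{-}}(\mu)$. A symmetric construction, with cutoffs reversed so that $R'(\mu)\mathbb{D}_{\boldsymbol{A},\Phi,a_{+},a_{-}}(\mu)=I+T'(\mu)$ and $\|T'(\mu)\|<1/2$ for $|\mu|>\mu_{0}$, produces a left inverse; combining the two yields bijectivity. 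The main obstacle I expect is the uniform bookkeeping of the commutator and freezing terms across the countable cover: this is where the uniform regularity of $\Sigma$ (controlling the flattenings and their derivatives uniformly in $j$, so the frozen models form a uniformly invertible family) and the uniform Lopatinsky--Shapiro hypothesis (\ref{2.38}) combine through (\ref{2.39}) to deliver a constant $C/|\mu|$ that is independent of $j$, which is exactly what makes the global error small.
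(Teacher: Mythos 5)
Your parametrix construction is essentially the same Agranovich--Vishik scheme the paper follows: a finite-multiplicity countable cover adapted to the uniformly regular $\Sigma$, uniformly bounded local inverses supplied by (\ref{2.30'}) and (\ref{2.39}), a glued right/left parametrix via a partition of unity, and absorption of the remainder $T(\mu)$ by a Neumann series once $|\mu|$ is large. The paper treats the local one-sided inverses $L_{x_j}(\mu),R_{x_j}(\mu)$ abstractly (postulating (\ref{2.40})--(\ref{2.41}) with uniform bounds) whereas you peel back one more layer to the frozen-coefficient models and the resulting freezing errors; that is a difference of presentation, not of route.

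One item in your error budget is misattributed. You invoke \emph{$C_b^1$ regularity of $\boldsymbol{A},\Phi$} to make the freezing error $O(r/|\mu|)$, but the theorem's hypotheses place $\boldsymbol{A}\in L^\infty(\mathbb{R}^3,\mathbb{C}^3)$ and $\Phi\in L^\infty(\mathbb{R})$ only, so no $O(r)$ smallness can be extracted from those terms by shrinking the ball radius. This does not threaten the proof, because those differences are zero-order multipliers: after composing with $(\mathbb{D}_j^0(\mu))^{-1}$, which maps $L^2\to H^1$ (hence $L^2\to L^2$) with norm $\leq C/|\mu|$, their contribution is already $O(1/|\mu|)$ with no $r$-dependence at all. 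The genuine $O(r)$ smallness you use — and legitimately have — comes from two sources only: the $C^2$ uniform regularity of $\Sigma$, which controls the flattening diffeomorphisms and their second derivatives uniformly over the cover, and the assumption $\Gamma\in C_b^1(\Sigma,\mathcal{B}(\mathbb{C}^4))$, which controls the variation of $a_\pm(s)$ across each boundary ball. After separating the budget this way, your estimate $\|T(\mu)\|\leq C(r+1/|\mu|)$ is obtained with $r$ entering only through those two sources; then fixing $r$ first and $\mu_0$ afterwards recovers the paper's $\|T_i(\mu)\|\leq C(1+|\mu|)^{-1}$ in (\ref{2.44'}), where the $r$-dependence has already been absorbed into the choice of covering. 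With that correction the argument you give is sound and matches the paper's.
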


\begin{proof}
The proof is similar to the proof of invertibility of elliptic
parameter-dependent boundary value problems (see for instance
\cite{AgranVishik},\cite{Agran1}, Sec.3). However, we consider the
parameter-dependent transmission problems for unbounded domains, and therefore
we need an infinite partition of unity and estimates associated with this
fact. Note that the Dirac operator $\mathfrak{D}_{\boldsymbol{A},\Phi}(i\mu)$
is a uniformly elliptic parameter-depending operator on $\mathbb{R}^{3}.$
Moreover, the Lopatinsky-Shapiro condition (\ref{2.39}) are satisfied
uniformly for every point $x\in\Sigma.$ It yields that there exists $r>0$ and
$\mu_{0}>0$ such that there exists a countable covering $\cup_{j\in\mathbb{N}%
}B_{r}(x_{j})$ of the finite multiplicity $N\geq1$ such that for every $x_{j}$
there exist operators
\[
L_{x_{j}}(\mu),R_{x_{j}}(\mu)\in\mathcal{B}(L^{2}(\mathbb{R}^{3}%
,\mathbb{C}^{4}),H^{1}(\mathbb{R}^{3}\diagdown\Sigma,\mathbb{C}^{4}))\text{ }%
\]
such that
\begin{equation}
\sup_{j\in\mathbb{N},\left\vert \mu\right\vert \geq\mu_{0}}\left\Vert
L_{x_{j}}(\mu)\right\Vert =d_{L}<\infty,\sup_{j\in\mathbb{N},\left\vert
\mu\right\vert \geq\mu_{0}}\left\Vert R_{x_{j}}(\mu)\right\Vert =d_{R}<\infty,
\label{2.40}%
\end{equation}
and
\begin{align}
L_{x_{j}}(\mu)\mathbb{D}_{\boldsymbol{A},\Phi,a_{+},a_{-}}(\mu)\eta_{j}I  &
=\eta_{j}I,\label{2.41}\\
\eta_{j}\mathbb{D}_{\boldsymbol{A},\Phi,a_{+},a_{-}}(\mu)R_{x_{j}}(\mu)  &
=\eta_{j}I\nonumber
\end{align}
for every $\eta_{j}\in C_{0}^{\infty}(B_{r}(x_{j})).$ Let
\[
\sum_{j\in\mathbb{N}}\varphi_{j}(x)=1
\]
be a partition of the unity subordinated to the covering $\cup_{j\in
\mathbb{N}}B_{r}(x_{j}).$ We set
\begin{align*}
L(\mu)u  &  =\sum_{j\in\mathbb{N}}\varphi_{j}L_{x_{j}}(\mu)\theta_{j}u,u\in
C_{0}^{\infty}(\mathbb{R}^{3},\mathbb{C}^{4}),\\
R(\mu)u  &  =\sum_{j\in\mathbb{N}}\theta_{j}R_{x_{j}}(\mu)\varphi_{j}u,
\end{align*}
where \emph{\ }$\theta_{j}\in C_{0}^{\infty}(B_{r}(x_{j})),$ $0\leq\theta
_{j}\leq1,j\in\mathbb{N},\theta_{j}\varphi_{j}=\varphi_{j}.$ One can prove
that the operators $L(\mu),R(\mu)$ are continued to the bounded operators
acting from $L^{2}(\mathbb{R}^{3},\mathbb{C}^{4})$ to $H^{1}(\mathbb{R}%
^{3}\diagdown\Sigma,\mathbb{C}^{4}))$ and
\begin{equation}
L(\mu)\mathbb{D}_{\boldsymbol{A},\Phi,a_{+},a_{-}}(\mu)=I+T_{1}(\mu
),\mathbb{D}_{\boldsymbol{A},\Phi,a_{+},a_{-}}(\mu)R(\mu)=I+T_{2}(\mu)
\label{2.42'}%
\end{equation}
where
\begin{align}
\left\Vert T_{1}(\mu)\right\Vert  &  \leq C(1+\left\vert \mu\right\vert
)^{-1},\label{2.44'}\\
\left\Vert T_{2}(\mu)\right\Vert  &  \leq C(1+\left\vert \mu\right\vert
)^{-1},\nonumber
\end{align}
and $C>0$ is independent of $\mu.$ Hence the operator $\mathbb{D}%
_{\boldsymbol{A},\Phi,a_{+},a_{-}}(\mu)$ is invertible for $\left\vert
\mu\right\vert $ large enough.
\end{proof}

\begin{example}
\label{e2} Let \
\[
Q_{\sin}=\Gamma(s)\delta_{\Sigma}=\left(
\begin{array}
[c]{cc}%
\gamma(s) & 0\\
0 & \kappa(s)
\end{array}
\right)  \delta_{\Sigma}%
\]
$\gamma,\kappa\in C_{b}^{1}(\Sigma).$ We set
\[
\left\{  \mathfrak{f}_{1}=a_{-}(s)\boldsymbol{h}_{+,1},\mathfrak{f}%
_{2}(s)=a_{-}(s)\boldsymbol{h}_{+,2},\mathfrak{f}_{3}(s)=a_{+}%
(s)\boldsymbol{h}_{-,1},\mathfrak{f}_{4}(s)=a_{+}(s)\boldsymbol{h}%
_{-,1}\right\}  ,
\]
where
\begin{align}
\mathfrak{f}_{1}  &  =a_{-}\boldsymbol{h}_{1,+}=\left(
\begin{array}
[c]{cc}%
\gamma I_{2} & -i\sigma_{3}\\
-i\sigma_{3} & \kappa I_{2}%
\end{array}
\right)  \left(
\begin{array}
[c]{c}%
i\mu\boldsymbol{e}_{1}\\
\Lambda_{+}\boldsymbol{e}_{1}%
\end{array}
\right)  =\left(
\begin{array}
[c]{c}%
i\eta\gamma e_{1}+i\Lambda_{-}\boldsymbol{e}_{1}\\
\mu e_{1}+\kappa\Lambda_{+}\boldsymbol{e}_{1}%
\end{array}
\right)  ,\label{f2.42}\\
\mathfrak{f}_{2}  &  =a_{-}\boldsymbol{h}_{2,+}=\left(
\begin{array}
[c]{cc}%
\gamma I_{2} & -i\sigma_{3}\\
-i\sigma_{3} & \kappa I_{2}%
\end{array}
\right)  \left(
\begin{array}
[c]{c}%
\Lambda\boldsymbol{e}_{2}\\
i\mu\boldsymbol{e}_{2}%
\end{array}
\right)  =\left(
\begin{array}
[c]{c}%
\gamma\Lambda_{+}\boldsymbol{e}_{2}-\mu\boldsymbol{e}_{2}\\
-i\Lambda_{-}\boldsymbol{e}_{2}+i\mu\kappa\boldsymbol{e}_{2}%
\end{array}
\right)  ,\nonumber\\
\mathfrak{f}_{3}  &  =a_{+}\boldsymbol{h}_{1,-}=\left(
\begin{array}
[c]{cc}%
\gamma I_{2} & -i\sigma_{3}\\
-i\sigma_{3} & \kappa I_{2}%
\end{array}
\right)  \left(
\begin{array}
[c]{c}%
i\mu\boldsymbol{e}_{1}\\
\Lambda_{-}\boldsymbol{e}_{1}%
\end{array}
\right)  =\left(
\begin{array}
[c]{c}%
i\eta\mu e_{1}-i\Lambda_{+}\boldsymbol{e}_{1}\\
-\mu e_{1}+\kappa\Lambda_{-}\boldsymbol{e}_{1}%
\end{array}
\right)  ,\nonumber\\
\mathfrak{f}_{4}  &  =a_{+}\boldsymbol{h}_{2,-}=\left(
\begin{array}
[c]{cc}%
\gamma I_{2} & -i\sigma_{3}\\
-i\sigma_{3} & \kappa I_{2}%
\end{array}
\right)  \left(
\begin{array}
[c]{c}%
\Lambda_{-}\boldsymbol{e}_{2}\\
i\mu\boldsymbol{e}_{2}%
\end{array}
\right)  =\left(
\begin{array}
[c]{c}%
\eta\Lambda_{-}\boldsymbol{e}_{2}+\mu\boldsymbol{e}_{2}\\
i\Lambda_{+}\boldsymbol{e}_{2}+i\mu\kappa\boldsymbol{e}_{2}%
\end{array}
\right)  .\nonumber
\end{align}
It follows from (\ref{f2.42}) that
\begin{align}
\mathfrak{f}_{1}\cdot\mathfrak{f}_{2}  &  =\mathfrak{f}_{1}\cdot
\mathfrak{f}_{3}=\mathfrak{f}_{2}\cdot\mathfrak{f}_{4}=\mathfrak{f}_{3}%
\cdot\mathfrak{f}_{4}=0,\\
\mathfrak{f}_{1}^{2}  &  =\mathfrak{f}_{2}^{2}=2(1+\gamma^{2})\left\vert
\xi^{\prime}\right\vert ^{2},\mathfrak{f}_{3}^{2}=\mathfrak{f}_{4}%
^{2}=2(1+\epsilon^{2})\left\vert \xi^{\prime}\right\vert ^{2},\nonumber\\
\mathfrak{f}_{1}\cdot\mathfrak{f}_{4}  &  =\mathfrak{f}_{2}\cdot
\mathfrak{f}_{3}=2i(\gamma+\epsilon)^{2}\left\vert \xi^{\prime}\right\vert
^{2}.\nonumber
\end{align}
Applying formulas (\ref{2.36}) we obtain that the Gramm determinant $G$ of the
system $\left\{  \mathfrak{f}_{j}\right\}  _{j=1}^{4}$ is%
\begin{equation}
G=\det\left(  \mathfrak{f}_{i}\cdot\mathfrak{f}_{j}\right)  _{i,j=1}%
^{4}=16\rho^{8}(1-\gamma\epsilon)^{4},\rho^{2}=\mu^{2}+\left\vert \xi^{\prime
}\right\vert ^{2}. \label{2.43'}%
\end{equation}
Hence the Lopatinsky-Shapiro condition holds uniformly on $\Sigma$ if \
\begin{equation}
\inf_{s\in\Sigma,}\left\vert 1-\gamma(s)\kappa(s)\right\vert >0.
\label{2.43''}%
\end{equation}
IHence if the condition (\ref{2.43''}) holds the operator
\[
\mathfrak{D}_{\boldsymbol{A},\Phi,a_{+},a_{-}}(\mu):H^{1}(\mathbb{R}%
^{3}\diagdown\Sigma,\mathbb{C}^{4})\rightarrow L^{2}(\mathbb{R}^{3}%
\diagdown\Sigma,\mathbb{C}^{4}),\mu\in\mathbb{R}%
\]
is invertible for large values of $\left\vert \mu\right\vert .$
\end{example}

\subsection{ Self-adjointness of the operator $\mathcal{D}_{\boldsymbol{A}%
,\Phi,a_{+},a_{b-}}$}

Now we consider the self-adjointness of the unbounded operator $\mathcal{D}%
_{\boldsymbol{A},\Phi,a_{+},a_{-}}$ in $L^{2}(\mathbb{R}^{3},\mathbb{C}^{4})$
defined by the Dirac operator
\[
\mathfrak{D}_{\boldsymbol{A,}\Phi}=\boldsymbol{\alpha}\cdot\left(
\boldsymbol{D}_{x}+\boldsymbol{A}(x)\right)  +\alpha_{0}m+\Phi(x)I_{4}%
\]
with the domain
\[
H_{a_{+},a_{-}}^{1}(\mathbb{R}^{3}\diagdown\Sigma,\mathbb{C}^{4})=\left\{
\boldsymbol{u}\in H^{1}(\mathbb{R}^{3}\diagdown\Sigma,\mathbb{C}^{4}%
):a_{+}(s)\boldsymbol{u}_{+}(s)+a_{-}(s)\boldsymbol{u}_{-}(s)=\boldsymbol{0,}%
s\in\Sigma\right\}
\]
where
\[
a_{+}(s)=\frac{1}{2}\Gamma(s)-i\boldsymbol{\nu}\cdot\boldsymbol{\alpha}%
,a_{-}(s)=\frac{1}{2}\Gamma(s)+i\boldsymbol{\nu}\cdot\boldsymbol{\alpha}.
\]
We set
\begin{align*}
\left\langle \boldsymbol{u,v}\right\rangle _{L^{2}(\mathbb{R}^{3}%
,\mathbb{C}^{4})}  &  =\int_{\mathbb{R}^{3}}\boldsymbol{u}(x)\cdot
\boldsymbol{v}(x)dx,\\
\left\langle \boldsymbol{u,v}\right\rangle _{L^{2}(\Sigma,\mathbb{C}^{4})}  &
=\int_{\Sigma}\boldsymbol{u}(s)\cdot\boldsymbol{v}(s)ds
\end{align*}
are the scalar products in $L^{2}(\mathbb{R}^{3},\mathbb{C}^{4}),L^{2}%
(\Sigma,\mathbb{C}^{4}),$ respectively.

\begin{theorem}
\label{t2.3} Let $\Sigma\subset\mathbb{R}^{3}$ be the $C^{2}$-uniformly
regular surface, the vector potential $\boldsymbol{A}\in L^{\infty}%
(\mathbb{R}^{3},\mathbb{R}^{4})$ and the scalar potential $\Phi\in L^{\infty
}(\mathbb{R}^{3})$ be real-valued, and $\Gamma=\left(  \Gamma_{ij}\right)
_{i,j=1}^{4}$ be the Hermitian matrix with elements $\Gamma_{ij}\in C_{b}%
^{1}(\Sigma).$ We assume that the uniform Lopatinsky-Shapiro conditions
(\ref{2.38}) for the parameter-dependent problem holds. \ Then the operator
$\mathcal{D}_{\boldsymbol{A},\Phi,a_{+},a_{-}}$ with domain $H_{a_{+},a_{-}%
}^{1}(\mathbb{R}^{3}\diagdown\Sigma,\mathbb{C}^{4})$ is self-adjoint in
$L^{2}(\mathbb{R}^{3},\mathbb{C}^{4}).$
\end{theorem}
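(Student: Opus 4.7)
The plan is to combine the three ingredients already assembled in this section: the a priori estimate of Theorem \ref{t2.1}, the parameter-dependent invertibility of Theorem \ref{t2.2}, and the Hermiticity of $\Gamma$. The standard criterion we will exploit is that a densely defined symmetric operator $\mathcal{A}$ on a Hilbert space is self-adjoint if and only if $\operatorname{Im}(\mathcal{A}-\lambda I)=\mathcal{H}$ for some $\lambda$ in the upper half-plane and some $\lambda'$ in the lower half-plane. Thus the proof splits into (i) density of the domain, (ii) symmetry, (iii) closedness, and (iv) surjectivity of $\mathcal{D}_{\boldsymbol{A},\Phi,a_{+},a_{-}}-i\mu I$ on its domain for $\mu\in\mathbb{R}$ with $|\mu|$ sufficiently large in both signs.

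First, density of $H^{1}_{a_{+},a_{-}}(\mathbb{R}^{3}\setminus\Sigma,\mathbb{C}^{4})$ in $L^{2}(\mathbb{R}^{3},\mathbb{C}^{4})$ is straightforward, since $C_{0}^{\infty}(\mathbb{R}^{3}\setminus\Sigma,\mathbb{C}^{4})$ is contained in the domain (the trace condition is trivial there) and is dense in $L^{2}$. Closedness is immediate from the a priori estimate
\[
\|u\|_{H^{1}(\mathbb{R}^{3}\setminus\Sigma,\mathbb{C}^{4})}\leq C\bigl(\|\mathfrak{D}_{\boldsymbol{A},\Phi}u\|_{L^{2}}+\|u\|_{L^{2}}\bigr)
\]
of Theorem \ref{t2.1}, which guarantees that the graph norm of $\mathcal{D}_{\boldsymbol{A},\Phi,a_{+},a_{-}}$ is equivalent to the $H^{1}(\mathbb{R}^{3}\setminus\Sigma,\mathbb{C}^{4})$ norm restricted to the closed subspace cut out by the transmission condition.

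For symmetry I would apply Green's formula in each half $\Omega_{\pm}$ separately. For $\boldsymbol{u},\boldsymbol{v}\in H^{1}_{a_{+},a_{-}}(\mathbb{R}^{3}\setminus\Sigma,\mathbb{C}^{4})$ the bulk terms cancel (because $\boldsymbol{A}$, $\Phi$, $m$ are real and the $\alpha_j$ are Hermitian), leaving the jump contribution
\[
\langle \mathfrak{D}_{\boldsymbol{A},\Phi}\boldsymbol{u},\boldsymbol{v}\rangle_{L^{2}}-\langle \boldsymbol{u},\mathfrak{D}_{\boldsymbol{A},\Phi}\boldsymbol{v}\rangle_{L^{2}}=-\int_{\Sigma} i\boldsymbol{\alpha}\cdot\boldsymbol{\nu}(s)\bigl(\boldsymbol{u}_{+}-\boldsymbol{u}_{-}\bigr)\cdot\boldsymbol{v}\,ds+\int_{\Sigma}\boldsymbol{u}\cdot i\boldsymbol{\alpha}\cdot\boldsymbol{\nu}(s)\bigl(\boldsymbol{v}_{+}-\boldsymbol{v}_{-}\bigr)\,ds,
\]
where I still have to specify $\boldsymbol{u}$ and $\boldsymbol{v}$ on $\Sigma$; the natural way is to symmetrize by writing $(\boldsymbol{u}_{+}-\boldsymbol{u}_{-})\cdot\boldsymbol{v}$ as $\tfrac{1}{2}(\boldsymbol{u}_{+}-\boldsymbol{u}_{-})\cdot(\boldsymbol{v}_{+}+\boldsymbol{v}_{-})+\tfrac{1}{2}(\boldsymbol{u}_{+}+\boldsymbol{u}_{-})\cdot(\boldsymbol{v}_{+}-\boldsymbol{v}_{-})$ and similarly for the second term. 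Using the transmission conditions $i\boldsymbol{\alpha}\cdot\boldsymbol{\nu}(\boldsymbol{u}_{+}-\boldsymbol{u}_{-})=\tfrac{1}{2}\Gamma(\boldsymbol{u}_{+}+\boldsymbol{u}_{-})$ and the analogous one for $\boldsymbol{v}$, together with the Hermiticity of $\Gamma$, all four boundary terms pair off and vanish. This is the step where it is essential that $\Gamma=\Gamma^{*}$ pointwise, so that $\tfrac{1}{2}\Gamma(\boldsymbol{u}_{+}+\boldsymbol{u}_{-})\cdot(\boldsymbol{v}_{+}+\boldsymbol{v}_{-})$ is real in the appropriate sense; this bookkeeping is the place I expect to spend the most care, but no new analytic ingredient is needed.

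Finally, surjectivity of $\mathcal{D}_{\boldsymbol{A},\Phi,a_{+},a_{-}}-i\mu I$ for real $\mu$ with $|\mu|>\mu_{0}$ follows from Theorem \ref{t2.2}: the bounded transmission operator $\mathbb{D}_{\boldsymbol{A},\Phi,a_{+},a_{-}}(\mu):H^{1}(\mathbb{R}^{3}\setminus\Sigma,\mathbb{C}^{4})\to L^{2}(\mathbb{R}^{3},\mathbb{C}^{4})$ is invertible, and by its very definition its inverse produces, for any $\boldsymbol{f}\in L^{2}$, a preimage $\boldsymbol{u}$ that automatically satisfies the transmission condition, hence lies in $\operatorname{dom}\mathcal{D}_{\boldsymbol{A},\Phi,a_{+},a_{-}}$. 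Since this holds for both $\mu>\mu_{0}$ and $\mu<-\mu_{0}$, the ranges $\operatorname{Im}(\mathcal{D}_{\boldsymbol{A},\Phi,a_{+},a_{-}}\mp i\mu I)$ both equal $L^{2}(\mathbb{R}^{3},\mathbb{C}^{4})$, so both deficiency indices of the closed symmetric operator vanish and $\mathcal{D}_{\boldsymbol{A},\Phi,a_{+},a_{-}}$ is self-adjoint.
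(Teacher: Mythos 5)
Your proof follows essentially the same route as the paper's: symmetry from Green's formula together with the Hermiticity of $\Gamma$, closedness from the a priori estimate of Theorem \ref{t2.1}, and vanishing deficiency indices from the parameter-dependent invertibility of Theorem \ref{t2.2}. The only additions are the explicit density remark and a somewhat looser writing of the boundary pairing before symmetrization (the traces $\boldsymbol{u}_{\pm},\boldsymbol{v}_{\pm}$ need to appear from the start, exactly as the paper does), but you flag this yourself and it is a presentational detail rather than a gap.
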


\begin{proof}
At first, we prove that the operator $\mathcal{D}_{\boldsymbol{A},\Phi
,a_{+},a_{-}}$ is symmetric. Let $\boldsymbol{u,v}\in H_{a_{+},a_{-}}%
^{1}(\mathbb{R}^{3}\diagdown\Sigma,\mathbb{C}^{4}).$ Integrating by parts we
obtain
\begin{align*}
&  \left\langle \mathbb{D}_{\boldsymbol{A,}\Phi}\boldsymbol{u},\boldsymbol{v}%
\right\rangle _{L^{2}(\mathbb{R}^{3},\mathbb{C}^{4})}-\left\langle
\boldsymbol{u},\mathbb{D}_{\boldsymbol{A,}\Phi}\boldsymbol{v}\right\rangle
_{L^{2}(\mathbb{R}^{3},\mathbb{C}^{4})}\\
&  =\left\langle \left(  -i\boldsymbol{\alpha}\cdot\boldsymbol{\nu}\right)
\boldsymbol{u}_{+},\boldsymbol{v}_{+}\right\rangle _{L^{2}(\Sigma
,\mathbb{C}^{4})}-\left\langle \left(  -i\boldsymbol{\alpha}\cdot
\boldsymbol{\nu}\right)  \boldsymbol{u}_{-},\boldsymbol{v}_{-}\right\rangle
_{L^{2}(\Sigma,\mathbb{C}^{4})}\\
&  =\frac{1}{2}\left\langle -i\boldsymbol{\alpha}\cdot\boldsymbol{\nu}\left(
\boldsymbol{u}_{+}-\boldsymbol{u}_{-}\right)  ,\boldsymbol{v}_{+}%
-\boldsymbol{v}_{-}\right\rangle _{L^{2}(\Sigma,\mathbb{C}^{4})}-\frac{1}%
{2}\left\langle \boldsymbol{u}_{+}-\boldsymbol{u}_{-},-i\boldsymbol{\alpha
\cdot\nu(v}_{+}-\boldsymbol{v}_{-}\right\rangle _{L^{2}(\Sigma,\mathbb{C}%
^{4})}.
\end{align*}
Thaking into account equality (\ref{2.3}) we obtain that
\begin{align}
&  \left\langle \mathbb{D}_{\boldsymbol{A,}\Phi}\boldsymbol{u},\boldsymbol{v}%
\right\rangle _{L^{2}(\mathbb{R}^{3},\mathbb{C}^{4})}-\left\langle
\boldsymbol{u,}\mathbb{D}_{\boldsymbol{A,}\Phi}\boldsymbol{v}\right\rangle
_{L^{2}(\mathbb{R}^{3},\mathbb{C}^{4})}\label{2.46}\\
&  =-\frac{1}{4}\left\langle \Gamma\left(  \boldsymbol{u}_{+}+\boldsymbol{u}%
_{-}\right)  ,\boldsymbol{v}_{+}-\boldsymbol{v}_{-}\right\rangle
_{L^{2}(\Sigma,\mathbb{C}^{4})}+\frac{1}{4}\left\langle \boldsymbol{u}%
_{+}+\boldsymbol{u}_{-},\Gamma(\boldsymbol{v}_{+}-\boldsymbol{v}%
_{-}\right\rangle _{L^{2}(\Sigma,\mathbb{C}^{4})}.\nonumber
\end{align}
Since the matrix $\Gamma$ is Hermitian for every $s\in\Sigma$ the right part
side in (\ref{2.46}) is $0.$ Hence
\[
\left\langle \mathbb{D}_{\boldsymbol{A,}\Phi}\boldsymbol{u,v}\right\rangle
_{L^{2}(\mathbb{R}^{3},\mathbb{C}^{4})}=\left\langle \boldsymbol{u}%
,\mathbb{D}_{\boldsymbol{A,}\Phi}\boldsymbol{v}\right\rangle _{L^{2}%
(\mathbb{R}^{3},\mathbb{C}^{4})}%
\]
for every $\boldsymbol{u,v}\in H_{a_{+},a_{-}}^{1}(\mathbb{R}^{3}%
\diagdown\Sigma,\mathbb{C}^{4})$. \ The uniform Lopatinsky-Shapiro condition
(\ref{2.38}) yields a priori estimate (\ref{2.20'}) which implies that the
operator $\mathcal{D}_{\boldsymbol{A},\Phi,a_{+},a_{-}}$is closed. Moreover,
Theorem \ref{t2.2} yields that the deficiency indices of $\mathcal{D}%
_{\boldsymbol{A},\Phi,a_{+},a_{-}}$ are equal $0.$ Hence (see for instance
\cite{BirSol}, page 100) the operator $\mathcal{D}_{\boldsymbol{A},\Phi
,a_{+},a_{-}}$is self-adjoint.
\end{proof}

\subsubsection{Self-adjointness of Dirac operator with electrostatic and
Lorentz $\delta-$shell interactions}

As application of Theorem \ref{t2.3} we consider the Dirac operator with
singular potential
\[
Q_{\sin}=(\eta(s)+\tau(s))\delta_{\Sigma}=\left(
\begin{array}
[c]{cc}%
\left(  \eta(s)+\tau(s)\right)  I_{2} & 0\\
0 & \left(  \eta(s)-\tau(s)\right)  I_{2}%
\end{array}
\right)
\]
of the electrostatic and Lorentz $\delta-$shell interactions.

\begin{theorem}
\label{t2.4} Let \ $\Sigma$ be a uniformly regular $C^{2}-$surface. We assume
that the vector potential $\boldsymbol{A}\in L^{\infty}(\mathbb{R}%
^{3},\mathbb{R}^{4})$, the scalar potential $\Phi\in L^{\infty}(\mathbb{R}%
^{3})$ and real-valued, the function $\eta(s),\tau(s)(\in C_{b}^{1}(\Sigma))$
are real valued. Moreover, we assume that
\begin{equation}
\inf_{s\in S}\left\vert \eta^{2}(s)-\tau^{2}(s)-4\right\vert >0. \label{2.47}%
\end{equation}
Then the unbounded operator $\mathcal{D}_{\boldsymbol{A},\Phi,a_{+},a_{-}}$
defined by the Dirac operator $\mathfrak{D}_{\boldsymbol{A},\Phi}$ with domain
$H_{a_{+},a_{-}}^{1}(\mathbb{R}^{3}\diagdown\Sigma,\mathbb{C}^{4})$ where
$a_{\pm}(s)=\eta(s)I_{4}+\tau(s)\alpha_{0}\mp i\boldsymbol{\alpha\cdot\nu
}(s),s\in\Sigma$ is self-adjoint in $L^{2}(\mathbb{R}^{3},\mathbb{C}^{4}).$
\end{theorem}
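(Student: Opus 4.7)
The statement is a special case of Theorem~\ref{t2.3} applied to the specific singular potential $\Gamma(s) = \eta(s) I_4 + \tau(s)\alpha_0$, so the plan is simply to verify the two hypotheses of Theorem~\ref{t2.3}: Hermiticity of $\Gamma(s)$ with $C_b^1$ entries, and the uniform parameter-dependent Lopatinsky-Shapiro condition~(\ref{2.38}).

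The Hermiticity of $\Gamma(s)$ is immediate: $\eta,\tau$ are real-valued while $I_4$ and $\alpha_0$ are Hermitian, so $\Gamma(s)^*=\Gamma(s)$. The assumption $\eta,\tau\in C_b^1(\Sigma)$ gives directly $\Gamma_{ij}\in C_b^1(\Sigma)$, and the regular potentials $\boldsymbol{A},\Phi$ satisfy the $L^\infty$ hypothesis of Theorem~\ref{t2.3} by assumption.

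To verify~(\ref{2.38}) I would invoke Example~\ref{e2}. Since
\[
\Gamma(s) = \eta(s)I_4 + \tau(s)\alpha_0 = \begin{pmatrix} (\eta(s)+\tau(s))I_2 & 0 \\ 0 & (\eta(s)-\tau(s))I_2 \end{pmatrix}
\]
is block-diagonal, Example~\ref{e2} applies with the identifications $\gamma(s) = \tfrac{1}{2}(\eta(s)+\tau(s))$ and $\kappa(s) = \tfrac{1}{2}(\eta(s)-\tau(s))$. The Gramm determinant computed there is
\[
G(s,\xi',\mu) = \det\bigl(\mathfrak{f}_i(s,\xi',\mu)\cdot \mathfrak{f}_j(s,\xi',\mu)\bigr)_{i,j=1}^{4} = 16\rho^8\bigl(1-\gamma(s)\kappa(s)\bigr)^4,
\]
with $\rho^2 = \mu^2 + |\xi'|^2$. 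Since the columns of $\mathcal{L}(s,\xi',\mu)$ are linearly independent if and only if their Gramm determinant is nonzero, the uniform condition~(\ref{2.38}) on the unit sphere $\rho^2 = 1$ is equivalent to $\inf_{s\in\Sigma}|1-\gamma(s)\kappa(s)|>0$. A direct substitution gives
\[
1 - \gamma(s)\kappa(s) = 1 - \frac{\eta(s)^2 - \tau(s)^2}{4} = \frac{4 - \eta(s)^2 + \tau(s)^2}{4},
\]
so the condition is precisely hypothesis~(\ref{2.47}).

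With both hypotheses of Theorem~\ref{t2.3} verified, that theorem yields the self-adjointness of $\mathcal{D}_{\boldsymbol{A},\Phi,a_+,a_-}$. I do not expect a genuine obstacle; the argument is essentially a translation of Example~\ref{e2} into the $(\eta,\tau)$ parametrization. The only point requiring care is reconciling the general normalization $a_\pm = \tfrac{1}{2}\Gamma \mp i\boldsymbol{\alpha}\cdot\boldsymbol{\nu}$ of Section~3 with the expression for $a_\pm$ in the statement, but this rescaling of $\gamma$ and $\kappa$ only affects the prefactor inside $|1-\gamma\kappa|$ and leaves the structural form~(\ref{2.47}) intact.
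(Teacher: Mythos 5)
Your proof is correct and follows the same route as the paper: reduce to Theorem~\ref{t2.3} by verifying Hermiticity of $\Gamma$ and the uniform Lopatinsky-Shapiro condition via Example~\ref{e2}. In fact you are more careful than the paper's own proof, which writes $\gamma=\eta+\tau$, $\kappa=\eta-\tau$ (which would yield $\left\vert \eta^{2}-\tau^{2}-1\right\vert$, not $\left\vert \eta^{2}-\tau^{2}-4\right\vert$); your identification $\gamma=\tfrac{1}{2}(\eta+\tau)$, $\kappa=\tfrac{1}{2}(\eta-\tau)$ is the one consistent with the normalization $a_{\pm}=\tfrac{1}{2}\Gamma\mp i\boldsymbol{\alpha}\cdot\boldsymbol{\nu}$ and actually produces hypothesis~(\ref{2.47}).
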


\begin{proof}
We set in Example \ref{e2} $\gamma=\eta+\tau,\kappa=\eta-\tau.$ Applying
condition (\ref{2.43''}) we obtain that (\ref{2.47}) is the uniform
Lopatinsky-Shapiro condition $\mathcal{D}_{\boldsymbol{A},\Phi,a_{+},a_{-}}$
It should be noted that condition (\ref{2.47}) ensures the fulfillment of the
uniform Lopatinsky-Shapiro condition for the operator $\mathbb{D}%
_{\boldsymbol{A},\Phi,a_{+},a_{-}}(\mu)$ of parameter-dependent transmission
problem associated with potential $Q_{\sin}(s)=(\eta(s)+\tau(s))\delta
_{\Sigma}.$ Moreover, the unbounded operator $\mathcal{D}_{\boldsymbol{A}%
,\Phi,a_{+},a_{-}}$ with domain $H_{a_{+},a_{-}}^{1}(\mathbb{R}^{3}%
\diagdown\Sigma,\mathbb{C}^{4})$ is symmetric in $L^{2}(\mathbb{R}%
^{3},\mathbb{C}^{4})$. Hence Theorem \ref{t2.4} yields that $\mathcal{D}%
_{\boldsymbol{A},\Phi,a_{+},a_{-}}$ is a self-adjoint operator.
\end{proof}

\section{Fredholm theory of transmission problems associated with Dirac
operator with singular potentials}

We consider the Fredholm property of the transmission operator%
\begin{equation}
\mathbb{D}_{\boldsymbol{A},\Phi,a_{+},a_{-}}\boldsymbol{u}(x)=\left\{
\begin{array}
[c]{c}%
\mathfrak{D}_{\boldsymbol{A},\Phi}\boldsymbol{u}(x),x\in\mathbb{R}%
^{3}\diagdown\Sigma,\\
a_{+}\mathcal{(}s)\boldsymbol{u}_{+}(s)+a_{-}\mathcal{(}s)\boldsymbol{u}%
_{-}(s)=\boldsymbol{0},s\in\Sigma;
\end{array}
\right.  \label{e4.1}%
\end{equation}
associated with the Dirac operator with singular potential $\mathfrak{D}%
_{\boldsymbol{A},\Phi,Q_{\sin}}=\mathfrak{D}_{\boldsymbol{A},\Phi}+Q_{\sin} $
where
\begin{equation}
\mathfrak{D}_{\boldsymbol{A},\Phi}u(x)=\left(  \boldsymbol{\alpha}%
\cdot(\boldsymbol{D}_{x}+\boldsymbol{A}(x))+\alpha_{0}m+\Phi(x)\right)
\boldsymbol{u}(x),x\in\mathbb{R}^{3}\diagdown\Sigma\label{e4.2}%
\end{equation}
and $Q_{\sin}=\Gamma\delta_{S}.$ We assume that $\Sigma$ is a connected
$C^{2}$-surface being the common boundary of the domains $\Omega_{\pm}.$
Moreover, $\Sigma$ is a closed compact surface or unbounded uniformly regular surface.

\subsection{Simonenko's local principle}

\begin{itemize}
\item Let $\chi\in C_{0}^{\infty}(\mathbb{R}^{n})$ be such that $0\leq
\chi(x)\leq1,\chi(x)=1$ for $\left\vert x\right\vert \leq1,$ and $\chi(x)=0$
for $\left\vert x\right\vert \geq2,$ and $\chi_{R}(x)=\chi(\frac{x}{R})$,
$\psi_{R}(x)=1-\chi_{R}(x).$ The function $\psi_{R}$ is called the cut-off
function of infinity. \ Let $B_{\varepsilon}(x_{0})=\left\{  x\in
\mathbb{R}^{n}:\left\vert x-x_{0}\right\vert <\varepsilon\right\}  .$ We say
that $\varphi_{x_{0}}$ is a cut-off function of $B_{\varepsilon}(x_{0})$ if
$\varphi_{x_{0}}\in C_{0}^{\infty}(B_{\varepsilon}(x_{0})),0\leq\varphi
_{x_{0}}(x)\leq1$ and $\varphi_{x_{0}}(x)=1$ if $x\in B_{\varepsilon/2}%
(x_{0}).$

\item We denote by $\widetilde{\mathbb{R}^{3}}$ the compactification of
$\mathbb{R}^{3}$ obtained by the adjoint to every ray $l_{\omega}=\left\{
x\in\mathbb{R}^{3}:x=t\omega,t>0,\omega\in S^{2}\right\}  $ the infinitely
distant point $\vartheta_{\omega}$. The topology in $\widetilde{\mathbb{R}%
^{3}}$ is introduced such that $\widetilde{\mathbb{R}^{3}}$ becomes
homeomorphic to the unit closed ball $\bar{B}_{1}(0).$ $\ $The fundamental
system of neighborhoods of the point $\vartheta_{\omega_{0}}$ is formed by the
conical sets $U_{\omega_{0},R}=\digamma_{\omega_{0}}\times(R,+\infty)$ where
$R>0$ and $\digamma_{\omega_{0}}$ is a neighborhood of the point $\omega_{0}$
on the unit sphere $S^{2}.$ We define the cut-off function $\varphi_{x_{0}}$
of the infinitely distant point $x_{0}=\vartheta_{\omega_{9}}$ as
$\varphi_{x_{0}}=\varphi_{\omega_{0}}(\frac{x}{\left\vert x\right\vert }%
)\psi_{R}(x)$ where $\varphi_{\omega_{0}}(\omega)\in C_{0}^{\infty}%
(\digamma_{\omega_{0}})$ and $\varphi_{\omega_{0}}(\omega)=1$ in a
neighborhood $\digamma_{\omega_{0}}^{\prime}$ such that $\overline
{\digamma_{\omega_{0}}^{\prime}}\subset\digamma_{\omega_{0}}.$
\end{itemize}

\begin{definition}
\label{ed4.1} We say that the operator
\[
\mathbb{D}_{\boldsymbol{A},\Phi,a_{+},a_{-}}:H^{1}(\mathbb{R}^{3}\diagdown
S,\mathbb{C}^{4})\rightarrow L^{2}(\mathbb{R}^{3},\mathbb{C}^{4})
\]
\ is locally invertible at the point $x_{0}\in\widetilde{\mathbb{R}^{3}}$ if
there exists $\ $a neighborhood $U_{x_{0}}$ of the point $x_{0}$ and the
operators
\[
\mathcal{L}_{x_{0}},\mathcal{R}_{x_{0}}\in\mathcal{B(}L^{2}(\mathbb{R}%
^{3},\mathbb{C}^{4}),H^{1}(\mathbb{R}^{3}\diagdown\Sigma,\mathbb{C}^{4}))
\]
such that
\begin{equation}
\mathcal{L}_{x_{0}}\mathbb{D}_{\boldsymbol{A},\Phi,a_{+},a_{-}}\varphi_{x_{0}%
}I=\varphi_{x_{0}}I,\varphi_{x_{0}}\mathbb{D}_{\boldsymbol{A},\Phi,a_{+}%
,a_{-}}\mathcal{R}_{x_{0}}=\varphi_{x_{0}}I, \label{e4.3}%
\end{equation}
where $\varphi_{x_{0}}$ is the cut-off function of the point $x_{0}.$
\end{definition}

\begin{proposition}
\label{ep4.1}(\cite{Sim1967}, \cite{Rab1972}) The operator
\[
\mathbb{D}_{\boldsymbol{A},\Phi,a_{+},a_{-}}:H^{1}(\mathbb{R}^{3}%
\diagdown\Sigma,\mathbb{C}^{4})\rightarrow L^{2}(\mathbb{R}^{3},\mathbb{C}%
^{4})
\]
is Fredholm if and only if $\mathbb{D}_{\boldsymbol{A},\Phi,a_{+},a_{-}}$ is
locally invertible at every point $x\in\widetilde{\mathbb{R}^{3}}.$
\end{proposition}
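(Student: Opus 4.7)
My plan is to establish both implications via the classical Simonenko scheme, adapted to the spherical compactification $\widetilde{\mathbb{R}^{3}}$ (which is homeomorphic to the closed ball, hence compact). Necessity is essentially an exercise in compact perturbation, while sufficiency requires patching local regularizers into a global one through a finite partition of unity on $\widetilde{\mathbb{R}^{3}}$.

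For the necessity direction, I would start from a global two-sided regularizer $\mathcal{R}$ with $\mathcal{R}\mathbb{D}_{\boldsymbol{A},\Phi,a_{+},a_{-}} = I + T_{1}$ and $\mathbb{D}_{\boldsymbol{A},\Phi,a_{+},a_{-}}\mathcal{R} = I + T_{2}$, where $T_{1},T_{2}$ are compact. Multiplying by a cut-off $\varphi_{x_{0}}$ whose support is sharply concentrated around a finite $x_{0}$ (or placed deep in a conical neighborhood of an infinitely distant $x_{0}=\vartheta_{\omega_{0}}$), the operators $T_{i}\varphi_{x_{0}}I$ and $\varphi_{x_{0}}T_{i}$ are compact with norm below any prescribed threshold, by the standard vanishing property of compact operators under localization to a shrinking neighborhood of a single point of $\widetilde{\mathbb{R}^{3}}$. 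A Neumann-series inversion then yields $\mathcal{L}_{x_{0}}$ and $\mathcal{R}_{x_{0}}$ satisfying (\ref{e4.3}).

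For sufficiency, I would extract from the family $\{U_{x}\}_{x\in\widetilde{\mathbb{R}^{3}}}$ provided by Definition \ref{ed4.1} a finite subcover $\{U_{x_{k}}\}_{k=1}^{N}$, each carrying local regularizers $\mathcal{L}_{x_{k}},\mathcal{R}_{x_{k}}$. Choosing a partition of unity $\{\chi_{k}\}_{k=1}^{N}$ subordinate to the cover with each $\chi_{k}$ supported in the region where the defining cut-off $\varphi_{x_{k}}$ equals $1$, together with auxiliary cut-offs $\theta_{k}$ satisfying $\theta_{k}\chi_{k}=\chi_{k}$ and $\mathrm{supp}\,\theta_{k}\subset U_{x_{k}}$, I would set
\begin{equation*}
\mathcal{L} = \sum_{k=1}^{N}\theta_{k}\mathcal{L}_{x_{k}}\chi_{k}I, \qquad \mathcal{R} = \sum_{k=1}^{N}\chi_{k}\mathcal{R}_{x_{k}}\theta_{k}I.
\end{equation*}
A direct computation using (\ref{e4.3}) reduces $\mathcal{L}\mathbb{D}_{\boldsymbol{A},\Phi,a_{+},a_{-}} - I$ and $\mathbb{D}_{\boldsymbol{A},\Phi,a_{+},a_{-}}\mathcal{R} - I$ to finite sums of commutator remainders of the form $\theta_{k}\mathcal{L}_{x_{k}}[\mathbb{D}_{\boldsymbol{A},\Phi,a_{+},a_{-}},\chi_{k}]$ and $[\mathbb{D}_{\boldsymbol{A},\Phi,a_{+},a_{-}},\chi_{k}]\mathcal{R}_{x_{k}}\theta_{k}$, whose principal parts are multiplications by the zeroth-order matrix function $-i\boldsymbol{\alpha}\cdot\nabla\chi_{k}$.

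The main obstacle is verifying that these commutator remainders are compact on the relevant spaces. For $\chi_{k}$ supported around a finite point, $\nabla\chi_{k}$ has compact support in $\mathbb{R}^{3}$, and compactness of the multiplication $M_{\nabla\chi_{k}}:H^{1}(\mathbb{R}^{3}\diagdown\Sigma,\mathbb{C}^{4})\to L^{2}(\mathbb{R}^{3},\mathbb{C}^{4})$ follows from Rellich's embedding applied on each of $\Omega_{+},\Omega_{-}$ restricted to that compact set. The delicate case is the cut-off $\chi_{k}=\varphi_{\omega_{0}}(x/|x|)\psi_{R}(x)$ associated with an infinitely distant point $\vartheta_{\omega_{0}}$: then $\nabla\chi_{k}$ decomposes into a radial part supported in the finite annulus $R\leq|x|\leq 2R$ and an angular part decaying as $|x|^{-1}$ inside a truncated cone, so that $|\nabla\chi_{k}(x)|\to 0$ as $x$ approaches any point of $\widetilde{\mathbb{R}^{3}}\setminus\mathbb{R}^{3}$. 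Compactness then follows from the standard criterion that multiplication by such a function vanishing at infinity in the spherical sense is compact from $H^{1}(\mathbb{R}^{3}\diagdown\Sigma,\mathbb{C}^{4})$ into $L^{2}(\mathbb{R}^{3},\mathbb{C}^{4})$, proved by truncating outside a large ball, bounding the tail by the uniform $L^{\infty}$-smallness of the coefficient, and applying Rellich to the truncated piece. Summing the $N$ resulting compact remainders gives compact $T_{1},T_{2}$ and hence the Fredholm property of $\mathbb{D}_{\boldsymbol{A},\Phi,a_{+},a_{-}}$.
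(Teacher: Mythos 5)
The paper offers no internal proof of Proposition~\ref{ep4.1}: it is stated as a citation of Simonenko's local principle \cite{Sim1967} and its extension in \cite{Rab1972}. Your reconstruction follows the same classical scheme used in those references. The central mechanism --- extracting a finite subcover of the compact $\widetilde{\mathbb{R}^{3}}$, gluing the local regularizers through a subordinate partition of unity with auxiliary cut-offs $\theta_{k}$ enveloping the $\chi_{k}$, and absorbing the resulting commutator remainders $[\mathbb{D}_{\boldsymbol{A},\Phi,a_{+},a_{-}},\chi_{k}I]=i(\boldsymbol{\alpha}\cdot\nabla\chi_{k})I$ into a compact operator --- is exactly the argument they carry out in their abstract setting. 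Your observation that both the radial and angular parts of $\nabla\chi_{k}$ for a cut-off anchored at an infinitely distant point $\vartheta_{\omega_{0}}$ vanish under the spherical compactification is the correct and essential point for the tail compactness, and Rellich's embedding on bounded pieces of $\Omega_{\pm}$ is appropriate for a $C^{2}$-uniformly regular interface.

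One step in your necessity argument should be tightened. From a global regularizer $\mathcal{R}\mathbb{D}_{\boldsymbol{A},\Phi,a_{+},a_{-}}=I+T_{1}$ you cannot simply right-multiply by $\varphi_{x_{0}}I$ and apply a Neumann series, because $\mathcal{R}\mathbb{D}_{\boldsymbol{A},\Phi,a_{+},a_{-}}\varphi_{x_{0}}I=\varphi_{x_{0}}I+T_{1}\varphi_{x_{0}}I$ does not have the algebraic shape $(I+S)\varphi_{x_{0}}I$ that would let the inversion peel off the factor $\varphi_{x_{0}}I$ cleanly. The standard repair is to take a nested pair of cut-offs with $\tilde\varphi_{x_{0}}\varphi_{x_{0}}=\varphi_{x_{0}}$, write $T_{1}\varphi_{x_{0}}I=T_{1}\tilde\varphi_{x_{0}}\varphi_{x_{0}}I$, set $S=T_{1}\tilde\varphi_{x_{0}}I$ (small norm when both supports are concentrated near $x_{0}$ in $\widetilde{\mathbb{R}^{3}}$, by the compactness/strong-to-zero argument you invoke), and then $\mathcal{L}_{x_{0}}=(I+S)^{-1}\mathcal{R}$ satisfies $\mathcal{L}_{x_{0}}\mathbb{D}_{\boldsymbol{A},\Phi,a_{+},a_{-}}\varphi_{x_{0}}I=\varphi_{x_{0}}I$; the right regularizer $\mathcal{R}_{x_{0}}$ is built symmetrically. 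Aside from this routine technicality, your argument is sound and matches the cited references.
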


\subsection{Transmission on compact closed surfaces}

We consider the Fredholm property of the operator $\mathbb{D}_{\boldsymbol{A}%
,\Phi,a_{+},a_{-}}$ if $\Sigma$ is a compact closed $C^{2}-$surface.

We assume that the vector-valued potential $\boldsymbol{A}=(A_{1},A_{2}%
,A_{3})$ and the electric potentials $\Phi$ are such that
\begin{equation}
A_{j},\Phi\in C_{b}^{1}(\mathbb{R}^{3}), \label{4.0}%
\end{equation}%
\begin{equation}
\Gamma=\left(  \Gamma_{ij}\right)  _{i,j=1}^{4},\Gamma_{ij}\in C_{b}%
^{1}(\Sigma). \label{4.1}%
\end{equation}
Let $\ f\in C_{b}^{1}(\mathbb{R}^{3}),\mathbb{R}^{3}\ni g_{m}\rightarrow
\infty.$ We consider the functional sequence $\left\{  f(x+g_{m})\right\}  .$
This sequence is uniformly bounded and eqicontinuous on every compact set
$K\subset\mathbb{R}^{3}.$ By the Arzel\`{a}--Ascoli \ Theorem there exists a
subsequence $h_{m}$ of $g_{m}$ and the limit function $f^{h}\in C_{b}%
(\mathbb{R}^{3})$ such that
\begin{equation}
\lim_{h_{m}\rightarrow\infty}\sup_{x\in K}\left\vert f(x+h_{m})-f^{h}%
(x)\right\vert =0 \label{4.3}%
\end{equation}
for every compact set $K\subset\mathbb{R}^{3}.$

Since $A_{j},\Phi\in C_{b}^{1}(\mathbb{R}^{3}),$ for every sequence
$\mathbb{R}^{3}\ni g_{m}\rightarrow$ $\vartheta_{\omega}$ there exists a
subsequence $h_{m}$ and limit functions $A_{j}^{h},j=1,2,3,\Phi^{h}$
belonging~to $C_{b}(\mathbb{R}^{n})$ defined by formula (\ref{4.3}). The
operator $\mathfrak{D}_{\boldsymbol{A},\Phi}^{h}=\mathfrak{D}_{\boldsymbol{A}%
^{h},\Phi^{h}}$ is called the \textit{limit operator} of $\mathfrak{D}%
_{\boldsymbol{A},\Phi}.$ We denote by $Lim_{\vartheta_{\omega}}\mathfrak{D}%
_{\boldsymbol{A},\Phi}$ the set of all limit operators of $\mathfrak{D}%
_{\boldsymbol{A},\Phi}$ defined by sequences $h_{m}\rightarrow\vartheta
_{\omega},$ and \ we set
\begin{equation}
Lim\mathfrak{D}_{\boldsymbol{A},\Phi}=%
{\displaystyle\bigcup\limits_{\vartheta_{\omega}\in\mathbb{R}_{\infty}^{3}}}
Lim_{\vartheta_{\omega}}\mathfrak{D}_{\boldsymbol{A},\Phi}. \label{4.1'}%
\end{equation}

\begin{theorem}
\label{t4.1} Let condition (\ref{4.0}) hold, $\Sigma$ be a $C^{2}-$compact
closed surface dividing $\Omega_{+}$ and $\Omega_{-},$ and Lopatinsky-Shapiro
condition (\ref{2.19}) hold at every point $s\in\Sigma.$ Then $\mathbb{D}%
_{\boldsymbol{A},\Phi,a_{+},a_{-}}:H^{1}(\mathbb{R}^{3}\diagup\Sigma
,\mathbb{C}^{4})\rightarrow L^{2}\left(  \mathbb{R}^{3},\mathbb{C}^{4}\right)
$ is the Fredholm operator if and only if all limit operators $\mathfrak{D}%
_{\boldsymbol{A},\Phi}^{h}\in Lim\mathfrak{D}_{\boldsymbol{A},\Phi}$ are
invertible from $H^{1}(\mathbb{R},\mathbb{C}^{4})$ into $L^{2}\left(
\mathbb{R}^{3},\mathbb{C}^{4}\right)  .$
\end{theorem}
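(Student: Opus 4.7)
The plan is to apply Simonenko's local principle (Proposition \ref{ep4.1}) and verify local invertibility of $\mathbb{D}_{\boldsymbol{A},\Phi,a_+,a_-}$ at every point of the compactification $\widetilde{\mathbb{R}^3}$, which splits naturally into three cases: interior points in $\mathbb{R}^3\setminus\Sigma$, points on the compact surface $\Sigma$, and the infinitely distant points $\vartheta_\omega\in\mathbb{R}^3_\infty$. The first two cases are handled by purely local arguments already present in the paper, while the case at infinity is where the limit operators enter and where the heart of the argument lies.

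First, for a point $x_0\in\mathbb{R}^3\setminus\Sigma$, the transmission condition plays no role on a small ball $B_r(x_0)\subset\Omega_\pm$, and $\mathfrak{D}_{\boldsymbol{A},\Phi}$ is a uniformly elliptic operator of order one on $\mathbb{R}^3$. Standard elliptic parametrix construction (or the local estimate (\ref{2.21'})) yields local inverses $\mathcal{L}_{x_0},\mathcal{R}_{x_0}$ satisfying (\ref{e4.3}). For a point $s_0\in\Sigma$, the uniform Lopatinsky-Shapiro condition gives invertibility of the model problem $\mathbb{D}_{a_+(s_0),a_-(s_0)}^0$ in the half-space, and freezing coefficients in a sufficiently small normal neighborhood of $s_0$ together with the estimate (\ref{2.21''}) produces local regularizers on both sides. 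Since $\Sigma$ is compact, only finitely many such neighborhoods are needed to cover it.

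For an infinitely distant point $\vartheta_\omega$, the cut-off function $\varphi_{\vartheta_\omega}=\varphi_{\omega}(x/|x|)\psi_R(x)$ is supported far from the compact set $\Sigma$ when $R$ is large enough, so on $\operatorname{supp}\varphi_{\vartheta_\omega}$ the transmission boundary condition disappears and $\mathbb{D}_{\boldsymbol{A},\Phi,a_+,a_-}\varphi_{\vartheta_\omega}I = \mathfrak{D}_{\boldsymbol{A},\Phi}\varphi_{\vartheta_\omega}I$. Local invertibility at $\vartheta_\omega$ is therefore equivalent to the local invertibility of the regular Dirac operator $\mathfrak{D}_{\boldsymbol{A},\Phi}:H^1(\mathbb{R}^3,\mathbb{C}^4)\to L^2(\mathbb{R}^3,\mathbb{C}^4)$ at $\vartheta_\omega$. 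By the limit operators method, this local invertibility at $\vartheta_\omega$ is equivalent to the invertibility of every limit operator $\mathfrak{D}_{\boldsymbol{A},\Phi}^h\in Lim_{\vartheta_\omega}\mathfrak{D}_{\boldsymbol{A},\Phi}$: the ``if'' direction uses a shift argument, translating local inverses of the limit operators back to $\vartheta_\omega$ and controlling the error by the uniform convergence (\ref{4.3}), while the ``only if'' direction extracts a limit operator from a sequence of shifts of an approximate local inverse.

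Combining the three cases and gluing with a locally finite partition of unity $\sum_j\varphi_j=1$ on $\widetilde{\mathbb{R}^3}$ (finitely many functions near $\Sigma$ and in any compact set, plus conical cut-offs at infinity) yields global regularizers $L,R$ with $L\,\mathbb{D}_{\boldsymbol{A},\Phi,a_+,a_-}=I+T_1$, $\mathbb{D}_{\boldsymbol{A},\Phi,a_+,a_-}R=I+T_2$ and $T_1,T_2$ compact, which gives the Fredholm property. The converse (Fredholmness implies invertibility of all limit operators) follows from the general fact that a Fredholm operator is locally invertible at every point of $\widetilde{\mathbb{R}^3}$, so in particular at $\vartheta_\omega$, and the shift-and-take-limit argument then forces every $\mathfrak{D}_{\boldsymbol{A},\Phi}^h$ to be invertible. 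The main obstacle is the equivalence between local invertibility at $\vartheta_\omega$ and invertibility of \emph{all} limit operators along sequences tending to $\vartheta_\omega$; this relies on the techniques developed in the limit operators literature cited in the paper (\cite{RRS}, \cite{Ra2017}, \cite{Ra2}), but once invoked the remainder of the proof is essentially bookkeeping.
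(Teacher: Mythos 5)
Your proposal follows the paper's proof essentially step for step: apply Simonenko's local principle, obtain local invertibility away from $\Sigma$ from ellipticity of $\mathfrak{D}_{\boldsymbol{A},\Phi}$, obtain local invertibility on $\Sigma$ from the Lopatinsky--Shapiro condition, and observe that since $\Sigma$ is compact the operator coincides with the unperturbed $\mathfrak{D}_{\boldsymbol{A},\Phi}$ near every $\vartheta_{\omega}\in\mathbb{R}^{3}_{\infty}$, so local invertibility at infinity reduces to invertibility of all limit operators via \cite{RRS}, \cite{Ra1}. You flesh out the shift-and-take-limit argument and the partition-of-unity gluing in somewhat more detail than the paper, but the structure and the key reductions are identical.
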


\begin{proof}
Note that $\mathbb{D}_{\boldsymbol{A},\Phi,a_{+},a_{-}}$ locally coinsides
with $\mathfrak{D}_{\boldsymbol{A},\Phi}$ outside $\Sigma.$ Because
\ $\mathfrak{D}_{\boldsymbol{A},\Phi}$ is the elliptic operator on
$\mathbb{R}^{3}$, $\mathfrak{D}_{\boldsymbol{A},\Phi}$ is a locally invertible
operator at every point $x\in\mathbb{R}^{3}\diagdown\Sigma$ . According to the
Lopatinsky-Shapiro condition (\ref{2.19}) $\mathbb{D}_{\boldsymbol{A}%
,\Phi,a_{+},a_{-}}$ is a locally invertible operator at every point
$x\in\Sigma.$ Proposition \ref{ep4.1} yields that $\mathbb{D}_{\boldsymbol{A}%
,\Phi,a_{+},a_{-}}$ is a Fredholm operator if and only if $\mathbb{D}%
_{\boldsymbol{A},\Phi,a_{+},a_{-}}$ is locally invertible at every infinitely
distant point $\vartheta_{\omega}\in\mathbb{R}_{\infty}^{3}.$ The operator
$\mathbb{D}_{\boldsymbol{A},\Phi,a_{+},a_{-}}$ coincides with the operator
$\mathfrak{D}_{\boldsymbol{A},\Phi}$ near every point $\vartheta_{\omega}.$
Applying the results of book \cite{RRS} and paper \cite{Ra1} we obtain that
$\mathfrak{D}_{\boldsymbol{A},\Phi}$ is locally invertible at infinity if and
only if all limit operators $\mathfrak{D}_{\boldsymbol{A}^{h},\Phi^{h}}\in
Lim\mathfrak{D}_{\boldsymbol{A},\Phi}$ are invertible.
\end{proof}

\begin{corollary}
\label{co4.1} Let conditions of Theorem \ref{t4.1} hold. Then%
\begin{equation}
sp_{ess}\mathcal{D}_{\boldsymbol{A},\Phi,a_{+},a_{-}}=%
{\displaystyle\bigcup\limits_{\mathfrak{D}_{\boldsymbol{A}^{h},\Phi^{h}}\in
Lim\mathfrak{D}_{\boldsymbol{A},\Phi}}}
sp\mathcal{D}_{\boldsymbol{A}^{h},\Phi^{h}} \label{f4.1}%
\end{equation}
where $\mathcal{D}_{\boldsymbol{A}^{h},\Phi^{h}}$ is unbounded in
$L^{2}(\mathbb{R}^{3},\mathbb{C}^{4})$ generated by $\mathfrak{D}%
_{\boldsymbol{A}^{h},\Phi^{h}}.$
\end{corollary}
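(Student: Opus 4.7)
The plan is to translate the essential-spectrum assertion into a statement about Fredholmness of a shifted transmission operator, then apply Theorem \ref{t4.1}. By the definition recalled in Section~2, $\lambda\in sp_{ess}\mathcal{D}_{\boldsymbol{A},\Phi,a_{+},a_{-}}$ precisely when the unbounded operator $\mathcal{D}_{\boldsymbol{A},\Phi,a_{+},a_{-}}-\lambda I$ fails to be Fredholm in $L^{2}(\mathbb{R}^{3},\mathbb{C}^{4})$. The excerpt notes that an unbounded operator is Fredholm (its domain carrying the graph norm) if and only if it is Fredholm as a bounded operator from its domain into the ambient Hilbert space. Applied here, this reduces the problem to the Fredholmness of
\[
\mathbb{D}_{\boldsymbol{A},\Phi,a_{+},a_{-}}-\lambda I:H^{1}_{a_{+},a_{-}}(\mathbb{R}^{3}\setminus\Sigma,\mathbb{C}^{4})\to L^{2}(\mathbb{R}^{3},\mathbb{C}^{4}),
\]
which in turn is the transmission operator associated with the regular Dirac operator $\mathfrak{D}_{\boldsymbol{A},\Phi-\lambda}$ and the same boundary matrices $a_{\pm}$.

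Next I would check that the hypotheses of Theorem \ref{t4.1} persist after the shift. The shift only modifies the zero-order scalar term, so $\boldsymbol{A},\Phi-\lambda$ still lie in $C_{b}^{1}(\mathbb{R}^{3})$; moreover the Lopatinsky-Shapiro matrix $\mathcal{L}(s,\xi')$ is constructed solely from $\Gamma(s)$ and $\boldsymbol{\nu}(s)$, so condition (\ref{2.19}) is unchanged. Theorem \ref{t4.1} therefore applies and shows that $\mathbb{D}_{\boldsymbol{A},\Phi-\lambda,a_{+},a_{-}}$ is Fredholm if and only if every limit operator in $Lim\,\mathfrak{D}_{\boldsymbol{A},\Phi-\lambda}$ is invertible from $H^{1}(\mathbb{R}^{3},\mathbb{C}^{4})$ to $L^{2}(\mathbb{R}^{3},\mathbb{C}^{4})$.

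I would then identify $Lim\,\mathfrak{D}_{\boldsymbol{A},\Phi-\lambda}$ with $\{\mathfrak{D}_{\boldsymbol{A}^{h},\Phi^{h}}-\lambda I_{4}:\mathfrak{D}_{\boldsymbol{A}^{h},\Phi^{h}}\in Lim\,\mathfrak{D}_{\boldsymbol{A},\Phi}\}$. This is immediate from the definition (\ref{4.3}) of the limit construction: translation leaves the constant $\lambda$ fixed, so along any sequence $h_{m}\to\vartheta_{\omega}$ producing limits $\boldsymbol{A}^{h},\Phi^{h}$ one has $(\Phi-\lambda)^{h}=\Phi^{h}-\lambda$. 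Because the limit potentials remain real-valued and bounded, $\mathcal{D}_{\boldsymbol{A}^{h},\Phi^{h}}$ is self-adjoint on $H^{1}(\mathbb{R}^{3},\mathbb{C}^{4})$, and hence the bounded realization $\mathfrak{D}_{\boldsymbol{A}^{h},\Phi^{h}}-\lambda I_{4}:H^{1}(\mathbb{R}^{3},\mathbb{C}^{4})\to L^{2}(\mathbb{R}^{3},\mathbb{C}^{4})$ is invertible if and only if $\lambda\notin sp\,\mathcal{D}_{\boldsymbol{A}^{h},\Phi^{h}}$.

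Chaining these equivalences yields $\lambda\in sp_{ess}\mathcal{D}_{\boldsymbol{A},\Phi,a_{+},a_{-}}$ iff some limit operator $\mathfrak{D}_{\boldsymbol{A}^{h},\Phi^{h}}-\lambda I_{4}$ fails to be invertible, iff $\lambda\in sp\,\mathcal{D}_{\boldsymbol{A}^{h},\Phi^{h}}$ for some $\mathfrak{D}_{\boldsymbol{A}^{h},\Phi^{h}}\in Lim\,\mathfrak{D}_{\boldsymbol{A},\Phi}$, which is precisely (\ref{f4.1}). The only points requiring attention are the interchangeability of the spectral parameter with the limit-operator construction and the equivalence between Fredholmness of the bounded and unbounded realizations; both are routine given what is already in the paper, and I do not expect any genuine obstacle beyond recording these verifications carefully.
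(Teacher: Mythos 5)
Your proof is essentially correct and follows the route the paper implicitly intends: the paper states this corollary with no displayed argument, and the only natural derivation from Theorem~\ref{t4.1} is the one you give. The structure of your equivalence chain is tight and yields a genuine set equality, not just one inclusion: $\lambda\notin sp_{ess}\mathcal{D}_{\boldsymbol{A},\Phi,a_+,a_-}$ if and only if $\mathbb{D}_{\boldsymbol{A},\Phi-\lambda,a_+,a_-}$ is Fredholm, if and only if all limit operators $\mathfrak{D}_{\boldsymbol{A}^h,\Phi^h}-\lambda I_4$ are invertible, if and only if $\lambda$ lies outside every $sp\,\mathcal{D}_{\boldsymbol{A}^h,\Phi^h}$. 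You correctly notice and record the three small verifications that make this chain legitimate: that the graph norm on the domain is equivalent to the $H^1$-norm (Theorem~\ref{t2.1}, so unbounded Fredholmness is the same as bounded Fredholmness on $H^1_{a_+,a_-}$); that the hypotheses of Theorem~\ref{t4.1} are stable under replacing $\Phi$ by $\Phi-\lambda$ (the Lopatinsky--Shapiro matrix $\mathcal{L}(s,\xi')$ depends only on $\Gamma$ and $\boldsymbol{\nu}$, not on the potentials, and $\Phi-\lambda\in C_b^1$ for any $\lambda\in\mathbb{C}$); and that the limit-operator construction commutes with the scalar shift, $(\Phi-\lambda)^h=\Phi^h-\lambda$, so $Lim\,\mathfrak{D}_{\boldsymbol{A},\Phi-\lambda}=\{\mathfrak{D}_{\boldsymbol{A}^h,\Phi^h}-\lambda I_4\}$. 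One cosmetic point worth being explicit about: the limit operators live on the whole space $\mathbb{R}^3$ without any transmission condition because the limit construction is taken along sequences tending to infinity while $\Sigma$ stays compact; Theorem~\ref{t4.1} already accounts for this by asking for invertibility $H^1(\mathbb{R}^3,\mathbb{C}^4)\to L^2(\mathbb{R}^3,\mathbb{C}^4)$, which you correctly reinterpret as $\lambda\notin sp\,\mathcal{D}_{\boldsymbol{A}^h,\Phi^h}$ for the self-adjoint realization with domain $H^1(\mathbb{R}^3,\mathbb{C}^4)$. No genuine gaps.
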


\subsection{Slowly oscillating at infinity potentials}

We say that a function $f\in C_{b}^{1}(\mathbb{R}^{3})$ is slowly oscillating
at infinity if
\[
\lim_{x\rightarrow\infty}\partial_{x_{j}}f(x)=0,j=1,2,3
\]
We denote the class of slowly oscillating at infinity functions by
$SO^{1}(\mathbb{R}^{3}).$ Note that if $f\in SO_{\infty}(\mathbb{R}^{3})$ and
$\mathbb{R}^{3}\ni h_{m}\rightarrow\infty$ is such that there exists a limit
function $f^{h}$ in the sense of formula (\ref{4.3}). Then the function
$f^{h}$ is a constant.

We consider the operator $\mathbb{D}_{\boldsymbol{A},\Phi,a_{+},a_{-}}$ for
$A_{j},\Phi\in SO^{1}(\mathbb{R}^{3}),$ and we assume that $A_{j},\Phi$ are
real-valued functions, $\Sigma$ is the compact closed $C^{2}-$surface, and the
interaction matrix $\Gamma=(\Gamma_{ij})_{i,j=1}^{4}$ is such that
$\Gamma_{ij}\in C^{1}(\Sigma)$ and real-valued functions.

Then the limit operators are of the form:
\begin{equation}
\mathbb{D}_{\boldsymbol{A},\Phi,a_{+},a_{-}}^{h}=\mathfrak{D}_{A^{h},\Phi^{h}%
}=\alpha\cdot(\boldsymbol{D+A}^{h})\boldsymbol{+}\alpha_{0}m+\Phi^{h}I_{4}
\label{f4.3}%
\end{equation}
where $\boldsymbol{A}^{h}\in\mathbb{R}^{3},\Phi^{h}\in\mathbb{R}.$ Note that
\begin{equation}
sp\mathfrak{D}_{A^{h},\Phi^{h}}=(-\infty,\Phi^{h}-\left\vert m\right\vert ]%
{\displaystyle\bigcup}
[\Phi^{h}+\left\vert m\right\vert ,+\infty) \label{f4.4}%
\end{equation}

Then according formula (\ref{f4.1}) we obtain that
\begin{equation}
sp_{ess}\mathcal{D}_{\boldsymbol{A},\Phi,,a_{+},a_{-}}=(-\infty,M_{\Phi}%
^{\sup}-\left\vert m\right\vert ]%
{\displaystyle\bigcup}
[M_{\Phi}^{\inf}+\left\vert m\right\vert ,+\infty) \label{4.3'}%
\end{equation}
where
\[
M_{\Phi}^{\sup}=\limsup_{x\rightarrow\infty}\Phi(x),M_{\Phi}^{\inf}%
=\liminf_{x\rightarrow\infty}\Phi(x).
\]

Formula (\ref{4.3'}) yields that $sp_{ess}\mathcal{D}_{\boldsymbol{A}%
,\Phi,,a_{+},a_{-}}$ is independent from the slowly oscillating at infinity
magnetic potential $\boldsymbol{A.}$ \ Moreover,\ \ if $\left\vert
m\right\vert >0$ and $M_{\Phi}^{\sup}-M_{\Phi}^{\inf}<2\left\vert m\right\vert
,$ then $sp_{ess}\mathcal{D}_{\boldsymbol{A},\Phi,,a_{+},a_{-}}$ has a gap
$(M_{\Phi}^{\sup}-\left\vert m\right\vert ,M_{\Phi}^{\inf}+\left\vert
m\right\vert )$ which could contain the discrete spectrum of $\mathcal{D}%
_{\boldsymbol{A},\Phi,,a_{+},a_{-}}.$ In the opposite case $M_{\Phi}^{\sup
}-M_{\Phi}^{\inf}\geq2\left\vert m\right\vert $
\[
sp_{ess}\mathcal{D}_{\boldsymbol{A},\Phi,a_{+},a_{-}}=(-\infty,+\infty).
\]
Formula (\ref{4.3'}) yields that $sp_{ess}\mathcal{D}_{\boldsymbol{A}%
,\Phi,,a_{+},a_{-}}$ is independent from the slowly oscillating at infinity
magnetic potential $\boldsymbol{A.}$ \ Moreover,\ \ if $\left\vert
m\right\vert \geq0$ and $M_{\Phi}^{\sup}-M_{\Phi}^{\inf}<2\left\vert
m\right\vert ,$ then $sp_{ess}\mathcal{D}_{\boldsymbol{A},\Phi,,a_{+},a_{-}}$
has a gap $(M_{\Phi}^{\sup}-\left\vert m\right\vert ,M_{\Phi}^{\inf
}+\left\vert m\right\vert )$ in the essential spectrum which could contain the
discrete spectrum of $\mathcal{D}_{\boldsymbol{A},\Phi,,a_{+},a_{-}}.$ In the
opposite case: $M_{\Phi}^{\sup}-M_{\Phi}^{\inf}\geq2\left\vert m\right\vert $
\[
sp_{ess}\mathcal{D}_{\boldsymbol{A},\Phi,a_{+},a_{-}}=(-\infty,+\infty).
\]

\subsection{Transmission on unbounded surfaces with conic structure at
infinity}

Let $\Omega_{+}\subset\mathbb{R}^{3}$ be an connected open domain with a
$C^{2}-$boundary $\Sigma$, $\Omega_{-}=\mathbb{R}^{3}\diagdown\Omega
_{+},\Sigma$ \ has the conic structure at infinity, that is there exists $R>0$
such that if $x_{0}\in\Sigma$ and $x_{0}>R$ the ray $\left\{  x\in
\mathbb{R}^{3}:x=tx_{0},t>0\right\}  \subset\Sigma$. Let $\tilde{\Omega}_{\pm
},\tilde{\Sigma}$ be the compactifications of the sets $\Omega_{\pm},\Sigma$
in the topology of $\widetilde{\mathbb{R}^{3}}.$

We consider the operator $\mathbb{D}_{\boldsymbol{A},\Phi,a_{+},a_{-}}$ for
$A_{j},\Phi\in SO^{1}(\mathbb{R}^{3})$ and for the interaction matrix
$\Gamma=\left(  \Gamma_{ij}\right)  _{i,j=1}^{4}$ with $\Gamma_{ij}\in
C(\tilde{\Sigma})\cap C^{1}(\Sigma).$

We define the limit operators of the operator $\mathcal{D}_{\boldsymbol{A}%
,\Phi,a_{+},a_{-}}$ as follows:

\begin{itemize}
\item If $\vartheta_{\omega}\notin\Sigma_{\infty}=\tilde{\Sigma}%
\diagdown\Sigma,$ then the limit operators defined by the sequence
$h_{m}\rightarrow\vartheta_{\omega}$ are of the form (\ref{f4.3}) with the
spectrum given by formula (\ref{f4.4}).

\item Let $\vartheta_{\omega}\in\Sigma_{\infty}=\tilde{\Sigma}\diagdown
\Sigma,$ $\ $ $l_{\omega}^{R}=\left\{  x\in\mathbb{R}^{3}:x=t\omega
,t>R\right\}  $ and $T_{\vartheta_{\omega}}$ be the tangent plane to $\Sigma$
at the$\ $ ray $l_{\omega}^{R}$ and $\boldsymbol{\nu}(\omega)$ is the outgoing
normal vector to $\Omega_{+}$ at the points of the ray $l_{\omega}^{R}.$ We
denote by $\mathbb{R}_{\pm,\vartheta_{\omega}}^{3}$ the half-spaces in
$\mathbb{R}^{3}$ with common boundary $T_{\vartheta_{\omega}}.$ Then following
to the paper \cite{Ra2} the limit operators defined by the sequences
$h_{m}\rightarrow\vartheta_{\omega}$ are of the form
\begin{equation}
\mathbb{D}_{\boldsymbol{A},\Phi,a_{+}(\vartheta_{\omega}),a_{-}(\vartheta
_{\omega})}^{h}u(x)=\left\{
\begin{array}
[c]{c}%
\mathfrak{D}_{\boldsymbol{A}^{h},\Phi^{h}}\boldsymbol{u}(x),x\in\mathbb{R}%
^{3}\diagdown T_{\vartheta_{\omega}},\\
a_{+}\mathcal{(}\vartheta_{\omega})\boldsymbol{u}_{+}(s)+a_{-}\mathcal{(}%
\vartheta_{\omega})\boldsymbol{u}_{-}(s)=0,s\in T_{\vartheta_{\omega}}%
\end{array}
\right.  , \label{5.1}%
\end{equation}
where
\begin{align*}
a_{+}\mathcal{(}\vartheta_{\omega})  &  =\frac{1}{2}\Gamma(\vartheta_{\omega
})-i\alpha\cdot\boldsymbol{\nu}(\omega),a_{-}\mathcal{(}\vartheta_{\omega
})=\frac{1}{2}\Gamma(\vartheta_{\omega})+i\alpha\cdot\boldsymbol{\nu}%
(\omega),\\
\Gamma(\vartheta_{\omega})  &  =\lim_{\Sigma\ni s\rightarrow\vartheta_{\omega
}}\Gamma(s).
\end{align*}

\end{itemize}

\begin{theorem}
\label{te5.1} Let $\Sigma$ be a $C^{2}-$surface with conical structure at
infinity, $A_{j},\Phi\in SO^{1}(\mathbb{R}^{3})$ be real-valued functions and
the interaction matrix $\Gamma=\left(  \Gamma_{i,j}\right)  _{i,j=1}^{4}$ be
Hermitian with $\Gamma_{i,j}\in C\left(  \tilde{\Sigma}\right)  \cap C_{b}%
^{1}(\Sigma),$ and the uniform Lopatinsky-Shapiro condition be satisfied.
Then: (i) the operator $\mathcal{D}_{\boldsymbol{A},\Phi,a_{+},a_{-}}$ is
self-adjoint, (ii) $\mathcal{D}_{\boldsymbol{A},\Phi,a_{+},a_{-}}$ is a
Fredholm operator if and only if: (a) for every $\vartheta_{\omega}%
\notin\Sigma_{\infty}$ all limit operators $\mathfrak{D}_{\boldsymbol{A},\Phi
}^{h}$ defined by the sequences $h_{m}\rightarrow\vartheta_{\omega}$ are
invertible from $H^{1}(\mathbb{R}^{3},\mathbb{C}^{4})\rightarrow
L^{2}(\mathbb{R}^{3},\mathbb{C}^{4});$ (b) for every $\vartheta_{\omega}%
\in\Sigma_{\infty}$ all limit operators $\mathbb{D}_{\boldsymbol{A}^{h}%
,\Phi^{h},a_{+}\mathcal{(}\vartheta_{\omega}),a_{-}\mathcal{(}\vartheta
_{\omega})}$ $\in Lim_{\vartheta_{\omega}}\mathfrak{D}_{\boldsymbol{A}%
,\Phi,a_{+},a_{-}}$ are invertible from $H^{1}(\mathbb{R}^{3}\diagdown
\Sigma,\mathbb{C}^{4})\rightarrow L^{2}(\mathbb{R}^{3},\mathbb{C}^{4}).$
\end{theorem}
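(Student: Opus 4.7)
The plan is to derive both assertions from the machinery already built in the paper: Theorem \ref{t2.3} for self-adjointness, and the Simonenko--Rabinovich local principle (Proposition \ref{ep4.1}) together with the limit operators method for the Fredholm criterion. First, I would observe that a $C^{2}$-surface with conic structure at infinity is automatically $C^{2}$-uniformly regular in the sense introduced in Section 2: away from any compact set the surface is a union of rays, so the local $C^2$-charts straighten trivially and their derivatives up to order two are uniformly bounded. Together with $\Gamma\in C(\tilde\Sigma)\cap C_b^1(\Sigma)$ Hermitian, $\boldsymbol{A}\in L^\infty(\mathbb{R}^3,\mathbb{R}^3)$, $\Phi\in L^\infty(\mathbb{R}^3)$ real-valued, and the assumed uniform Lopatinsky--Shapiro condition for the parameter-dependent problem, all the hypotheses of Theorem \ref{t2.3} are met. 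Hence $\mathcal{D}_{\boldsymbol{A},\Phi,a_+,a_-}$ with domain $H^1_{a_+,a_-}(\mathbb{R}^3\diagdown\Sigma,\mathbb{C}^4)$ is self-adjoint, proving (i).

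For (ii), I would pass to the associated bounded transmission operator $\mathbb{D}_{\boldsymbol{A},\Phi,a_+,a_-}:H^1(\mathbb{R}^3\diagdown\Sigma,\mathbb{C}^4)\to L^2(\mathbb{R}^3,\mathbb{C}^4)$; by the equivalence recalled in Section 2 between Fredholmness of an unbounded operator and Fredholmness of the corresponding graph-norm bounded operator, $\mathcal{D}_{\boldsymbol{A},\Phi,a_+,a_-}$ is Fredholm if and only if $\mathbb{D}_{\boldsymbol{A},\Phi,a_+,a_-}$ is. By Proposition \ref{ep4.1} this is equivalent to local invertibility at every point of $\widetilde{\mathbb{R}^3}$. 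At every finite point $x_0\in\mathbb{R}^3\diagdown\Sigma$, local invertibility follows from the uniform ellipticity of $\mathfrak{D}_{\boldsymbol{A},\Phi}$ (a standard parametrix on a small ball), and at every finite point $s_0\in\Sigma$ it follows from the Lopatinsky--Shapiro condition via the local a priori estimates used in Theorems \ref{t2.1}--\ref{t2.2}. Hence Fredholmness reduces to local invertibility at every infinitely distant point $\vartheta_\omega\in\mathbb{R}^3_\infty$.

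The decisive step is then the identification of the local model at infinity, carried out exactly as in Theorem \ref{t4.1} but now splitting the infinitely distant points into two classes. If $\vartheta_\omega\notin\Sigma_\infty$, a conical neighbourhood $U_{\vartheta_\omega}$ of $\vartheta_\omega$ is disjoint from $\Sigma$, so $\mathbb{D}_{\boldsymbol{A},\Phi,a_+,a_-}$ coincides with $\mathfrak{D}_{\boldsymbol{A},\Phi}$ on $U_{\vartheta_\omega}$, and the limit operators method of \cite{RRS},\cite{Ra1} gives local invertibility at $\vartheta_\omega$ iff every limit operator $\mathfrak{D}_{\boldsymbol{A}^h,\Phi^h}\in Lim_{\vartheta_\omega}\mathfrak{D}_{\boldsymbol{A},\Phi}$ is invertible on $H^1(\mathbb{R}^3,\mathbb{C}^4)\to L^2(\mathbb{R}^3,\mathbb{C}^4)$. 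If $\vartheta_\omega\in\Sigma_\infty$, the conic structure of $\Sigma$ implies that for translates $h_m\to\vartheta_\omega$ along the ray $l_\omega^R\subset\Sigma$, the translated surfaces $\Sigma-h_m$ converge locally in $C^2$ to the tangent plane $T_{\vartheta_\omega}$ and the normal vectors $\boldsymbol{\nu}$ converge to $\boldsymbol{\nu}(\omega)$; the slow oscillation of $A_j,\Phi$ and continuity of $\Gamma$ at $\tilde\Sigma$ force the coefficients to stabilise, so that the limit operator has the form \eqref{5.1}. Applying the version of the limit operators method for transmission problems on piecewise-smooth unbounded surfaces from \cite{Ra2} converts local invertibility at $\vartheta_\omega\in\Sigma_\infty$ into invertibility of each such $\mathbb{D}_{\boldsymbol{A}^h,\Phi^h,a_+(\vartheta_\omega),a_-(\vartheta_\omega)}$ from $H^1(\mathbb{R}^3\diagdown T_{\vartheta_\omega},\mathbb{C}^4)$ to $L^2(\mathbb{R}^3,\mathbb{C}^4)$, giving the conditions (a)--(b).

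The main obstacle I expect is the careful justification of the last step: showing that limit operators at points $\vartheta_\omega\in\Sigma_\infty$ are genuinely the half-space transmission operators \eqref{5.1}, with the surface flattening to $T_{\vartheta_\omega}$ in the appropriate uniform sense so that the local principle applies. This requires combining the conic structure of $\Sigma$ at infinity (which controls the geometry of translates $\Sigma-h_m$) with the uniform Lopatinsky--Shapiro hypothesis (which is needed to produce uniformly bounded local inverses $L_{x_j}(\mu),R_{x_j}(\mu)$ in the spirit of Theorem \ref{t2.2}, now also as $x_j\to\vartheta_\omega$), and then invoking the transmission-problem version of the limit operators method from \cite{Ra2}. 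Once this geometric-analytic compatibility is in place, the equivalence in (ii) follows from Proposition \ref{ep4.1} and the standard passage between bounded and unbounded Fredholm operators.
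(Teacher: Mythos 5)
Your proposal is correct and follows essentially the same route as the paper's own proof: self-adjointness via Theorem \ref{t2.3}, reduction of Fredholmness to local invertibility via Proposition \ref{ep4.1} (ellipticity at interior points, Lopatinsky--Shapiro at surface points), and then characterization of local invertibility at the infinitely distant points by limit operators, split according to whether $\vartheta_\omega$ lies in $\Sigma_\infty$, citing \cite{RRS},\cite{Ra1},\cite{Ra2}. Your additional remarks — that a conic-at-infinity $C^2$ surface is automatically uniformly regular, and the geometric explanation of why translates $\Sigma-h_m$ flatten to the tangent plane $T_{\vartheta_\omega}$ so the limit operator takes the half-space form \eqref{5.1} — are precisely the details the paper leaves implicit and delegates to the cited references.
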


\begin{proof}
The self-adjointness of operator $\mathcal{D}_{\boldsymbol{A},\Phi,a_{+}%
,a_{-}}$ follows from Theorem \ref{te2.3}. The operator $\mathbb{D}%
_{\boldsymbol{A},\Phi,a_{+},a_{-}}$ is locally invertible at every point
$x\in\mathbb{R}^{3}$ because the Dirac operator $\mathfrak{D}_{A,\Phi}$ is
elliptic and the Lopatinsky-Shapiro condition holds at every point $x\in
\Sigma.$ Hence according Proposition \ref{ep4.1} $\mathbb{D}_{\boldsymbol{A}%
,\Phi,a_{+},a_{-}}$ is a Fredholm operator if and only if $\mathbb{D}%
_{\boldsymbol{A},\Phi,a_{+},a_{-}}$ is locally invertible at every infinitely
distant point $\vartheta_{\omega}\in\mathbb{R}_{\infty}^{3}$. Following to the
monograph \cite{RRS}, the papers \cite{Ra1}, \cite{Ra2} we obtain that
conditions (a) and (b) are necessary and sufficient for the local
invertibility of $\mathbb{D}_{\boldsymbol{A},\Phi,a_{+},a_{-}}$ at every
infinitely distant point.
\end{proof}

\begin{corollary}
\label{co5.1} Let the conditions of Theorem \ref{te5.1} hold. Then
\begin{equation}
sp_{ess}\mathcal{D}_{\boldsymbol{A},\Phi,a_{+},a_{-}}=%
{\displaystyle\bigcup\limits_{\mathcal{D}_{\boldsymbol{A},\Phi,a_{+},a_{-}%
}^{h}\in Lim\mathcal{D}_{\boldsymbol{A},\Phi,a_{+},a_{-}}}}
sp\mathcal{D}_{\boldsymbol{A},\Phi,a_{+},a_{-}}^{h} \label{e5.1}%
\end{equation}
where $\mathcal{D}_{\boldsymbol{A},\Phi,a_{+},a_{-}}^{h}$ are unbounded
operators associated with the above defined transmission operators
$\mathbb{D}_{\boldsymbol{A},\Phi,a_{+},a_{-}}^{h}.$
\end{corollary}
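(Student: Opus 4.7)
The plan is to deduce the corollary from Theorem \ref{te5.1} by the standard parameter trick: test the Fredholm criterion on the shifted operator $\mathcal{D}_{\boldsymbol{A},\Phi,a_{+},a_{-}}-\lambda I$ for each $\lambda\in\mathbb{C}$. First I would recall the equivalence from Section 2: the closed unbounded operator $\mathcal{D}_{\boldsymbol{A},\Phi,a_{+},a_{-}}-\lambda I$ is Fredholm in $L^{2}(\mathbb{R}^{3},\mathbb{C}^{4})$ if and only if, viewed as a bounded operator on its graph-norm domain, it is Fredholm. The a priori estimate (\ref{2.20'}) and Theorem \ref{t2.3} together identify that graph norm with the $H^{1}(\mathbb{R}^{3}\setminus\Sigma,\mathbb{C}^{4})$ norm restricted to the transmission subspace $H^{1}_{a_{+},a_{-}}(\mathbb{R}^{3}\setminus\Sigma,\mathbb{C}^{4})$, so Fredholmness reduces to Fredholmness of $\mathbb{D}_{\boldsymbol{A},\Phi,a_{+},a_{-}}-\lambda I:H^{1}(\mathbb{R}^{3}\setminus\Sigma,\mathbb{C}^{4})\to L^{2}(\mathbb{R}^{3},\mathbb{C}^{4})$. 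Thus $\lambda\in sp_{ess}\mathcal{D}_{\boldsymbol{A},\Phi,a_{+},a_{-}}$ iff this bounded transmission operator is not Fredholm.

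Next I would apply Theorem \ref{te5.1} to $\mathbb{D}_{\boldsymbol{A},\Phi,a_{+},a_{-}}-\lambda I$. Since $\lambda I$ is a bounded multiplication by a constant, it shifts neither the principal symbol of $\mathfrak{D}_{\boldsymbol{A},\Phi}$ nor the interaction matrices $a_{\pm}$, so the uniform Lopatinsky-Shapiro condition and the hypotheses on $\boldsymbol{A},\Phi,\Gamma$ are preserved. A routine verification from the definition of limit operators via shifts $h_{m}\to\vartheta_{\omega}$ and uniform convergence on compacta shows that
\[
Lim(\mathbb{D}_{\boldsymbol{A},\Phi,a_{+},a_{-}}-\lambda I)=\{\mathbb{D}^{h}_{\boldsymbol{A},\Phi,a_{+},a_{-}}-\lambda I:\mathbb{D}^{h}_{\boldsymbol{A},\Phi,a_{+},a_{-}}\in Lim\,\mathbb{D}_{\boldsymbol{A},\Phi,a_{+},a_{-}}\},
\]
because the constant $\lambda$ is unchanged under translation and limit. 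Consequently, Theorem \ref{te5.1}(ii) yields that $\lambda\notin sp_{ess}\mathcal{D}_{\boldsymbol{A},\Phi,a_{+},a_{-}}$ iff every $\mathbb{D}^{h}_{\boldsymbol{A},\Phi,a_{+},a_{-}}-\lambda I$ is invertible.

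To translate invertibility of the bounded transmission limit operators into membership in the spectrum of the unbounded operators $\mathcal{D}^{h}_{\boldsymbol{A},\Phi,a_{+},a_{-}}$, I would note that each limit operator is itself of the form covered by Theorem \ref{te5.1}(i): the limit potentials $\boldsymbol{A}^{h},\Phi^{h}$ remain real-valued $L^{\infty}$, the limit $\Gamma(\vartheta_{\omega})$ remains Hermitian by continuity on $\tilde{\Sigma}$, and the uniform Lopatinsky-Shapiro condition passes to the limit configuration (either the full space $\mathbb{R}^{3}$ with no interface, or the half-space split by the tangent plane $T_{\vartheta_{\omega}}$). Hence every $\mathcal{D}^{h}_{\boldsymbol{A},\Phi,a_{+},a_{-}}$ is self-adjoint in $L^{2}(\mathbb{R}^{3},\mathbb{C}^{4})$, so $\mathcal{D}^{h}_{\boldsymbol{A},\Phi,a_{+},a_{-}}-\lambda I$ is invertible iff $\lambda\notin sp\,\mathcal{D}^{h}_{\boldsymbol{A},\Phi,a_{+},a_{-}}$. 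Combining with the previous step gives (\ref{e5.1}). The only delicate point I expect is the bookkeeping for the boundary-type limits at $\vartheta_{\omega}\in\Sigma_{\infty}$: one has to verify that self-adjointness really transfers to the half-space transmission problem (\ref{5.1}), which amounts to checking that Theorem \ref{t2.3} applies on $\mathbb{R}^{3}\setminus T_{\vartheta_{\omega}}$ with the constant interaction data $\Gamma(\vartheta_{\omega})$ and the constant normal $\boldsymbol{\nu}(\omega)$; this is the main, though essentially mechanical, obstacle.
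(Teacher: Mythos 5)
Your argument is correct and follows the standard route by which Corollary 4.1 and Corollary 5.1 are derived from their respective Fredholm theorems in the limit-operator framework: shift by $\lambda I$, observe the shift commutes with taking limit operators and leaves the Lopatinsky--Shapiro data untouched, invoke the Fredholm criterion from Theorem \ref{te5.1} for the shifted bounded transmission operator (equivalent to the unbounded Fredholmness through the graph/$H^1$ norm equivalence from the a priori estimate), and close the loop by identifying invertibility of the self-adjoint limit operators $\mathcal{D}^h$ minus $\lambda$ with $\lambda\notin sp\,\mathcal{D}^h$. The paper gives no separate proof for the corollary, so there is nothing different to compare; your reasoning, including the observation that self-adjointness of the half-space limit problems on $\mathbb{R}^3\setminus T_{\vartheta_\omega}$ follows from Theorem \ref{t2.3} applied to that flat (hence trivially uniformly regular) surface with constant interaction data, is the intended one.
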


Note that
\begin{equation}
sp\mathcal{D}_{\boldsymbol{A},\Phi,a_{+},a_{-}}^{h}=sp\mathfrak{D}%
_{\boldsymbol{A}^{h},\Phi^{h}}=sp\mathfrak{D}_{\boldsymbol{0},\Phi^{h}%
}=\left(  -\infty,\Phi^{h}-\left\vert m\right\vert \right]
{\displaystyle\bigcup}
\left[  \Phi^{h}+\left\vert m\right\vert ,+\infty\right)  \label{5.3}%
\end{equation}
for the sequences $h_{m}\rightarrow\vartheta_{\omega}\in\mathbb{R}_{\infty
}^{3}\diagdown\Sigma_{\infty}.$

We consider now the spectrum of operators $\mathbb{D}_{\boldsymbol{A}%
^{h}\boldsymbol{,}\Phi^{h},a_{+}(\vartheta_{\omega}),a_{-}(\vartheta_{\omega
})}$ defined by the sequences $h_{m}\rightarrow\vartheta_{\omega}\in
\Sigma_{\infty}$. Note that
\[
sp\mathbb{D}_{\boldsymbol{A}^{h}\boldsymbol{,}\Phi^{h},a_{+}(\vartheta
_{\omega}),a_{-}(\vartheta_{\omega})}=sp\mathbb{D}_{\boldsymbol{0,}\Phi
^{h},a_{+}(\vartheta_{\omega}),a_{-}(\vartheta_{\omega})}%
\]
where
\[
\mathbb{D}_{\boldsymbol{0,}\Phi^{h},a_{+}(\vartheta_{\omega}),a_{-}%
(\vartheta_{\omega})}\boldsymbol{u}(x)=\left\{
\begin{array}
[c]{c}%
\left(  \alpha\cdot\boldsymbol{D}+\alpha_{0}m+\Phi^{h}\right)  \boldsymbol{u}%
(x),x\in\mathbb{R}^{3}\diagdown\mathbb{T}_{\vartheta_{\omega}},\\
a_{+}(\vartheta_{\omega})\boldsymbol{u}_{+}(s)+a_{-}(\vartheta_{\omega
})\boldsymbol{u}_{-}(s)=0,s\in\mathbb{T}_{\vartheta_{\omega}}.
\end{array}
\right.
\]
Without loss of generality we assume that $\mathbb{T}_{\vartheta_{\omega}%
}=\mathbb{R}_{x^{\prime}}^{2}=\left\{  x=(x^{\prime},x_{3}):x_{3}=0\right\}
.$ Then after the Fourier transform with respect to $x^{\prime}\in
\mathbb{R}^{2}$ we obtain the family of one-dimensional Dirac operators
depended on the parameter $\xi^{\prime}\in\mathbb{R}^{2}$
\begin{align}
&  \mathbb{\hat{D}}_{\boldsymbol{0,}\Phi^{h},a_{+}(\vartheta_{\omega}%
),a_{-}(\vartheta_{\omega})}\left(  \xi^{\prime}\right)  \boldsymbol{\hat{u}%
}(\xi^{\prime},z)\label{5.4}\\
&  =\left\{
\begin{array}
[c]{c}%
\left(  \alpha^{\prime}\cdot\xi^{\prime}+i\alpha_{3}\frac{d}{dz}+\alpha
_{0}m+\Phi^{h}\right)  \boldsymbol{\hat{u}}(\xi^{\prime},z),z\in
\mathbb{R}\diagdown\left\{  0\right\}  ,\xi^{\prime}\in\mathbb{R}^{2}\\
a_{+}(\vartheta_{\omega})\boldsymbol{\hat{u}}_{+}(\xi^{\prime},0)+a_{-}%
(\vartheta_{\omega})\boldsymbol{u}_{-}(\xi^{\prime},0)=0,\xi^{\prime}%
\in\mathbb{R}^{2}.
\end{array}
\right.  .\nonumber
\end{align}
The $1-D$ Dirac operator
\begin{equation}
\mathbb{\hat{D}}_{\boldsymbol{0,}\Phi^{h},a_{+}(\vartheta_{\omega}%
),a_{-}(\vartheta_{\omega})}\left(  \xi^{\prime}\right)  \boldsymbol{\varphi
}(z)=\left\{
\begin{array}
[c]{c}%
\left(  \alpha^{\prime}\cdot\xi^{\prime}+i\alpha_{3}\frac{d}{dz}+\alpha
_{0}m+\Phi^{h}\right)  \boldsymbol{\varphi}(z),z\in\mathbb{R\diagdown}\left\{
0\right\} \\
a_{+}(\vartheta_{\omega})\boldsymbol{\varphi}_{+}(0)+a_{-}(\vartheta_{\omega
})\boldsymbol{\varphi}_{-}(0)=0
\end{array}
\right.  \label{5.5}%
\end{equation}
has the essential spectrum
\[
sp_{ess}\mathbb{\hat{D}}_{\boldsymbol{0,}\Phi^{h},a_{+}(\vartheta_{\omega
}),a_{-}(\vartheta_{\omega})}\left(  \xi^{\prime}\right)  =\left(
-\infty,\Phi^{h}-\sqrt{\left\vert \xi^{\prime}\right\vert ^{2}+\left\vert
m\right\vert ^{2}}\right]
{\displaystyle\bigcup}
\left[  \Phi^{h}+\sqrt{\left\vert \xi^{\prime}\right\vert ^{2}+\left\vert
m\right\vert ^{2}},+\infty\right)
\]
and a possible finite discrete spectrum on the interval $\left(  \Phi
^{h}-\sqrt{\left\vert \xi^{\prime}\right\vert ^{2}+m^{2}},\Phi^{h}%
+\sqrt{\left\vert \xi^{\prime}\right\vert ^{2}+m^{2}}\right)  .$ Hence
\begin{align*}
sp\mathbb{D}_{\boldsymbol{0},\Phi^{h},a_{+}(\vartheta_{\omega}),a_{-}%
(\vartheta_{\omega})}==  &  \left(  -\infty,-\left\vert m\right\vert +\Phi
^{h}\right]
{\displaystyle\bigcup}
\left[  \left\vert m\right\vert +\Phi^{h},+\infty\right) \\
&
{\displaystyle\bigcup\limits_{\xi^{\prime}\in\mathbb{R}^{2}}}
sp_{dis}\mathbb{\hat{D}}_{\boldsymbol{0},\Phi^{h},a_{+}^{\vartheta_{\omega}%
},a_{-}^{\vartheta_{\omega}}}(\xi^{\prime})%
{\displaystyle\bigcap}
(-\left\vert m\right\vert +\Phi^{h},\left\vert m\right\vert +\Phi^{h}).
\end{align*}
Applying Corollary \ref{co5.1} we obtain the following result.\ \ \ \ \ \

\begin{theorem}
\label{te5.2} Let the conditions of Theorem \ref{te5.1} hold. Then
\begin{align*}
&  sp_{ess}\mathcal{D}_{\boldsymbol{A},\Phi,,a_{+},a_{-}}\\
&  =(-\infty,M_{\Phi}^{\sup}-\left\vert m\right\vert ]%
{\displaystyle\bigcup}
[M_{\Phi}^{\inf}+\left\vert m\right\vert ,+\infty)\\
&
{\displaystyle\bigcup\limits_{\xi^{\prime}\in\mathbb{R}^{2}}}
{\displaystyle\bigcup\limits_{\vartheta_{\omega}\in\Sigma_{\infty}}}
{\displaystyle\bigcup\limits_{h_{m}\rightarrow\vartheta_{\omega}}}
sp_{dis}\mathbb{\hat{D}}_{\boldsymbol{0},\Phi^{h},a_{+}^{\vartheta_{\omega}%
},a_{-}^{\vartheta_{\omega}}}(\xi^{\prime})%
{\displaystyle\bigcap}
(-\left\vert m\right\vert +\Phi^{h},\left\vert m\right\vert +\Phi^{h}).
\end{align*}

\end{theorem}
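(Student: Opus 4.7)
The plan is to combine Corollary \ref{co5.1} with a careful bookkeeping of the two families of limit operators and their spectra. First I would invoke Corollary \ref{co5.1} to reduce the computation of $sp_{ess}\mathcal{D}_{\boldsymbol{A},\Phi,a_{+},a_{-}}$ to the union of the spectra of all limit operators $\mathcal{D}^{h}_{\boldsymbol{A},\Phi,a_{+},a_{-}}$. Then I would split $Lim\,\mathcal{D}_{\boldsymbol{A},\Phi,a_{+},a_{-}}$ into two families according to whether the defining sequence $h_{m}$ converges to a point $\vartheta_{\omega}\in\mathbb{R}^{3}_{\infty}\setminus\Sigma_{\infty}$ (free limits) or to a point $\vartheta_{\omega}\in\Sigma_{\infty}$ (transmission limits at infinity).

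For the free limits, the slowly-oscillating hypothesis together with the remark that $f^{h}$ is constant when $f\in SO^{1}(\mathbb{R}^{3})$ forces $\boldsymbol{A}^{h}\in\mathbb{R}^{3}$ and $\Phi^{h}\in\mathbb{R}$ to be constants, so the limit operator is a shifted free Dirac operator with spectrum $(-\infty,\Phi^{h}-|m|]\cup[\Phi^{h}+|m|,+\infty)$ by formula (\ref{5.3}). Since $\Phi$ is continuous and bounded, the collection of attainable values $\{\Phi^{h}\}$ is precisely the closed interval $[M_{\Phi}^{\inf},M_{\Phi}^{\sup}]$, and taking the union of the half-line spectra over this interval yields exactly $(-\infty,M_{\Phi}^{\sup}-|m|]\cup[M_{\Phi}^{\inf}+|m|,+\infty)$, which is the first piece in the stated formula.

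For the transmission limits at $\vartheta_{\omega}\in\Sigma_{\infty}$, after a rigid rotation aligning the tangent plane $T_{\vartheta_{\omega}}$ with $\{x_{3}=0\}$, I would apply the Fourier transform in the tangential variable $x'\in\mathbb{R}^{2}$ to decompose the operator $\mathbb{D}_{\boldsymbol{0},\Phi^{h},a_{+}(\vartheta_{\omega}),a_{-}(\vartheta_{\omega})}$ as the direct integral of the one-dimensional transmission problems $\hat{\mathbb{D}}_{\boldsymbol{0},\Phi^{h},a_{+}^{\vartheta_{\omega}},a_{-}^{\vartheta_{\omega}}}(\xi')$ defined in (\ref{5.5}). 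Its spectrum is then the closure of the union over $\xi'\in\mathbb{R}^{2}$ of the fibre spectra. The essential part of each fibre is $(-\infty,\Phi^{h}-\sqrt{|\xi'|^{2}+m^{2}}]\cup[\Phi^{h}+\sqrt{|\xi'|^{2}+m^{2}},+\infty)$, whose union over $\xi'\in\mathbb{R}^{2}$ is $(-\infty,\Phi^{h}-|m|]\cup[\Phi^{h}+|m|,+\infty)$; but this half-line contribution is already contained in the free-limit piece of the essential spectrum obtained above. Consequently, the only genuinely new contribution coming from the transmission limits is $sp_{dis}\hat{\mathbb{D}}_{\boldsymbol{0},\Phi^{h},a_{+}^{\vartheta_{\omega}},a_{-}^{\vartheta_{\omega}}}(\xi')\cap(\Phi^{h}-|m|,\Phi^{h}+|m|)$, and taking the union over $\xi'\in\mathbb{R}^{2}$, over $\vartheta_{\omega}\in\Sigma_{\infty}$ and over all defining sequences $h_{m}\to\vartheta_{\omega}$ yields the third term in the claimed formula.

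The main obstacle I expect is the direct-integral identification in the transmission case, specifically showing that the spectrum of $\mathbb{D}_{\boldsymbol{0},\Phi^{h},a_{+}(\vartheta_{\omega}),a_{-}(\vartheta_{\omega})}$ is actually the union (and not merely its closure) of the fibre spectra, and that the discrete eigenvalues of the fibres depend continuously on $\xi'$ so that the resulting union behaves correctly with respect to closure inside the gap. A secondary subtlety is verifying that each rotated transmission problem is unitarily equivalent to the one in (\ref{5.5}), so that the rotation introduced to normalize $T_{\vartheta_{\omega}}$ does not alter the spectrum; this follows from the orthogonal invariance of the Dirac operator already used in Section 3 and from the fact that $a_{\pm}(\vartheta_{\omega})$ transform consistently under the rotation.
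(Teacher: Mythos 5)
Your proposal is correct and follows essentially the same route as the paper: invoke Corollary \ref{co5.1}, split the limit operators according to whether $\vartheta_{\omega}$ lies in $\Sigma_{\infty}$ or not, use the slowly-oscillating hypothesis to reduce the free limits to constant-potential Dirac operators, and Fourier-decompose the transmission limits into the one-dimensional fibres (\ref{5.5}). The caveats you raise about the direct-integral identification and the behaviour of the discrete fibre eigenvalues are legitimate subtleties, but the paper passes over them silently in exactly the same way, so they do not mark a divergence from the paper's argument.
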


\begin{remark}
The calculation of the essential spectrum of $\mathcal{D}_{\boldsymbol{A,}%
\Phi,a_{+},a_{-}}$ is simplified if $\ $%
\[
\lim_{s\rightarrow\infty}\Gamma(s)=0.
\]
In this case, the limit operators defined by the sequences $h_{m}%
\rightarrow\vartheta_{\omega}$ are of the form
\begin{equation}
\mathbb{D}_{\boldsymbol{0,}\Phi^{h},a_{+}^{\vartheta_{\omega}},a_{-}%
^{\vartheta_{\omega}}}\boldsymbol{u}(x)=\left\{
\begin{array}
[c]{c}%
\left(  \boldsymbol{\alpha}\cdot\boldsymbol{D}_{x}+\alpha_{0}m+\Phi
^{h}\right)  \boldsymbol{u}(x),x\in\mathbb{R}^{3}\diagdown\mathbb{T}%
_{\vartheta_{\omega}}\\
\boldsymbol{u}_{+}(s)=\boldsymbol{u}_{-}(s),s\in\mathbb{T}_{\vartheta_{\omega
}}.
\end{array}
\right.  . \label{e5.10}%
\end{equation}
It yields that $sp_{dis}\mathbb{D}_{\boldsymbol{0,}\Phi^{h},a_{+}%
^{\vartheta_{\omega}},a_{-}^{\vartheta_{\omega}}}=\varnothing$ and if
conditions of Theorem \ref{te5.1} impliy formula (\ref{e5.1}).
\end{remark}

\subsection{Electrostatic and Lorentz scalar $\delta-$shell interaction for
conic at infinity interaction surfaces}

Let $\Sigma$ be a conic at infinity $C^{2}-$surface. We consider the operator
$\mathcal{D}_{A,\Phi,a_{+},a_{-}},$ where $A_{j},\Phi\in SO^{1}(\mathbb{R}%
^{3})$ are real-valued functions. The singular potential
\begin{equation}
Q_{\sin}=\Gamma\delta_{\Sigma}=\left(  \eta I_{4}+\tau\alpha_{4}\right)
\delta_{\Sigma} \label{5.11}%
\end{equation}
where $\eta,\tau\in C_{b}^{1}(\Sigma)$ and $\lim_{\Sigma\ni s\rightarrow
\infty}f(s)=0$. Moreover, we assume that $\eta,\tau$ are real-valued functions
and the Lopatinsky-Shapiro condition
\begin{equation}
\eta^{2}(s)-\tau^{2}(s)\text{ }\neq4 \label{5.13}%
\end{equation}
be satisfied for every point $s\in\Sigma.$ Then the operator $\mathcal{D}%
_{A,\Phi,a_{+},a_{-}}$ is self-adjoint in $L^{2}(\mathbb{R}^{3},\mathbb{C}%
^{4})$ and $sp_{ess}\mathcal{D}_{A,\Phi,a_{+},a_{-}}$ is given by formula
(\ref{e5.1}).

\bigskip
\end{document}